\documentclass{article}

\usepackage{amsmath, amsfonts, amssymb, amsthm, bbm, url, stmaryrd}
\usepackage{appendix}
\usepackage{braket}

\usepackage{tikz,pgfplotstable}
\pgfplotsset{compat=1.18}

\usetikzlibrary{chains}
\usetikzlibrary{fit}
\usetikzlibrary{spy}

\usepackage{makecell}

\usepackage{epsfig}
\usetikzlibrary{shapes.symbols,patterns} 
\usepackage{pgfplots}

\usepackage{fullpage}%

\usepackage{xcolor}

\usepackage[colorlinks=true, linkcolor=blue, citecolor=blue, urlcolor=blue]{hyperref}
\usepackage[capitalise]{cleveref}

\DeclareMathOperator{\tr}{\textup{tr}}

\DeclareMathOperator{\Wvec}{\operatorname{vec}}
\DeclareMathOperator{\dist}{\operatorname{dist}}
\DeclareMathOperator{\rank}{\operatorname{rank}}

\DeclareMathOperator{\R}{\mathbb{R}}
\DeclareMathOperator{\C}{\mathbb{C}}

\DeclareMathOperator{\A}{\mathcal{A}}
\DeclareMathOperator{\cH}{\mathcal{H}}
\DeclareMathOperator{\B}{\mathcal{B}}
\DeclareMathOperator{\D}{\mathcal{D}}
\DeclareMathOperator{\F}{\mathcal{F}}

\DeclareMathOperator{\cV}{\mathcal{V}}

\newcommand{\footremember}[2]{%
    \footnote{#2}
    \newcounter{#1}
    \setcounter{#1}{\value{footnote}}
}
\newcommand{\footrecall}[1]{%
    \footnotemark[\value{#1}]%
}


\newcommand{\NP}{\texttt{NP}}
\renewcommand{\P}{\texttt{P}}
\newcommand{\QMA}{\texttt{QMA}}

\newcommand{\psucc}{\mathbb{P}_{\mathrm{succ}}}

\newcommand{\SEP}{\mathrm{SEP}}
\newcommand{\conv}{\mathrm{conv}}
\newcommand{\hsep}{\ell_{\SEP}}

\newcommand{\proj}[1]{|#1\rangle\!\langle#1|}
\newcommand{\ketbra}[2]{|#1\rangle\!\langle#2|}

\newtheorem{theorem}{Theorem}
\newtheorem{lemma}[theorem]{Lemma}

\newtheorem{proposition}[theorem]{Proposition}
\newtheorem{definition}[theorem]{Definition}

\newtheorem{remark}[theorem]{Remark}

\title{Computational aspects of the trace norm contraction coefficient}
\author{Idris Delsol \footremember{lyon}{Inria, ENS Lyon, UCBL, LIP, F-69342 Lyon Cedex 07, France} \and Omar Fawzi \footrecall{lyon} \and Jan Kochanowski \footremember{paris}{Inria, Télécom Paris - LTCI, Institut Polytechnique de Paris, 91120 Palaiseau, France} \and Akshay Ramachandran \footrecall{lyon}}

\date{}

\begin{document}

\maketitle

\begin{abstract}
    We show that approximating the trace norm contraction coefficient of a quantum channel within a constant factor is \NP-hard. Equivalently, this shows that determining the optimal success probability for encoding a bit in a quantum system undergoing noise is \NP-hard. This contrasts with the classical analogue of this problem that can clearly be solved efficiently. We also establish the \NP-hardness of deciding if the contraction coefficient is equal to $1$, i.e., the channel can perfectly preserve a bit.  As a consequence, deciding if a non-commutative graph has an independence number of at least $2$ is \NP-hard. In addition, we establish a converging hierarchy of semidefinite programming upper bounds on the contraction coefficient. 
\end{abstract}

\section{Introduction}\label{sec:intro}

A central question in information theory is how to optimally encode information for a given noise model. The objective of this paper is to study this question from an algorithmic point of view: given as input an explicit classical description of a quantum channel $\Phi$ that represents the noise and an integer $k$, what is the maximum success probability that can be achieved for sending $k$ messages using $\Phi$? When $\Phi$ is a classical channel, Barman and Fawzi~\cite{barman2017algorithmic} showed that this problem is \NP-hard to approximate within a factor greater than $1-e^{-1}$ and that efficient algorithms matching this hardness exist. When $\Phi$ is a quantum channel, much less is known about the complexity of the problem. When $\Phi$ is given implicitly as a quantum circuit on $n$ qubits, this problem is known to be \QMA-hard \cite{Beigi.2008,culf2023new}. Approximation algorithms have been proposed for some variants of the problem~\cite{Berta.2021,fawzi2019approximation,oufkir2024quantum} but the guarantees are in general significantly weaker than the case of classical channels. In fact, the complexity of finding the optimal encoding of a single bit ($k=2$), which has a trivial efficient algorithm in the classical case, remains unknown.

\subsection{Main results}

The main focus of this paper is to study the maximal success probability $\psucc(\Phi,2)$ for transmitting a bit over the quantum channel $\Phi$. It is simple to connect this expression to the trace norm contraction coefficient (see Proposition~\ref{prop:optimal_succ_contraction} for details) of the channel $\Phi$:
\[
    \psucc(\Phi,2) = \frac{1}{2} + \frac{1}{2} \eta_{\tr}(\Phi)
\]
where the contraction coefficient is defined as follows.
\begin{definition}\label{def:contraction_coefficient}
The \emph{trace norm contraction coefficient} of a channel $\Phi$ is 
\begin{equation}\label{eq:contraction_coefficient}
    \eta_{\tr}(\Phi) = \max_{\substack{\rho,\sigma \in \D \\ \rho \not= \sigma}} \frac{\|\Phi(\rho) - \Phi(\sigma)\|_1}{\|\rho - \sigma\|_1},
\end{equation}
where $\D$ is the set of density operators on the input space of $\Phi$. 
\end{definition}

As such, computing the trace norm contraction coefficient is equivalent to computing the optimal success probability and we focus in the following on the contraction coefficient. We start by establishing hardness results. The first result establishes hardness of approximation within a constant factor.
\begin{theorem}[General hardness]\label{thm:NP_contraction_coef}
Given a channel $\Phi$, its contraction coefficient $\eta_{\tr}(\Phi)$ is \NP-hard to approximate to a factor $1/\sqrt{2} + \varepsilon$ for any $\varepsilon > 0$. The hardness result holds even if $\Phi$ is promised to be unital.

Furthermore, even in the restricted setting where $\Phi$ is promised to be a quantum-classical (i.e., quantum input and classical output) and unital channel, the contraction coefficient is \NP-hard to approximate to a factor $\sqrt{2/\pi} + \varepsilon$ for any $\varepsilon > 0$.
\end{theorem}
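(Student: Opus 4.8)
The plan is to reduce from a known NP-hard constraint satisfaction / quadratic optimization problem, exploiting the connection $\psucc(\Phi,2) = \tfrac12 + \tfrac12\eta_{\tr}(\Phi)$. I would first rewrite $\eta_{\tr}(\Phi)$ using the fact that the maximum in \eqref{eq:contraction_coefficient} is attained at a pair of orthogonal pure states (a standard argument: the trace distance is jointly convex, the feasible region is a convex polytope-like set, and the numerator is a norm, so extreme points of the difference $\rho-\sigma$ are Hermitian traceless rank-two operators of the form $\tfrac12(\proj{\psi} - \proj{\phi})$ with $\braket{\psi|\phi}=0$; in fact one may further reduce to $\rho-\sigma$ proportional to a difference of two orthogonal states spanning a real two-dimensional subspace). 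This turns the quantity into $\eta_{\tr}(\Phi) = \max \tfrac12\|\Phi(\proj{\psi}) - \Phi(\proj{\phi})\|_1$ over orthonormal pairs. Then, using the variational formula $\|X\|_1 = \max_{-\1 \le M \le \1} \tr(MX)$, the contraction coefficient becomes a bilinear-type optimization over a Bloch-sphere direction and a measurement effect $M$, which is exactly the kind of object for which PCP-based inapproximability is available.

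Next I would engineer a channel $\Phi$ from a hard instance. The natural choice is to encode a graph or a quadratic form: given an instance of (say) \texttt{MaxCut} or a $\pm 1$ quadratic program, build a quantum-classical channel whose output, on a qubit pointing in Bloch direction $\vec r$, produces a classical label distributed so that the optimal distinguishing measurement over antipodal inputs $\pm \vec r$ computes the objective of the instance. Concretely, one can take $\Phi(\rho) = \sum_i \tr(E_i \rho)\, \proj{i}$ for carefully chosen POVM elements $E_i$ built from the constraint structure; then $\|\Phi(\proj{\psi}) - \Phi(\proj{\phi})\|_1 = \sum_i |\tr(E_i(\proj{\psi}-\proj{\phi}))|$, which for orthogonal real qubit states $\proj{\psi}-\proj{\phi} = \vec r \cdot \vec\sigma$ is $\sum_i |\vec e_i \cdot \vec r|$ where $E_i = \tfrac{c_i}{2}(\1 + \vec e_i\cdot\vec\sigma)$. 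Choosing the $\vec e_i$ to be the constraint vectors of a Grothendieck-type or \texttt{MaxCut}-type instance makes this an $\ell_1$-over-sphere optimization; the hardness factor $\sqrt{2/\pi}$ strongly suggests that the reduction goes through the (real) Grothendieck constant / the analysis of the Goemans–Williamson rounding, i.e. the gap between the integral optimum and the SDP/quantum optimum is governed by the $\mathbb{E}|\langle g, \vec r\rangle|$ versus $\|\vec e\|$ comparison, which produces exactly the $\sqrt{2/\pi}$ constant. For the general (not q-c) case, the worse factor $1/\sqrt2$ presumably comes from allowing a fully quantum output, reducing from a problem like \texttt{Max-$2$-Lin} or from the $2$-out-of-$4$ / projection games where the $1/\sqrt2$ threshold is classical.

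Concretely, the steps in order: (1) reduce the contraction coefficient to $\max \tfrac12\|\Phi(\proj{\psi})-\Phi(\proj{\phi})\|_1$ over orthonormal pure pairs, and further to a real Bloch-direction optimization when $\Phi$ is q-c; (2) fix a source of NP-hardness with a known constant-factor inapproximability gap (a \texttt{MaxCut}-like or \texttt{Label-Cover}-like problem, or more likely a $\{\pm1\}$-quadratic program à la Grothendieck); (3) define the channel $\Phi$ from the instance so that the "integral" value of the instance equals $\eta_{\tr}(\Phi)$ up to normalization, with the "YES" case giving $\eta_{\tr}(\Phi)$ near $1$ and the "NO" case giving $\eta_{\tr}(\Phi) \le 1/\sqrt2$ (resp.\ $\le \sqrt{2/\pi}$); (4) verify completeness (a good assignment yields orthogonal states nearly perfectly distinguished) and soundness (any pair of density operators yields a distinguishing advantage no larger than the rounded-down instance value — this is where a rounding/averaging argument over the sphere is invoked); (5) conclude by the standard amplification/gap-preservation. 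The main obstacle I expect is step (4), the soundness direction: one must show that no clever \emph{entangled} or \emph{mixed} input-pair and no clever POVM on the output can beat the classical value by more than the target constant. This is precisely where a hypercontractive/Gaussian-rounding estimate (the $\sqrt{2/\pi}$ and $1/\sqrt2$ factors) must be proved tight, and getting the soundness constant to match the completeness gap — rather than some weaker constant — is the delicate technical heart of the argument.
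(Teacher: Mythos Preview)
Your proposal has the right scent --- Grothendieck-type problems are indeed the source of hardness, and the constants $1/\sqrt{2}$ and $\sqrt{2/\pi}$ do come from there --- but the concrete construction you sketch does not work, and your assessment of where the difficulty lies is inverted.

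\textbf{The qubit construction is too small.} You propose a quantum--classical channel with a \emph{qubit} input, so that $\proj{\psi}-\proj{\phi}=\vec r\cdot\vec\sigma$ with $\vec r\in S^2$, and the contraction coefficient becomes $\max_{\vec r\in S^2}\tfrac12\sum_i|\vec e_i\cdot\vec r|$. But this is an optimization over a two-dimensional sphere and is solvable in polynomial time for any collection $\{\vec e_i\}$; it cannot encode an \NP-hard problem. More generally, for an $n$-dimensional input the difference $\proj{\psi}-\proj{\phi}$ of two orthogonal pure states is \emph{not} parametrized by a single unit vector: it ranges over all rank-two traceless Hermitians with eigenvalues $\pm1$, which is a much larger (and non-linear) manifold. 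So the clean ``$\ell_1$ of a linear image of a unit vector'' picture does not survive the passage to higher input dimension in the way you suggest.

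\textbf{Where the work actually is.} The paper bypasses any new soundness/rounding analysis entirely. It imports, as black boxes, the inapproximability of the \emph{Little Grothendieck problems}: computing $\|\F\|_{\mathbb{R}^n\to \ell_1^d(\mathbb{R})}$ is \NP-hard to approximate within $\sqrt{2/\pi}+\varepsilon$, and computing $\|\F\|_{\mathbb{R}^n\to S_1^{H,d}}$ within $1/\sqrt{2}+\varepsilon$ (Bri\"et--Regev--Saket). The reduction then only needs to exhibit, for each $\F$, a channel $\Phi$ with $\eta_{\tr}(\Phi)$ equal to a known multiple of $\|\F\|$. The key identity used is the dual reformulation
\[
\eta_{\tr}(\Phi)=\tfrac12\max_{X^*=X,\ \|X\|_\infty\le 1}\bigl(\lambda_{\max}-\lambda_{\min}\bigr)\bigl(\Phi^*(X)\bigr),
\]
and the construction is to make $\Phi^*(X)$ (up to a trace shift) of the form $b\,e_{n+1}^T+e_{n+1}\,b^T$ with $b=(\langle F_i,X\rangle)_i$, for which $\lambda_{\max}-\lambda_{\min}=2\|b\|_{\ell_2}=2\|\F^*(X)\|_{\ell_2}$. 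Duality then gives $\eta_{\tr}(\Phi)\propto\|\F^*\|=\|\F\|$. A small symmetrization $F_i\mapsto F_i\oplus(-F_i)$ kills $\F^*(I)$ so that adding $\tfrac{\tr(X)}{2d}I$ makes $\Phi^*$ unital, and a scalar $\alpha>0$ makes it completely positive. For the quantum--classical statement one simply takes the $F_i$ diagonal.

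So your step (4) is not the ``delicate technical heart'': once you reduce from the right problem, there is no gap analysis to perform at all --- the constants are inherited verbatim. The genuine missing idea in your plan is the construction that turns the $(\lambda_{\max}-\lambda_{\min})$ of the adjoint into an $\ell_2$ norm of a linear functional of $X$, together with the shift-and-scale that makes this the adjoint of an honest channel. Your YES/NO gap framing (``$\eta_{\tr}$ near $1$'' vs.\ ``$\le 1/\sqrt{2}$'') is also not what is proved: the hardness is for \emph{multiplicative} approximation of $\eta_{\tr}$, not for a promise gap anchored at $1$.
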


This result is proven in Section~\ref{sec:NP_general_case} and is based on a reduction from the Little Grothendieck Problem. We remark that for some special classes of channels, the contraction coefficient can be approximated efficiently. For example, if the channel $\Phi$ is composed $t$ times with $t \to \infty$, the optimal success probability and hence the contraction coefficient can be efficiently computed~\cite{Fawzi.2025,Singh.2024}. 

Our second result shows that the question of deciding whether $\eta_{\tr}(\Phi) = 1$ or $\eta_{\tr}(\Phi) < 1$ is also \NP-hard. 
\begin{theorem}[Hardness for the complete case]\label{thm:NP_complete_case}
Given a quantum channel $\Phi$, it is \NP-hard to distinguish between $\eta_{\tr}(\Phi) = 1$ and $\eta_{\tr}(\Phi) \leq 1 - \Omega(\frac{1}{n^3})$, where $n$ is the dimension of the input Hilbert space. 
\end{theorem}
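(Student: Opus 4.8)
The plan is to reduce from a known NP-hard problem whose structure naturally encodes "there exist two inputs that are perfectly distinguished after the channel acts." A natural candidate is a graph problem: given a graph $G$, deciding whether it has an independent set of size $2$ is trivial, but the abstract quantum analogue—deciding whether a non-commutative graph (operator system) has independence number at least $2$—is precisely what the abstract says follows as a corollary, so the dependency is the other way. Instead I would reduce directly from a constraint-satisfaction or clique-type NP-hard problem. The cleanest route: build $\Phi$ from a CSP instance so that $\eta_{\tr}(\Phi)=1$ iff the CSP is satisfiable, with a polynomial gap in the unsatisfiable case. Concretely, I would take a channel of the form $\Phi(\rho) = \sum_i K_i \rho K_i^\dagger$ whose Kraus operators are built from the constraints; the key identity to exploit is that $\eta_{\tr}(\Phi)=1$ exactly when there exist orthogonal pure states $\ket{\psi},\ket{\phi}$ (it suffices to consider pure states differing on a two-dimensional subspace, i.e. $\rho-\sigma$ rank two, traceless) whose images $\Phi(\proj{\psi}), \Phi(\proj{\phi})$ have orthogonal supports. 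So the design goal is: satisfying assignments $\leftrightarrow$ pairs of inputs mapped to states with disjoint support.

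The first step is to set up this characterization rigorously: $\eta_{\tr}(\Phi) = 1$ iff $\Phi$ fails to be "strictly contractive," which by a compactness argument and the structure of the trace norm is equivalent to the existence of a traceless Hermitian $X = \rho-\sigma$ with $\|\Phi(X)\|_1 = \|X\|_1$; decomposing $X = X_+ - X_-$ into positive and negative parts, this forces $\Phi(X_+)$ and $\Phi(X_-)$ to have orthogonal supports and $\tr X_+ = \tr X_-$. Restricting to the extreme case, one reduces to rank-one $X_\pm$, i.e. orthogonal pure states $\ket a, \ket b$ with $\Phi(\proj a) \perp \Phi(\proj b)$. The second step is the gadget construction: encode a Boolean formula (say a 3-colourability or exact-cover instance) into Kraus operators so that a "consistent global assignment" yields such a pair of pure states, while any inconsistency produces nonzero overlap between the images. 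The third step is the soundness/quantitative part: show that if the instance is unsatisfiable, then for *every* pair of inputs the images overlap by at least $\Omega(1/\mathrm{poly}(n))$ in fidelity, hence $\eta_{\tr}(\Phi) \le 1 - \Omega(1/\mathrm{poly}(n))$. This typically comes from a "robustness" argument: a global assignment that violates at least one constraint contributes a fixed positive cross term, and one must track how this survives the maximization over all input states, not just computational-basis-aligned ones—this is where a careful estimate using the operator structure (e.g. a spectral gap of an associated frustration Hamiltonian, or an SDP-duality certificate) is needed.

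I expect the main obstacle to be the soundness step: it is easy to make satisfiability imply $\eta_{\tr} = 1$, but ruling out "accidental" near-orthogonality of channel outputs for *arbitrary* (possibly highly entangled or superposed) input states, with only a $1/\mathrm{poly}(n)$ loss, is delicate. Quantum inputs can "spread" probability mass across many basis states and partially dodge constraints, so the natural classical counting bound degrades. The fix I would pursue is to choose the gadget so that the relevant obstruction is captured by a Hermitian operator $H \succeq 0$ (a sum of local constraint-violation projectors) with the property $\eta_{\tr}(\Phi) \le 1 - c\,\lambda_{\min}^{\perp}(H)$ where $\lambda_{\min}^\perp$ is the smallest eigenvalue on the relevant subspace, and then invoke a classical-to-quantum gap amplification (e.g. taking tensor powers or using an expander-based parallel repetition of the constraint structure) to guarantee $\lambda_{\min}^\perp(H) \ge \Omega(1/\mathrm{poly}(n))$ whenever the instance is unsatisfiable. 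An alternative, possibly cleaner, is to route through the non-commutative graph formulation: design an operator system $S$ (equivalently the Kraus span of $\Phi$) directly from the NP-hard instance so that $\alpha(S) \ge 2$ iff satisfiable, and separately prove the dictionary $\eta_{\tr}(\Phi) = 1 \iff \alpha(S_\Phi) \ge 2$ together with a quantitative version of the backward direction.
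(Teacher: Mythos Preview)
Your high-level architecture is right, and in particular your ``alternative'' route---design an operator system $S$ from the \NP-hard instance so that $\alpha(S)\ge 2$ iff satisfiable, then use the dictionary $\eta_{\tr}(\Phi)=1 \Leftrightarrow G_\Phi^\perp$ contains a rank-one element---is exactly what the paper does. But you are missing the key technical bridge that makes the soundness step tractable, and without it your plan is not yet a proof.

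The missing idea is this: via the $\Wvec$ isomorphism, ``$G_\Phi^\perp$ contains a rank-one matrix $uv^*$'' is the same as ``the kernel of a projector $\Pi$ on $\cH\otimes\cH$ contains a \emph{product} vector $u\otimes\bar v$'', i.e.\ $\hsep(\Pi)=0$. So the question ``is $\eta_{\tr}(\Phi)=1$?'' becomes ``does a given bipartite projector have zero separable value?'', which is exactly the acceptance condition for a $\QMA_{\log}(2)$ verifier with perfect completeness. The paper then imports the Blier--Tapp / Le~Gall result that Graph 2-CSP is in $\QMA_{\log}(2,1,1-\Omega(1/n))$: satisfiable instances give $\hsep(\Pi)=0$, unsatisfiable ones give $\hsep(\Pi)=\Omega(1/\mathrm{poly}(n))$. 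This completely sidesteps your main worry---that arbitrary superposed inputs might accidentally achieve near-orthogonal outputs---because the $\QMA_{\log}(2)$ soundness theorem already handles arbitrary product quantum witnesses. Your proposed fix (build a frustration Hamiltonian and prove a spectral gap, or do gap amplification) would amount to re-proving $\QMA_{\log}(2)$ soundness from scratch; it is not wrong, but it is the hard part, and you have not indicated how to carry it out.

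Two further technical points you would still need even on the paper's route: (i) the subspace $S=\Wvec^{-1}(\ker\Pi)$ need not have $S^\perp$ be an operator system, so one must pass to an auxiliary $\widehat S$ (block off-diagonal embedding) whose orthogonal complement is self-adjoint and unital, and check that this preserves the ``contains a rank-one element'' property with controlled distortion; (ii) the quantitative step from $\dist(\text{rank-one},G_\Phi^\perp)\ge\Omega(1/\mathrm{poly}(n))$ to $\eta_{\tr}(\Phi)\le 1-\Omega(1/\mathrm{poly}(n))$ is done via Fuchs--van de Graaf and a direct fidelity lower bound $F(\Phi(uu^*),\Phi(vv^*))\ge |\langle K_j^*K_j, uv^*\rangle|^2$, not via a spectral-gap argument.
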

In terms of channel coding, this corresponds to deciding the existence of a perfect, i.e., zero-error, encoding of a bit. The existence of a zero-error code of size $k$ for a quantum channel $\Phi$ is equivalent to the independence number of the corresponding non-commutative confusability graph of $\Phi$ being greater or equal to $k$~\cite{Duan.2011}. A non-commutative graph is a subspace $S$ of matrices containing the identity and satisfying $S = S^*$, which is more commonly also known as an operator system.\footnote{The way to see an undirected graph $G = (V,E)$ in this setting is by choosing $S = \operatorname{Span}\{ E_{i,j} : i, j \in V, i = j \text{ or } (i,j) \in E \}$, where $E_{i,j}$ is the matrix containing $1$ in position $(i,j)$ and zero everywhere else.} Theorem~\ref{thm:NP_complete_case} shows that deciding whether a non-commutative graph has independence number at least $2$ is \NP-hard. Note that in the model where the channel is described implicitly by a circuit, the hardness of this problem was studied in depth in~\cite{culf2023new}.

A consequence of this work is that known efficiently computable bounds on the contraction coefficient such as the quantum Doeblin coefficient~\cite{Hirche2024.2,Hirche2024.1} cannot be always tight, assuming $\P \neq \NP$. In fact, we give in~\cref{prop:counter_example_Hirche} an explicit example of channel $\Phi$ where the Doeblin coefficient fails to capture $\eta_{\tr}(\Phi) < 1$.

Our final result is applying the methodology of~\cite{Berta.2021} for constrained bilinear problems to obtain a converging hierarchy of semidefinite programming upper bounds on the contraction coefficient (Theorem~\ref{thm:sdp_POVM}). We illustrate this hierarchy by applying it to some simple channels (\cref{fig:amplitude_damping}, \cref{fig:depolarizing}). On all the examples we tried, the first level of our hierarchy gives bounds that are tighter than the quantum Doeblin coefficient introduced in \cite{Hirche2024.1}. We note that \cite[Corollary 3]{Hirche2024.2} considered a semidefinite programming hierarchy for a variant of the Doeblin coefficient called the induced Doeblin coefficient. This hierarchy is different from ours as it cannot converge to the contraction coefficient. In fact, as shown in~\cref{prop:counter_example_Hirche}, the induced Doeblin coefficient fails to capture $\eta_{\tr}(\Phi) < 1$.

\section{Notation and basic statements}\label{sec:notations}

In this paper, all Hilbert spaces $\cH$ are finite dimensional. We write $\B(\cH)$ for the set of linear operators from $\cH$ to itself. The state of a quantum system modeled by $\cH$ is described by a density operator, i.e., a positive and unit trace operator in $\B(\cH)$. We denote by $\D(\cH)$ the set of such density operators. Physically possible transformations are mathematically represented by completely positive trace-preserving maps, also called quantum channels. For every quantum channel $\Phi : \B(\cH) \to \B(\cH')$, we can find a set of operators $\{K_i:K_i \in \B(\cH',\cH)\}$ satisfying $\sum_i K_i^* K_i = I$ such that for all $x \in \B(\cH)$:
\begin{equation}\label{eq:Kraus_op}
\begin{aligned}
\Phi&(x) = \sum_iK_ixK^*_i.
\end{aligned}
\end{equation}
These operators are called Kraus operators of the channel $\Phi$ and there is a Kraus representation of $\Phi$ using at most $\dim(\cH)\dim(\cH')$ such operators. The Choi state of $\Phi$ is denoted as
\begin{align}
\label{eq:choi-state}
J(\Phi) = ( \text{Id}_{\cH} \otimes \Phi)(\proj{\psi^+}),
\end{align}
where $\ket{\psi^+} = \frac{1}{\sqrt{\dim(\cH)}} \sum_{i=1}^{\dim(\cH)} \ket{i} \ket{i}$ for some fixed orthonormal basis $\ket{i}$ of $\cH$. Given $\Phi$ a channel, we note $A$ its input system and $B$ its output system, so that $J(\Phi)$ is a bipartite state on the compound system $(A,B)$. We will therefore write  $J(\Phi)_{AB}$ the Choi state of $\Phi$, whenever the systems on which $J(\Phi)$ acts may be ambiguous, especially in \cref{sec:sdp_hierarchy}. Given $S$ a physical system, we use the shorthand notation $d_S := \dim(S)$. We denote $\Phi^*$ for the adjoint (also called the dual) of the map $\Phi : \B(\cH) \to \B(\cH')$, i.e., we have $\langle \Phi(X), Y \rangle = \langle X, \Phi^*(Y) \rangle$ for all $X \in \B(\cH), Y \in \B(\cH')$. Here, $\langle .,.\rangle$ is the Hilbert-Schmidt inner product $\langle A, B \rangle = \tr(A^* B)$. Note that $\Phi$ is a quantum channel if and only if $\Phi^*$ is completely positive and unital, i.e., $\Phi^*(I) = I$.

We now give a precise definition for the optimal success probability $\psucc(\Phi,k)$ for transmitting $k$ messages over the channel $\Phi$. A Positive Operator Valued Measure, or POVM, on a space $\B(\cH)$ is a collection of positive operators $\{M_i : i\}$ such that $\sum_i M_i = I$. We define 

\begin{equation}\label{eq:proba_success}
\psucc(\Phi,k) = \max_{\begin{aligned} &\{M_i : 1 \leq i \leq k\} \text{ POVM},\\
&\{\rho_i \in\D(\cH) : 1 \leq i \leq k\}\end{aligned}} \frac{1}{k}\sum_{i=1}^k \tr(M_i \Phi(\rho_i)).
\end{equation}
We now relate $\psucc(\Phi,2)$ to $\eta_{\tr}(\Phi)$ using the Holevo-Helstrom theorem.

\begin{proposition}
\label{prop:optimal_succ_contraction}
Let $\Phi$ be a channel, then
\begin{equation}
\psucc(\Phi,2) = \frac{1}{2} + \frac{1}{2}\eta_{\tr}(\Phi).
\end{equation}
\end{proposition}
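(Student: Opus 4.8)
The plan is to prove the two inequalities $\psucc(\Phi,2) \leq \frac{1}{2} + \frac{1}{2}\eta_{\tr}(\Phi)$ and $\psucc(\Phi,2) \geq \frac{1}{2} + \frac{1}{2}\eta_{\tr}(\Phi)$ separately, using the Holevo--Helstr\"om theorem as the bridge. Recall that Holevo--Helstr\"om states that for two states $\tau_0, \tau_1 \in \D(\cH')$ and the two-outcome discrimination task with prior $\frac{1}{2}, \frac{1}{2}$, the optimal success probability over POVMs $\{M_0, M_1\}$ is $\max_{\{M_0,M_1\}} \frac{1}{2}\tr(M_0 \tau_0) + \frac{1}{2}\tr(M_1 \tau_1) = \frac{1}{2} + \frac{1}{4}\|\tau_0 - \tau_1\|_1$.

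\textbf{Upper bound.} Fix any feasible choice of POVM $\{M_0, M_1\}$ and states $\rho_0, \rho_1 \in \D(\cH)$ in the optimization defining $\psucc(\Phi,2)$. Applying Holevo--Helstr\"om to the pair $\tau_i := \Phi(\rho_i)$ gives $\frac{1}{2}\sum_i \tr(M_i \Phi(\rho_i)) \leq \frac{1}{2} + \frac{1}{4}\|\Phi(\rho_0) - \Phi(\rho_1)\|_1$. Since $\Phi$ is trace-preserving and $\rho_0, \rho_1$ are density operators, $\|\rho_0 - \rho_1\|_1 \leq 2$, so $\|\Phi(\rho_0) - \Phi(\rho_1)\|_1 = \frac{\|\Phi(\rho_0)-\Phi(\rho_1)\|_1}{\|\rho_0-\rho_1\|_1}\|\rho_0 - \rho_1\|_1 \leq \eta_{\tr}(\Phi) \cdot 2$ whenever $\rho_0 \neq \rho_1$ (and the bound is trivial when $\rho_0 = \rho_1$). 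Combining yields $\frac{1}{2}\sum_i \tr(M_i \Phi(\rho_i)) \leq \frac{1}{2} + \frac{1}{2}\eta_{\tr}(\Phi)$; taking the maximum over all feasible choices gives the upper bound.

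\textbf{Lower bound.} Let $\rho, \sigma \in \D(\cH)$ with $\rho \neq \sigma$ achieve (or approach) the maximum in Definition~\ref{def:contraction_coefficient}. The key observation is that the contraction ratio is invariant under affine rescaling: writing $\rho - \sigma = t(\rho' - \sigma')$ is not quite the right move, so instead I would directly take extremal $\rho, \sigma$ and argue that one may assume $\|\rho - \sigma\|_1 = 2$, i.e., $\rho$ and $\sigma$ have orthogonal supports. Indeed, by convexity of the trace norm and linearity of $\Phi$, the ratio $\|\Phi(\rho)-\Phi(\sigma)\|_1 / \|\rho - \sigma\|_1$ is maximized at a pair where $\rho - \sigma$ is an extreme point of the difference body of $\D(\cH)$ scaled appropriately; such extreme points have the form $\rho - \sigma$ with $\rho \perp \sigma$ (this is a standard fact, which I would state as a small lemma or cite). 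Then $\eta_{\tr}(\Phi) = \frac{1}{2}\|\Phi(\rho) - \Phi(\sigma)\|_1$. Now use the other direction of Holevo--Helstr\"om: there exists a POVM $\{M_0, M_1\}$ (the Helstr\"om measurement, projections onto the positive and negative parts of $\Phi(\rho) - \Phi(\sigma)$) such that $\frac{1}{2}\tr(M_0 \Phi(\rho)) + \frac{1}{2}\tr(M_1 \Phi(\sigma)) = \frac{1}{2} + \frac{1}{4}\|\Phi(\rho) - \Phi(\sigma)\|_1 = \frac{1}{2} + \frac{1}{2}\eta_{\tr}(\Phi)$. Using this POVM together with the states $\rho_0 := \rho$, $\rho_1 := \sigma$ is feasible in \eqref{eq:proba_success}, so $\psucc(\Phi,2) \geq \frac{1}{2} + \frac{1}{2}\eta_{\tr}(\Phi)$, completing the proof.

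\textbf{Main obstacle.} The only non-immediate point is justifying that the maximum in Definition~\ref{def:contraction_coefficient} can be taken over pairs $\rho, \sigma$ with orthogonal support, i.e., $\|\rho - \sigma\|_1 = 2$. One clean way around this: for \emph{any} pair $\rho \neq \sigma$, the states $\rho_0, \rho_1$ and POVM one extracts still give $\psucc(\Phi,2) \geq \frac{1}{2} + \frac{1}{4}\|\Phi(\rho)-\Phi(\sigma)\|_1 \geq \frac{1}{2} + \frac{1}{4}\cdot \eta_{\tr}(\Phi)\|\rho - \sigma\|_1$, which is weaker than needed unless $\|\rho - \sigma\|_1 = 2$. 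So the orthogonality reduction genuinely has to be carried out; I expect this is the step requiring a short separate argument (decompose $\rho - \sigma = \Delta_+ - \Delta_-$ into positive and negative parts, normalize $\hat\rho := \Delta_+/\tr\Delta_+$, $\hat\sigma := \Delta_-/\tr\Delta_-$, note $\tr\Delta_+ = \tr\Delta_-$ since $\tr(\rho-\sigma)=0$, and check that the ratio for $(\hat\rho, \hat\sigma)$ is at least that for $(\rho,\sigma)$ because $\Phi(\rho)-\Phi(\sigma) = \tr(\Delta_+)\,(\Phi(\hat\rho) - \Phi(\hat\sigma))$ while $\|\hat\rho - \hat\sigma\|_1 = 2 \geq \|\rho-\sigma\|_1/\tr(\Delta_+)$).
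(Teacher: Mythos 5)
Correct, and essentially the same route as the paper: upper bound via Holevo--Helstr\"om together with $\|\rho_0 - \rho_1\|_1 \leq 2$, lower bound by restricting to an orthogonal-support pair saturating $\eta_{\tr}(\Phi)$ and taking the Helstr\"om measurement. The only variation is that the paper delegates the reduction to orthogonal pairs to Ruskai's lemma (\cref{lemma:Ruskai}), whereas you reprove exactly that reduction from scratch in your ``Main obstacle'' paragraph via the Jordan decomposition of $\rho - \sigma$ into $\Delta_+ - \Delta_-$ — this is precisely the argument underlying Ruskai's lemma, so the two are interchangeable (your earlier ``extreme point'' heuristic can be dropped in favor of this cleaner direct argument).
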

\begin{proof}
For any $\rho_1, \rho_2 \in \D(\cH)$ and any 2-outcome POVM $\{M_1, I - M_1\}$, 
\begin{align*}
    \frac{1}{2} \tr(M_1 \Phi(\rho_1) + (I - M_1) \Phi(\rho_2)) &= \frac{1}{2} + \frac{1}{2} \tr(M_1 (\Phi(\rho_1) - \Phi(\rho_2))) \\
    &\overset{(a)}{\leq} \frac{1}{2} + \frac{1}{4} \| \Phi(\rho_1) - \Phi(\rho_2) \|_1 \\
    &\leq \frac{1}{2} + \frac{1}{2} \frac{\| \Phi(\rho_1) - \Phi(\rho_2) \|_1}{\| \rho_1 - \rho_2 \|_1}
\end{align*} 
where we used the Holevo-Helstrom theorem~\cite{Helstrom.1969} for $(a)$. For the other direction, we first use~\cref{lemma:Ruskai} below and choose $\rho_1, \rho_2$ orthogonal, i.e., $\| \rho_1 - \rho_2 \|_1 = 2$, such that $\eta_{\tr}(\Phi) = \frac{1}{2} \| \Phi(\rho_1) - \Phi(\rho_2) \|_1$. Then $M_1$ is the positive part of $\Phi(\rho_1) - \Phi(\rho_2)$.
\end{proof}
We summarize in Table~\ref{tab:notations} the notation used in this paper.

\begin{table}[]
    \centering
    \begin{tabular}{|c|c|c|}
    \hline
        Symbol & Meaning & Equation\\
    \hline
        $\cH$ & A finite dimensional Hilbert space &\\
        $\B(\cH)$ & The set of linear operators on $\cH$ & \\
        $\text{Herm}(\cH)$ & The set of Hermitian operator on $\cH$&\\
        $\D(\cH)$ & The set of density operators on $\cH$& \\
        $I \in \B(\cH)$ & The identity operator in $\B(\cH)$ & \\
        $\text{Id}_n(.)$ & The identity channel on $\B(\mathbb{C}^n)$& \\
        $\rho$, $\sigma$ & Quantum states & \\
        $A,B,\bar{A},\bar{B}$ & Physical systems & \\
        $d_S$ & Dimension of system $S$ &\\
        $\Phi$ & A quantum channel & \\
        $J(\Phi)$ & The Choi state of $\Phi$ & Eq.~\eqref{eq:choi-state} \\
        $\eta_{\tr}(\Phi)$ & The trace norm contraction coefficient of $\Phi$ & \cref{eq:contraction_coefficient} \\ $\psucc(\Phi, k)$ & Opt. success prob. of encoding $k$ messages in $\Phi$ & Eq.~\eqref{eq:proba_success} \\
        $G_{\Phi}$ & The quantum graph associated to $\Phi$ & \cref{eq:def_qg} \\
        $\|\cdot\|_1$ & The Schatten 1-norm, a.k.a. the trace norm & \\
        $\|\cdot\|_2$ & The Schatten 2-norm, a.k.a. the Frobenius norm & \\
        $\|\cdot\|_\infty$ & The Schatten $\infty$-norm, a.k.a. the operator norm &  \\
        $\|\cdot\|_{\ell_p}$ & The $\ell_p$ norms &  \\
        $S_p^n$, $p \in [1,\infty]$ & The complex Banach space $(\B(\C^n), \|\cdot\|_p)$ & \\
        $S_p^{H,n}$, $p \in [1,\infty]$ & The real Banach space $(\text{Herm}(\C^n), \|\cdot\|_p)$ & \\
        $\ell_p^{n}(\mathbb{R})$, $p \in [1,\infty]$ & The real Banach space $(\mathbb{R}^n, \|\cdot\|_{\ell_p})$ & \\
        $[n]$, $n \in \mathbb{N}$ & The integer set $\{1,\dots,n\}$ & \\
        $\operatorname{conv}\{\mathcal{S}\}$ & The convex hull of the set $\mathcal{S}$ & \cref{eq:sep}\\
        $E_{i,j}$ & Matrix with $1$ in position $(i,j)$ and $0$ elsewhere   & \\
        
    \hline
    \end{tabular}
    \caption{Notations used throughout this paper}
    \label{tab:notations}
\end{table}

\section{\NP-hardness of computing the trace norm contraction coefficient}\label{sec:NP_general_case}

In this section, we prove \cref{thm:NP_contraction_coef} by reducing the Hermitian non-commutative (resp. real commutative) Little Grothendieck Problem to the problem of computing $\eta_{\tr}(\Phi)$ (resp. computing $\eta_{\tr}(\Phi)$, for $\Phi$ a quantum classical channel). The complexities of various Grothendieck problems are studied in depth in \cite{briet2015tight} and \cite{ncGrothendieck}.

Before introducing the Hermitian non-commutative Little Grothendieck Problem let us first reduce the two-parameter optimisation problem which defines $\eta_{\tr}(\Phi)$ into a single-parameter optimisation. Ruskai showed in \cite{Ruskai.1994} the following lemma.

\begin{lemma}[\cite{Ruskai.1994}, Theorem 2; see also Lemma 8.3 of \cite{Wolf}]\label{lemma:Ruskai} For a quantum channel $\Phi : \B(\cH) \rightarrow \B(\cH')$,
\begin{equation}\label{eq:ruskai}
\eta_{\tr}(\Phi) = \max_{\rho,\sigma \in \D(\cH), \rho \perp \sigma}\frac{1}{2}\|\Phi(\rho) - \Phi(\sigma)\|_1,
\end{equation}
where $\rho \perp \sigma$ means $\rho$ and $\sigma$ have orthogonal supports. In addition, the maximizing states can be taken pure.
\end{lemma}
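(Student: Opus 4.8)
The plan is to establish the two assertions in turn: that the maximum in \cref{eq:contraction_coefficient} is unchanged when we restrict to pairs of states with orthogonal supports, and that among such pairs we may further restrict to pure states. For the first assertion, fix any $\rho \neq \sigma$ in $\D(\cH)$ and write $\Delta := \rho - \sigma$, a nonzero traceless Hermitian operator. Decomposing $\Delta = \Delta_+ - \Delta_-$ into its positive and negative parts (which have orthogonal supports), the condition $\tr \Delta = 0$ forces $t := \tr \Delta_+ = \tr \Delta_- > 0$ and $\|\Delta\|_1 = 2t$. Set $\rho' := \Delta_+/t$ and $\sigma' := \Delta_-/t$, which are density operators with orthogonal supports, so $\|\rho'-\sigma'\|_1 = 2$; by linearity $\Phi(\rho) - \Phi(\sigma) = \Phi(\Delta) = t\,(\Phi(\rho') - \Phi(\sigma'))$, hence
\[
  \frac{\|\Phi(\rho)-\Phi(\sigma)\|_1}{\|\rho - \sigma\|_1} = \frac{t\,\|\Phi(\rho')-\Phi(\sigma')\|_1}{2t} = \tfrac12\,\|\Phi(\rho')-\Phi(\sigma')\|_1 .
\]
Since orthogonal pairs are themselves admissible in \cref{eq:contraction_coefficient} (there $\|\rho'-\sigma'\|_1 = 2$), this shows the two suprema coincide; they are attained because the set of orthogonal pairs of density operators is compact and the objective is continuous.

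For the second assertion, set $\eta := \eta_{\tr}(\Phi)$ (the case $\eta = 0$ is trivial, as any orthogonal pure pair then realizes it, so assume $\eta > 0$), and let $(\rho^*, \sigma^*)$ attain the maximum over orthogonal pairs. Using the dual formula $\|X\|_1 = \max\{\tr(MX) : M = M^*,\ -I \le M \le I\}$ for Hermitian $X$, pick an optimal witness $M^*$ for $X = \Phi(\rho^*)-\Phi(\sigma^*)$ and put $N := \Phi^*(M^*)$. Then, moving $M^*$ across the adjoint, $2\eta = \tr(M^*(\Phi(\rho^*)-\Phi(\sigma^*))) = \tr(N\rho^*) - \tr(N\sigma^*) \le \lambda_{\max}(N) - \lambda_{\min}(N)$, so in particular $\lambda_{\max}(N) > \lambda_{\min}(N)$; therefore unit eigenvectors $\ket{u}$ and $\ket{v}$ of $N$ for $\lambda_{\max}(N)$ and $\lambda_{\min}(N)$ respectively are orthogonal. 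Evaluating, $\tfrac12\|\Phi(\proj{u}) - \Phi(\proj{v})\|_1 \ge \tfrac12\tr(M^*(\Phi(\proj{u})-\Phi(\proj{v}))) = \tfrac12(\lambda_{\max}(N)-\lambda_{\min}(N)) \ge \eta$, and the reverse inequality is the first assertion, so the maximum is attained at the orthogonal pure states $\proj{u}$ and $\proj{v}$.

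The first assertion is essentially bookkeeping; the crux is the purity step, where the potential obstacle is that the optimal measurement $M^*$ could a priori pull back to an operator $N = \Phi^*(M^*)$ whose top and bottom eigenspaces are forced to overlap, preventing us from reading off an orthogonal pure pair. The bound $\lambda_{\max}(N) - \lambda_{\min}(N) \ge 2\eta > 0$ is exactly what excludes this. (An alternative, slightly more abstract route is to note that $\eta_{\tr}(\Phi)$ equals the maximum of the convex function $\Delta \mapsto \tfrac12\|\Phi(\Delta)\|_1$ over the compact convex set $\{\Delta = \Delta^* : \tr\Delta = 0,\ \|\Delta\|_1 \le 2\}$, and to verify by an eigenvalue-perturbation argument that the extreme points of that set are precisely the differences $\proj{u} - \proj{v}$ of rank-one projections onto orthogonal unit vectors; a convex function on a compact convex set attains its maximum at an extreme point.)
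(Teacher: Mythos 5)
The paper cites this lemma to Ruskai~\cite{Ruskai.1994} and Wolf~\cite{Wolf} and gives no proof of its own, so there is no internal argument to compare against; your proof is correct. Your reduction to orthogonal pairs via the Jordan decomposition $\Delta = \Delta_+ - \Delta_-$ is the standard one. For the purity step, the route taken in the cited references (and sketched in your final parenthetical) is the convexity argument: $\Delta \mapsto \tfrac12\|\Phi(\Delta)\|_1$ is convex on the compact convex set of traceless Hermitian operators with $\|\Delta\|_1 \le 2$, whose extreme points are the differences $\proj{u}-\proj{v}$ of orthogonal rank-one projectors. Your main argument instead goes through H\"older duality: from an optimal trace-norm witness $M^*$ for $\Phi(\rho^*)-\Phi(\sigma^*)$ you pass to $N = \Phi^*(M^*)$, show $\lambda_{\max}(N)-\lambda_{\min}(N) \ge 2\eta_{\tr}(\Phi) > 0$, and read off an orthogonal eigenvector pair that saturates the bound. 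This is a clean, self-contained alternative and in fact anticipates the dual formulation that the paper then derives as~\cref{lemma:dual} from Ruskai's lemma; it makes the eigenvector structure explicit rather than appealing to extreme points. One minor point worth flagging for rigor: you invoke compactness of the set of orthogonal pairs to justify that the supremum is attained, and this does hold (for positive $\rho,\sigma$, the condition $\tr(\rho\sigma)=0$ is equivalent to orthogonal supports and is closed), but it deserves a one-line remark since orthogonality of supports could naively look like an open condition.
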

From this lemma, we can reduce the problem of computing the contraction coefficient to the following optimisation problem over Hermitian operators in the unit ball for the Schatten $\infty$-norm:

\begin{lemma}\label{lemma:dual} For a quantum channel $\Phi : \B(\cH) \rightarrow \B(\cH')$, 
\begin{equation}\label{eq:contraction_coef_vs_diff_eignevalues}
\eta_{\tr}(\Phi) 
= \frac{1}{2}\max_{X^{*} = X, \|X\|_\infty \leq 1} (\lambda_{\max} - \lambda_{\min})(\Phi^*(X)),
\end{equation}
where, for any Hermitian operator $O$, $\lambda_{\max}(O)$, respectively $\lambda_{\min}(O)$, is the maximal, respectively minimal, eigenvalue of $O$.
\end{lemma}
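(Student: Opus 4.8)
The plan is to combine \cref{lemma:Ruskai} with the variational (dual) description of the trace norm. Recall that for a Hermitian operator $Y$ one has $\|Y\|_1 = \max\{\tr(XY) : X = X^*,\ \|X\|_\infty \le 1\}$, with the maximum attained at $X = \Pi_+ - \Pi_-$ where $\Pi_\pm$ are the spectral projectors onto the positive and negative parts of $Y$ (restricting to Hermitian $X$ is without loss of generality since $Y$ is Hermitian). Applying this to $Y = \Phi(\rho) - \Phi(\sigma)$ and using the definition of the adjoint, $\tr\!\big(X(\Phi(\rho)-\Phi(\sigma))\big) = \tr\!\big(\Phi^*(X)(\rho-\sigma)\big)$, turns the trace-norm maximisation inside \eqref{eq:ruskai} into a maximisation over Hermitian contractions $X$. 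After exchanging the order of the two maxima, the key step is to evaluate, for a fixed Hermitian operator $H := \Phi^*(X)$, the quantity $\max_{\rho \perp \sigma}\tr\!\big(H(\rho - \sigma)\big)$ over pairs of density operators with orthogonal supports.

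For the upper bound ($\le$): fix an optimal orthogonal pair $(\rho,\sigma)$ provided by \cref{lemma:Ruskai}, so that $\eta_{\tr}(\Phi) = \tfrac{1}{2}\|\Phi(\rho)-\Phi(\sigma)\|_1$, let $X$ be the associated Hermitian contraction from trace-norm duality, and set $H = \Phi^*(X)$. Then
\[
2\,\eta_{\tr}(\Phi) = \tr\!\big(X(\Phi(\rho)-\Phi(\sigma))\big) = \tr(H\rho) - \tr(H\sigma) \le \lambda_{\max}(H) - \lambda_{\min}(H),
\]
because $\tr(H\rho) \le \lambda_{\max}(H)$ and $\tr(H\sigma) \ge \lambda_{\min}(H)$ for any density operators. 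Maximising over $X$ gives $\eta_{\tr}(\Phi) \le \tfrac{1}{2}\max_{X = X^*,\ \|X\|_\infty \le 1}(\lambda_{\max} - \lambda_{\min})(\Phi^*(X))$.

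For the lower bound ($\ge$): fix any Hermitian $X$ with $\|X\|_\infty \le 1$ and set $H = \Phi^*(X)$. If $\lambda_{\max}(H) = \lambda_{\min}(H)$ there is nothing to prove, as $\eta_{\tr}(\Phi) \ge 0$; otherwise pick unit eigenvectors $\ket{v_{\max}}, \ket{v_{\min}}$ of $H$ for $\lambda_{\max}(H), \lambda_{\min}(H)$, which are orthogonal since they belong to distinct eigenvalues. Taking $\rho = \proj{v_{\max}}$ and $\sigma = \proj{v_{\min}}$ (so $\|\rho - \sigma\|_1 = 2$) and using $\|\Phi(\rho)-\Phi(\sigma)\|_1 \ge \tr\!\big(X(\Phi(\rho)-\Phi(\sigma))\big) = \tr(H\rho) - \tr(H\sigma) = \lambda_{\max}(H) - \lambda_{\min}(H)$, we get from \cref{def:contraction_coefficient} directly that $\eta_{\tr}(\Phi) \ge \tfrac{1}{2}(\lambda_{\max} - \lambda_{\min})(H)$. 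Taking the maximum over $X$ closes the argument.

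The only genuine subtlety — a mild one — is ensuring that the orthogonal-support constraint on $(\rho,\sigma)$ is never binding: in the lower bound this is handled by the fact that extremal eigenvectors of $H$ for distinct eigenvalues are automatically orthogonal (and the degenerate case is vacuous), while in the upper bound one only invokes the inequalities $\lambda_{\min}(H) \le \tr(H\rho) \le \lambda_{\max}(H)$, valid for all states regardless of support. One can also bypass \cref{lemma:Ruskai} for the upper bound by writing the Hermitian traceless operator $\rho - \sigma$ in its eigenbasis and observing that its positive part has trace exactly $\tfrac{1}{2}\|\rho - \sigma\|_1$; I would nonetheless keep the Ruskai-based route since the lemma is stated immediately beforehand.
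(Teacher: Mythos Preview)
Your proposal is correct and follows essentially the same route as the paper: invoke \cref{lemma:Ruskai}, replace the trace norm by its H\"older dual, move to $\Phi^*$, and identify the inner maximum over orthogonal (pure) states with $\lambda_{\max}-\lambda_{\min}$. The only cosmetic difference is that the paper writes this as a single chain of equalities over pure states $uu^*,vv^*$, whereas you split it into matching upper and lower bounds over general orthogonal states; the content is the same.
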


\begin{proof}
     In the following equations, the vectors $u$, $v$ are taken of unit length, i.e., $\|u\|_{\ell_2} = \|v\|_{\ell_2} = 1$. By~\cref{lemma:Ruskai},
    \[
    \begin{aligned}
        \eta_{\tr}(\Phi) &= \frac{1}{2} \max_{u,v \in \cH, u \perp v} \|\Phi(uu^*) - \Phi(vv^*)\|_1 \\
        &\overset{(a)}{=} \frac{1}{2} \max_{u,v \in \cH, u \perp v} \max_{X^* = X, \|X\|_\infty \leq 1} | \langle \Phi(uu^* - vv^*), X \rangle| \\
        &= \frac{1}{2} \max_{u,v \in \cH, u \perp v} \max_{X^* = X, \|X\|_\infty \leq 1} | \langle uu^* - vv^*, \Phi^*(X) \rangle| \\
        &= \frac{1}{2}\max_{X^* = X, \|X\|_\infty \leq 1} (\lambda_{\max} - \lambda_{\min})(\Phi^*(X)),
    \end{aligned}
    \]
    where $(a)$ follows from the Hölder duality between the trace and operator norm as well as the fact that $\Phi$ is Hermitian preserving and the last step follows directly from writing an eigendecomposition of $\Phi^*(X)$ and using the fact that $u$ and $v$ have norm $1$.
\end{proof}

Hence as claimed the optimisation problem \cref{lemma:dual} is only over a single parameter.

We now switch our focus to the Little Grothendieck Problems, which we will reduce to the computation of the contraction coefficient. A Banach space $X = (\mathcal{X}, \|\cdot\|_X)$ is a linear space $\mathcal{X}$ endowed with a norm $\|\cdot\|_X$ with respect to which $\mathcal{X}$ is complete. For any linear map $\F : X \rightarrow Y$ between two Banach spaces, its operator norm is defined as 
\begin{equation}
\|\F\|_{X\to Y} = \sup_{\|a\|_X \leq 1} \|\F(a)\|_Y,
\end{equation}
i.e., the supremum of $\|\F(a)\|_Y$ when $a$ ranges in the unit ball of $X$. 
The Little Grothendieck Problems are all stated as problems concerning computations of such operator norms of linear maps $\F$ between certain Banach spaces. By abuse of notation, we write $\mathbb{R}^n$ for the Banach space obtained when we endow $\mathbb{R}^n$ with the $\ell_2$ norm.
We write $\ell_1^n(\mathbb{R})$ the Banach space $\mathbb{R}^n$ endowed with the $\ell_1$ norm and analogously $\ell_\infty^n(\mathbb{R})$ for the Banach space $(\mathbb{R}^n,\|\cdot\|_{\ell_\infty})$. Furthermore, we write $S_1^n= (\B(\C^n), \|\cdot\|_1)$ and $S_\infty^n= (\B(\C^n), \|\cdot\|_\infty)$ the Banach spaces of complex $n \times n$ matrices endowed with the Schatten $1$- and $\infty$-norm (the former being also called the trace norm and the latter the operator norm). Moreover, we write $S_1^{H,n} = (\text{Herm}(\mathbb{C}^n), \|\cdot\|_1)$ and $S_{\infty}^{H,n} = (\text{Herm}(\mathbb{C}^n), \|\cdot\|_\infty)$ the \emph{(real)} Banach spaces of $n \times n$ Hermitian complex matrices endowed with, respectively, the Schatten 1- and $\infty$-norm.
The real commutative Little Grothendieck Problem consists in computing $\|\F\|_{\mathbb{R}^n \to \ell_1^d(\mathbb{R})}$ for $\F : \mathbb{R}^n \rightarrow \ell_1^d(\mathbb{R})$ a linear operator. This problem can be generalized to non-commutative settings by taking $\F$ operator-valued. The variant we use for our proof is the Hermitian non-commutative Little Grothendieck Problem, which can be stated as computing $\|\F\|_{\mathbb{R}^n \to S_1^{H,d}}$, given a linear operator $\mathcal{F} : \mathbb{R}^n \rightarrow S_1^{H,d}$ with output space endowed with the Schatten $1$-norm.

Bri\"et et al. proved the following hardness theorems for the commutative and Hermitian non-commutative Little Grothendieck Problem in \cite{briet2015tight}.

\begin{theorem}[Theorem 1.3 in \cite{briet2015tight}]\label{thm:hardness_CLGP}
    For any constant $\varepsilon > 0$, it is \texttt{NP}-hard to approximate the \emph{real commutative} Little Grothendieck Problem to within a factor greater than $\sqrt{2/\pi} + \varepsilon$. 
\end{theorem}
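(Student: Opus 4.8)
The plan is to recognize the real commutative Little Grothendieck Problem as (the square root of) the problem of maximizing a positive semidefinite quadratic form over the Boolean hypercube, whose optimal approximation threshold $2/\pi$ is known; the $\sqrt{2/\pi}$ in the statement is then just its square root. Concretely, write $\F(a)=(\langle b_1,a\rangle,\dots,\langle b_d,a\rangle)$ for vectors $b_1,\dots,b_d\in\mathbb R^n$; using $\sum_j|\langle b_j,a\rangle|=\max_{x\in\{\pm1\}^d}\langle\sum_j x_j b_j,a\rangle$ for fixed $a$ and $\max_{\|a\|_{\ell_2}\le1}\langle v,a\rangle=\|v\|_{\ell_2}$, one gets
\[
\|\F\|_{\mathbb R^n\to\ell_1^d(\mathbb R)}
= \max_{x\in\{\pm1\}^d}\ \max_{\|a\|_{\ell_2}\le1}\Big\langle\sum_{j=1}^d x_j b_j,\,a\Big\rangle
= \max_{x\in\{\pm1\}^d}\Big\|\sum_{j=1}^d x_j b_j\Big\|_{\ell_2},
\]
so $\|\F\|_{\mathbb R^n\to\ell_1^d(\mathbb R)}^2=\max_{x\in\{\pm1\}^d}x^\top A x$ with $A=(\langle b_j,b_k\rangle)_{j,k}\succeq0$; conversely every positive semidefinite $A$ is such a Gram matrix. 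It therefore suffices to show that $\mathrm{QP}^+(A):=\max_{x\in\{\pm1\}^d}x^\top Ax$ is \texttt{NP}-hard to approximate within $2/\pi+\delta$ for every constant $\delta>0$: a hypothetical $\beta$-factor approximation of $\|\F\|$ squares to a $\beta^2$-factor approximation of $\mathrm{QP}^+$, and $(\sqrt{2/\pi}+\varepsilon)^2>2/\pi$.

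The target threshold is dictated by the best algorithm. The natural relaxation is $\mathrm{SDP}^+(A)=\max\{\langle A,Y\rangle:Y\succeq0,\,Y_{jj}=1\}$, and by Nesterov's theorem — random hyperplane rounding, Grothendieck's identity $\mathbb E[\mathrm{sgn}\langle g,u\rangle\,\mathrm{sgn}\langle g,v\rangle]=\tfrac2\pi\arcsin\langle u,v\rangle$, and the inequality $\sum_{j,k}A_{jk}\arcsin\langle u_j,u_k\rangle\ge\sum_{j,k}A_{jk}\langle u_j,u_k\rangle$ which holds for $A\succeq0$ — one has $\mathrm{QP}^+(A)\ge\tfrac2\pi\,\mathrm{SDP}^+(A)$. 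Thus $2/\pi$ is exactly the factor of the best efficient algorithm, and the hardness we seek is meant to match it. This algorithmic direction is classical and routine.

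For the hardness itself I would follow the standard PCP-to-optimization pipeline. Design a \emph{dictatorship test}: attach to each variable of a hard label-cover-type instance a long code $f:\{\pm1\}^R\to\mathbb R$ with $\mathbb E[f^2]\le1$, and build from the test constraints a single positive semidefinite matrix $A$ (indexed by long-code coordinates, folded so that constant Fourier mass is killed) with the properties that (i) \emph{completeness}: encoding a satisfying labeling by dictators $f(y)=y_i$ yields $x^\top Ax\ge(1-o(1))\,T$ for a fixed normalization $T$; and (ii) \emph{soundness}: any family of long codes all of whose coordinate influences are $o(1)$ yields at most $(\tfrac2\pi+o(1))\,T$. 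Composing the test with an \texttt{NP}-hard \emph{smooth} label cover — smoothness being what makes the influence-decoding step work without the Unique Games conjecture, with the usual folding to control constant terms — produces the desired \texttt{NP}-hardness of distinguishing $\mathrm{QP}^+(A)\ge(1-o(1))\,T$ from $\mathrm{QP}^+(A)\le(\tfrac2\pi+o(1))\,T$.

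The delicate part, and the place where the constant $2/\pi$ is actually produced, is the soundness bound. By the Mossel--O'Donnell--Oleszkiewicz invariance principle, low-influence bounded functions on $\{\pm1\}^R$ are interchangeable, as far as the bilinear form in the test is concerned, with functions on Gaussian space; over Gaussian space the question reduces to an extremal problem of the type ``maximize $\mathbb E[h(G)\,h(G')\,K(G,G')]$ subject to the SDP value $\sum_{j,k}A_{jk}\langle u_j,u_k\rangle$ being fixed'', whose optimizers are halfspace indicators $h=\mathrm{sgn}\langle v,\cdot\rangle$ — precisely the configuration on which hyperplane rounding loses its $2/\pi$ factor, now read as an upper bound via Grothendieck's identity. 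Carrying out this Gaussian extremality computation with the sharp constant, and arranging the smoothing and folding so that genuine \texttt{NP}-hardness (rather than merely Unique-Games hardness) survives with the $2/\pi$ threshold intact, is the crux; the reformulation and the algorithmic matching bound above are essentially bookkeeping by comparison.
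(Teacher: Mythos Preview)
The paper does not prove this theorem at all: it is quoted verbatim from \cite{briet2015tight} and used as a black box in the reduction to the contraction coefficient. There is therefore no ``paper's own proof'' to compare your attempt against.

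For what it is worth, your outline is the right shape for how \cite{briet2015tight} actually proceeds: the reformulation $\|\F\|_{\mathbb{R}^n\to\ell_1^d(\mathbb{R})}^2=\max_{x\in\{\pm1\}^d}x^\top A x$ with $A\succeq0$ is correct, the matching algorithmic bound is Nesterov's $2/\pi$ rounding, and the hardness side is indeed a dictatorship test composed with smooth Label Cover, analyzed via invariance. But what you have written is a proof \emph{plan}, not a proof: the soundness computation --- pinning the Gaussian extremum at exactly $2/\pi$ and checking that smoothness of the outer PCP suffices for influence decoding so that one gets genuine \texttt{NP}-hardness rather than UG-hardness at the sharp constant --- is precisely the technical content of \cite{briet2015tight}, and you have only named it. Since the present paper treats the result as imported, that is not a defect for the purposes of this paper; just be aware that nothing here is being reproved.
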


\begin{theorem}[Theorem 1.2 and Sec. 5.1 in \cite{briet2015tight}]\label{thm:hardness_NCLGP}
For any constant $\varepsilon > 0$, it is $\texttt{NP}$-hard to approximate the \emph{Hermitian non-commutative} Little Grothendieck Problem to within a factor greater than $1/\sqrt{2} + \varepsilon$.
\end{theorem}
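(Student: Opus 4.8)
I would not prove Theorem~\ref{thm:hardness_NCLGP} from scratch: it restates Theorem~1.2 of~\cite{briet2015tight} together with the Hermitian case discussed in their Section~5.1, and a complete argument should be found there. The plan here is only to recall the shape of it, in the form that we will actually use.

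First I would put the operator norm in a workable form. Writing $\F(a) = \sum_{i=1}^n a_i F_i$ with $F_i \in \text{Herm}(\C^d)$, Hölder duality between the Schatten $1$- and $\infty$-norms (with a Hermitian witness, since each $\F(a)$ is Hermitian) followed by Cauchy--Schwarz over the $\ell_2$ unit ball of $\mathbb{R}^n$ rewrites the quantity as
\begin{equation*}
\|\F\|_{\mathbb{R}^n \to S_1^{H,d}}
= \max_{X^* = X,\ \|X\|_\infty \leq 1} \Big( \sum_{i=1}^n \langle F_i, X \rangle^2 \Big)^{1/2}
= \max_{X^* = X,\ X^2 = I} \Big( \sum_{i=1}^n \langle F_i, X \rangle^2 \Big)^{1/2},
\end{equation*}
the extreme points of the feasible set being the Hermitian unitaries (``matrix sign patterns''). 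This is a non-commutative analogue of the Boolean quadratic maximisation over $\varepsilon \in \{\pm 1\}^d$ that governs the commutative problem of Theorem~\ref{thm:hardness_CLGP}, and it carries a natural semidefinite relaxation whose worst-case ratio to the true optimum is (the square root of) the non-commutative little Grothendieck constant; the factor $1/\sqrt 2$ in the statement is the reciprocal of this ratio.

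The substantive part, which I would import wholesale from~\cite{briet2015tight}, promotes this integrality gap to \NP-hardness through a PCP-style reduction. One starts from a hard gap instance of a label-cover / constraint-satisfaction problem and composes it with a dictatorship test whose proofs are matrix-valued functions on a discrete cube: plugging in a coordinate (``dictator'') function realises the integral optimum and yields completeness, while soundness follows from an invariance-principle-type analysis of operator-valued functions, showing that any proof that fools the test with value near the relaxation bound must essentially be a low-degree junta and hence decodes to a good assignment for the underlying instance. Tuning the noise rate of the test produces the tight constant $1/\sqrt2 + \varepsilon$, and the adjustments needed so that the constructed instance is genuinely Hermitian (rather than a general rectangular non-commutative one) are exactly the content of Section~5.1 of~\cite{briet2015tight}. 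I expect the soundness analysis to be the one real obstacle---it is what makes this a theorem rather than an easy corollary of the commutative case---since it demands a tight operator-valued ``majority-is-stablest''-type statement controlling the spectrum of a noised matrix-valued function in terms of its Fourier weights; for our purposes the statement is used as a black box, and I note that the parallel reformulation of $\eta_{\tr}(\Phi)$ in Lemma~\ref{lemma:dual} is exactly what will let us transfer this hardness to the contraction coefficient in the remainder of this section.
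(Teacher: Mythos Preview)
Your proposal is correct and matches the paper's treatment: Theorem~\ref{thm:hardness_NCLGP} is stated in the paper purely as a citation to~\cite{briet2015tight} with no proof given, and you correctly recognise this and defer to the same source. Your additional sketch of the PCP/dictatorship-test argument from~\cite{briet2015tight} goes beyond what the paper provides (it offers none of this), but it is accurate context and does no harm.
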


Finally note that we can express the Little Grothendieck Problems in terms of the dual of $\F$ as per the argument given in Section 6 of \cite{briet2015tight}. Indeed, for any Banach spaces $X$, $Y$ and any linear map $\F : X \rightarrow Y$, it is a standard fact that the operator norm of $\F$ is equal to that of its dual $\F^* : Y^* \rightarrow X^*$, i.e. $\|\F\|_{X \to Y} = \|\F^*\|_{Y^* \to X^*}$. For the Hermitian non-commutative Little Grothendieck Problem, we have $\F : \mathbb{R}^n \rightarrow S_1^{H,d}$, $\mathbb{R}^n$ is a Hilbert space and thus self-dual and the dual of $S_1^{H,d}$ is $S_\infty^{H,d}$, hence $\F^* : S_{\infty}^{H,d} \rightarrow \mathbb{R}^n$. Therefore, 
\begin{equation}\label{eq:operator_norm_dual}
\|\F\|_{\mathbb{R}^n \to S_{1}^{H,d}} = \|\F^*\|_{S_{\infty}^{H,d} \to \mathbb{R}^n} = \sup_{\|Y\|_\infty \leq 1,~Y^*=Y} \|\F^*(Y)\|_{\ell_2}.
\end{equation}

Since the action of a linear map $\F : \mathbb{R}^n \rightarrow S_1^{H,d}$ is fully determined by where it maps a basis of $\mathbb{R}^n$ to, for all $x \in \mathbb{R}^n$ we can write:

\begin{equation}
\F(x) = \sum_{i=1}^n x_iF_i,
\end{equation}
with, for all $i \in [n]$, $F_i = \F(e_i) \in S_1^{H,d}$, with $\{e_i:1 \leq i \leq n\}$ the canonical basis of $\R^n$. Then we can express $\F^*$ as:

\begin{equation}
\forall Y \in S_{\infty}^{H,d}: \quad \F^*(Y) = (\langle F_i, Y\rangle)_{i \in [n]}.
\end{equation}

Therefore,

\begin{equation}
\|\F\|_{\mathbb{R}^n \to S_{1}^{H,d}} = \|\F^*\|_{S_{\infty}^{H,d} \to \mathbb{R}^n} = \sup_{\|Y\|_{\infty} \leq 1,~Y^*=Y} \|(\langle F_i, Y \rangle)_{i \in [n]}\|_{\ell_2}.  
\end{equation}

The same line of argument also holds \emph{mutatis mutandis} for the real commutative Little Grothendieck Problem. In this case, we have $\F : \mathbb{R}^n \to \ell_1^d(\mathbb{R})$, we write for all $x \in \mathbb{R}^n$ 
\[
\F(x) = \sum_{i=1}^n x_if_i,
\]
with $\{f_i = \F(e_i) \in \mathbb{R}^d : 1 \leq i \leq n\}$. Then, for all $y \in \mathbb{R}^d$,
\[
\F^*(y) = (\langle f_i,y \rangle)_{i\in [n]},
\]
and 
\[
\|\F\|_{\mathbb{R}^n \to \ell_1^d(\mathbb{R})} = \|\F^*\|_{\ell_{\infty}^d(\mathbb{R}) \to \mathbb{R}^n} = \sup_{\|y\|_{\ell_\infty} \leq 1}\|\F^*(y)\|_{\ell_2}.
\]

We now have all the tools to prove \cref{thm:NP_contraction_coef}.

\begin{proof}[Proof of \cref{thm:NP_contraction_coef}]
We start with the hardness result for unital channels as it implies the one for general channels.
Let $\F : \mathbb{R}^n \rightarrow S_1^{H,d}$ be a linear map, and let $\{F_i = \F(e_i): 1 \leq i \leq n\}$, with $\{e_i \in \mathbb{R}^n: i \in [n]\}$ the canonical basis of $\mathbb{R}^n$, so that, for all $x \in \mathbb{R}^n$,
\begin{equation}
\F(x) = \sum_{i=1}^n x_iF_i.
\end{equation}

The proof then proceeds in two steps. First, we construct a unital channel $\Phi_{\alpha}$ out of $\F$, then, we relate the difference $(\lambda_{\max} - \lambda_{\min})(\Phi_{\alpha}^*(Y))$ to $\|\F^*(Y)\|_{\ell_2}$. Finally, we use on the one hand \cref{lemma:Ruskai} to deduce $\eta_{\tr}(\Phi_{\alpha})$ from the difference $(\lambda_{\max} - \lambda_{\min})(\Phi_{\alpha}^*(Y))$ and, on the other, that, from \cref{thm:hardness_NCLGP}, it is \NP-hard to maximise $\|\F^*(Y)\|_{\ell_2}$ on the operators $Y\in S^{H,d}_\infty$, with $\|Y\|_\infty \leq 1$, to then conclude that it is also \NP-hard to compute $\eta_{\tr}(\Phi_{\alpha})$.

Thus, let us perform some changes to $\F$ so as to construct a valid quantum channel. Let $k \in \mathbb{N}$ be such that $2kd \geq n+1$. Let $\widetilde{\F}: \mathbb{R}^n \rightarrow S_1^{H,2kd}$ be the linear map defined as
\begin{equation}
\forall x \in \mathbb{R}^n, \quad \widetilde{\F}(x) = \sum_{i=1}^n x_i\oplus_{s=1}^k\bigl(F_i \oplus (-F_i) \bigr) = \begin{pmatrix}
    \F(x) &  &  & & \\
     & -\F(x) & & \text{\LARGE0} & \\
     & & \ddots & & \\
     &\text{\LARGE0} & & \F(x) & \\
    &&&& - \F(x) \\
    \end{pmatrix}.
\end{equation}

Then, $\widetilde{\F}^*(I_{2kd}) = (\langle \oplus_{s=1}^k\bigl(F_i \oplus (-F_i)\bigr), I_{2kd}\rangle)_{i \in [n]} = (k[\langle F_i, I_d \rangle - \langle F_i, I_d \rangle])_{i \in [n]} = 0$. Furthermore, for all $x \in \mathbb{R}^n$, $\|\widetilde{\F}(x)\|_1 = 2k\|\F(x)\|_1$, so that 
\begin{equation}\label{eq:extension}
    \|\F\|_{\mathbb{R}^n \to S_1^{H,d}} = \frac{1}{2k}\|\widetilde{\F}\|_{\mathbb{R}^n \to S_1^{H,2kd}}.
\end{equation}
We write $\{\widetilde{F}_i = \widetilde{\F}(e_i) = \oplus_{s=1}^k\bigl(F_i \oplus (-F_i)\bigr): 1 \leq i \leq n\}$. 
Let $\Psi : \B(\mathbb{C}^{2kd}) \rightarrow \B(\mathbb{C}^{n+1}) \subseteq \B(\mathbb{C}^{2kd})$ be defined as:
\begin{equation}\label{eq:channel_out_of_LGP}
    \forall Y \in \B(\mathbb{C}^{2kd}),~\Psi(Y) = \bigg(\sum_{i=1}^n \langle \widetilde{F}_i, Y \rangle e_i\bigg)e_{n+1}^T + e_{n+1}\bigg(\sum_{i=1}^n \langle \widetilde{F}_i, Y \rangle e_i\bigg)^T = \widetilde{\F}^*(Y)e_{n+1}^T + \widetilde{\F}^*(Y)^Te_{n+1}.
\end{equation}
Note that $\Psi$ is linear over $\mathbb{C}$, so that we can turn it into the dual of a unital channel $\Phi$ by shifting and renormalizing.
Let $\Phi^*_\alpha$ be defined as: 
\begin{equation}
    \forall Y \in \B(\mathbb{C}^{2kd}), \quad \Phi_\alpha^*(Y) := \alpha \Psi(Y) + \frac{\tr(Y)}{2kd}I_{2kd},
\end{equation}
with $\alpha > 0$. For $\alpha$ small enough, $\Phi_\alpha^*$ is completely positive (cf. \cite{watrous.2018}). Furthermore, as $\widetilde{\F}^*(I_{2kd}) = 0$, $\Phi_\alpha^*$ is unital, so that, for $\alpha$ small enough, $\Phi_\alpha$ is a channel. Moreover, it is easy to see that the output of $\Psi$ is always traceless, so that $\Phi^*_\alpha$ is trace preserving. Thus, its dual $\Phi_\alpha$ is also unital.
Finally, remark that, for $Y \in \text{Herm}(\mathbb{C}^{2kd})$, 
\begin{align}\label{eq:2norm_vs_difference_eigenvalues}
    (\lambda_{\max} - \lambda_{\min})(\Phi_{\alpha}^{*}(Y)) &= (\lambda_{\max} - \lambda_{\min})(\alpha \Psi(Y)) \\
    &\overset{(a)}{=} 2\alpha \|(\langle \widetilde{F}_i,Y\rangle)_{i \in [n]}\|_{\ell_2}\label{eq:lambda_to_norm} \\
    &= 2\alpha \|\widetilde{\F}^*(Y)\|_{\ell_2} \\
    &= 4k\alpha \|\F^*(Y)\|_{\ell_2},
\end{align}
where $(a)$ follows from the fact that any operator $A = be^* + eb^*$ with $e \perp b$ satisfies $\lambda_{\max}(A) = - \lambda_{\min}(A) = 2\|b\|_{\ell_2}\|e\|_{\ell_2}$. Note that because $\widetilde{F}_i$ and $Y$ are Hermitian, the vector $\sum_{i=1}^n \langle \widetilde{F}_i, Y \rangle e_i$ is real so its transpose and complex conjugate are the same.
Thus, combining the last system of equations with both \cref{eq:operator_norm_dual} and \cref{lemma:dual}, we have that:

\[
\|\F\|_{\mathbb{R}^n\to S_1^{H,d}} = 2k \alpha \cdot \eta_{\tr}(\Phi_{\alpha}),
\]
which ends the reduction and shows by \cref{thm:hardness_NCLGP} that it is \NP-hard to approximate $\eta_{\tr}(\Phi_\alpha)$ to within a factor $1/\sqrt{2} + \varepsilon$ for $\varepsilon > 0$.

Let us now move to the second point of Theorem~\ref{thm:NP_contraction_coef} about quantum-classical channels. The proof uses exactly the same techniques as the proof of the first point. 
In fact, given $\F : \R^{n} \to \ell_1^d(\R)$, we write as before $\F(e_i) = f_i \in \R^d$ for $\{e_i\}_{i \in [n]}$ the canonical basis of $\R^n$. Then, 
\begin{align}
\F(x) = \sum_{i=1}^nx_if_i,\qquad \F^*(y) = \sum_{i=1}^n \langle f_i,y \rangle e_i,
\end{align}
with $\F^* : \ell_{\infty}^d(\R) \rightarrow \R^n$ the dual of $\F$. Similarly to the general quantum case, we set $\tilde{f_i} = \oplus_{s=1}^k\bigl(f_i \oplus (-f_i)\bigr) \in \R^{2kd}$ with $k$ such that $2kd \geq n+1$ and $\tilde{F_i} = \operatorname{diag}(\tilde{f_i})$ the diagonal $2d \times 2d$ matrix whose diagonal entries are the entries of $\tilde{f_i}$. We then define the map $\Psi : \B(\C^{2kd}) \rightarrow \B(\C^{n+1})$:
\begin{equation}
\forall Y \in S^{2kd}_{\infty}, \quad \Psi(Y) = \bigg(\sum_{i=1}^n \langle \tilde{F}_i,Y \rangle e_i \bigg)e_{n+1}^T + e_{n+1}\bigg(\sum_{i=1}^n \langle \tilde{F}_i, Y \rangle e_i \bigg)^T.
\end{equation}
Then, for $\alpha > 0$ small enough, the map $\Phi^*_{\alpha}$ defined as
\begin{equation}
\forall Y \in \B(\C^{2kd}), \quad \Phi^*_{\alpha}(Y) = \alpha \Psi(Y) + \frac{\tr(Y)}{2d}I_{n+1},
\end{equation}
is unital completely positive and trace preserving, so that $\Phi_\alpha$ is a unital channel. Note that, as the $\tilde{F}_i$ are diagonal, $\Phi_\alpha$ is a quantum-classical channel, i.e., its output is diagonal for all inputs. Finally, using \cref{lemma:dual},
\begin{align}
\eta_{\tr}(\Phi_\alpha) &= \frac{1}{2}\sup_{Y^* = Y, \|Y\|_\infty \leq 1}(\lambda_{\max} - \lambda_{\min})(\Phi^*_\alpha(Y)) \\
&= \frac{\alpha}{2}\sup_{Y^* = Y, \|Y\|_\infty \leq 1}(\lambda_{\max} - \lambda_{\min})(\Psi(Y)) \\
&\overset{(a)}{=}\alpha \sup_{Y^* = Y,  \|Y\|_\infty \leq 1} \|(\langle \tilde{F}_i, Y\rangle)_{i \in [n]}\|_{\ell_2} \\
&\overset{(b)}{=} \alpha \sup_{y \in \R^{2d}, \|y\|_{\ell_\infty} \leq 1} \|(\langle \tilde{f}_i, y\rangle )_{i\in [n]}\|_{\ell_2} \\
&= 2k\alpha \sup_{y \in \R^{d}, \|y\|_{\ell_\infty} \leq 1} \|(\langle f_i, y \rangle)_{i\in [n]}\|_{\ell_2} \\
&= 2k\alpha\|\F^*\|_{\ell_\infty^{d}(\mathbb{R}) \to \mathbb{R}^n} \\
&= 2k\alpha \|\F\|_{\mathbb{R}^n \to \ell_1^{d}(\mathbb{R})},
\end{align}

Finally by \cref{thm:hardness_CLGP} it is \NP-hard to approximate $\|\F\|_{\mathbb{R}^n \to \ell_1^d(\mathbb{R})}$ to a factor greater than $\sqrt{2/ \pi} + \varepsilon$ for all $\varepsilon > 0$, from which we conclude that it is \NP-hard to approximate $\eta_{\tr}(\Phi_\alpha)$ to the same factors.
\end{proof}

\section{\NP-hardness of the complete case}

In the previous section, we showed that it is \NP-hard to approximate the contraction coefficient of a channel. In this section we strengthen this result, showing it is \NP-hard even in the `complete' case where we want to test whether a channel has contraction coefficient $1$. 

For the proof, we use several characterizations of channels with $\eta_{\tr}(\Phi) = 1$: one in terms of optimisation over separable states and the second one in terms of the confusability graph of $\Phi$. We start by defining the set of separable states.

\begin{definition}[Separable states]
    \begin{equation}\label{eq:sep}  \SEP(\cH \otimes \cH') := \conv \{ \rho \otimes \sigma \mid \rho \in \D(\cH), \sigma \in \D(\cH')\} 
    = \conv\{ uu^{*} \otimes v v^{*} \mid u \in \cH, v \in \cH'\} . \end{equation} 
    For $A \in \B(\cH \otimes \cH')$, 
    \begin{equation} \hsep(A) := \min_{X \in \SEP(\cH \otimes \cH')} \langle A, X \rangle = \min_{u \in \cH, v \in \cH'} \langle A, u u^{*} \otimes v v^{*} \rangle . \end{equation}
\end{definition}

Now, we introduce the confusability graph of a channel.
The \emph{confusability graph} of a classical channel was introduced by Shannon in \cite{Shannon.1956} to study the zero-error  capacity of the channel.
This graph has the set of inputs of the channel as its vertex set and two inputs are connected by an edge if and only if there is a non-zero probability that the channel maps them to the same output. This structure was generalized to quantum channels in~\cite{duan2009super}. 

\begin{definition}[Quantum confusability graph, Eq. 3 in \cite{duan2009super}]
For $\Phi : \B(\cH) \rightarrow \B(\cH')$ defined by $\Phi(x) = \sum_i K_ixK_i^*$, for all $x \in \B(\cH)$, its \emph{quantum confusability graph} is
\begin{equation}\label{eq:def_qg}
G_{\Phi} = \operatorname{Span}\{K_i^*K_j: i,j\}.
\end{equation}
\end{definition}
It is simple to see that $G_{\Phi}$ does not depend on the specific choice of Kraus operators and forms an operator system, i.e., a unital self-adjoint subset of $\mathcal{B}(\mathcal{H})$. In fact, any operator system is the confusability graph of some channel $\Phi$:
\begin{lemma}[Lemma 2 in \cite{duan2009super}]\label{lemma:quantum_graph}
\label{lem:subspace-to-channel}
    For a subspace $S \subseteq \B(\mathbb{C}^n)$, there is a channel $\Phi$ such that $G_{\Phi} = S$ if and only if $S$ is self-adjoint (i.e., if $x \in S$ then $x^* \in S$) and contains the identity.

    Furthermore, if $S$ is self-adjoint and contains the identity, we can choose a Hermitian basis $\{M_1,\dots,M_d\}$ for $S$ and construct a channel $\Phi : \B(\C^n) \to \B(\C^{n'})$ with $n'$ polynomial in $n$ with Kraus operators $\{K_i : 1 \leq i \leq d\}$, satisfying $G_{\Phi} = S$ and, for all $1 \leq i,j \leq d$, $K_i^*K_j = \delta_{i,j}M_i$. In addition, given such a basis of $S$, the Kraus operators of $\Phi$ can be computed in polynomial time in $n$.
\end{lemma}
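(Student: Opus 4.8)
The plan is to check the straightforward implication directly from the Kraus form, and to obtain the converse together with the ``furthermore'' part from a single explicit construction. For the easy direction, suppose $\Phi(x) = \sum_\ell K_\ell x K_\ell^*$ with $\sum_\ell K_\ell^* K_\ell = I$. Then $I = \sum_\ell K_\ell^* K_\ell$ lies in $\operatorname{Span}\{K_\ell^* K_m : \ell,m\} = G_\Phi$, and $G_\Phi$ is self-adjoint because the spanning set is preserved under $(\cdot)^*$, as $(K_\ell^* K_m)^* = K_m^* K_\ell$. Hence $S = G_\Phi$ forces $S$ to be a unital self-adjoint subspace.

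For the converse and the quantitative statement, assume $S$ is self-adjoint with $I \in S$, and let $d = \dim S$. Since $S$ is self-adjoint it has a Hermitian basis, and the first step is to upgrade this to a Hermitian basis $\{M_1,\dots,M_d\}$ whose elements are positive semidefinite and satisfy $\sum_{i=1}^d M_i = I$. To do so, take a Hermitian basis $\{G_1 = I, G_2,\dots,G_d\}$ of $S$ and, after subtracting multiples of $I$, assume $\tr(G_j) = 0$ for $j \ge 2$; then for a suitably small $\delta > 0$ put $M_i = \tfrac1d I + \delta G_i$ for $i \ge 2$ and $M_1 = \tfrac1d I - \delta\sum_{j=2}^d G_j$. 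By design $\sum_i M_i = I$; each $M_i$ is a small Hermitian perturbation of $\tfrac1d I$ and hence positive semidefinite once $\delta \le (d\max_i\|G_i\|_\infty)^{-1}$; and the family still spans $S$, since $M_i - \tfrac1d I = \delta G_i$ recovers $G_i$ for $i \ge 2$ while $I = \sum_i M_i$ recovers $G_1$, so $d$ vectors span the $d$-dimensional space $S$.

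The second step realises the relations $K_i^* K_j = \delta_{i,j} M_i$. Write $M_i = W_i W_i^*$ for some $n \times m_i$ matrix $W_i$ with $m_i = O(n)$, let $P_i$ be the isometry embedding $\C^{m_i}$ into the $i$-th summand of $\bigoplus_{k=1}^d \C^{m_k}$, so the $P_i$ have mutually orthogonal ranges, and set $K_i := P_i W_i^*$, a map $\C^n \to \C^{n'}$ with $n' = \sum_k m_k = O(n^3)$. Then $K_i^* K_j = W_i\, P_i^* P_j\, W_j^* = \delta_{i,j} W_i W_i^* = \delta_{i,j} M_i$; in particular $\sum_i K_i^* K_i = \sum_i M_i = I$, so $\Phi(x) := \sum_i K_i x K_i^*$ is a quantum channel, and $G_\Phi = \operatorname{Span}\{K_i^* K_j\} = \operatorname{Span}\{M_i\} = S$. (Absorbing the normalisation into one extra Kraus operator would also work but would not keep the number of Kraus operators equal to $d$, which is why the perturbation above is used.)

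Finally, for the polynomial-time claim, extracting a Hermitian basis, normalising its traces, and choosing $\delta$ are all elementary rational linear algebra; the one step deserving a remark is the factorisation $M_i = W_i W_i^*$, since a rational positive semidefinite matrix need not have a rational square root, but it does admit a rational \emph{rectangular} factor, computable in polynomial time from a rational $LDL^*$ decomposition of $M_i$ together with writing each non-negative rational pivot as a sum of four rational squares. I therefore expect the only points needing genuine care to be (i) arranging that the positive semidefinite basis sums to $I$, handled by the perturbation, and (ii) keeping the construction exact over the rationals; the conceptual heart --- that using one block per basis element forces $K_i^* K_j = \delta_{i,j} M_i \in S$ and hence pins $G_\Phi$ to exactly $S$ --- is immediate once the convenient basis is in hand.
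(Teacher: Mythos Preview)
The paper does not actually prove this lemma; it cites \cite{duan2009super} for the statement and adds a remark that the ``furthermore'' clause is contained in the proof there. So there is no in-paper argument to compare against, only the referenced construction.

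Your proof is correct and is essentially the standard construction one expects from \cite{duan2009super}: first massage an arbitrary Hermitian basis of $S$ into one consisting of positive semidefinite operators summing to $I$ (your perturbation $M_i=\tfrac1d I+\delta G_i$ does exactly this and is the clean way to keep the Kraus count equal to $d=\dim S$), then place a rectangular factor $W_i$ of each $M_i$ into its own orthogonal block so that $K_i^*K_j=\delta_{ij}M_i$ is forced. The observations that this yields $n'=O(n^3)$ and that a rational $LDL^*$ plus four-squares avoids irrational square roots are appropriate refinements for the polynomial-time claim. Nothing is missing.
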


\begin{remark}
    The second paragraph of the previous Lemma is not in the statement of Lemma 2 of \cite{duan2009super} but is explicitly shown in its proof in \cite{duan2009super}.
\end{remark}

\begin{remark}
Note also that given $S \subseteq \B(\C^n)$ self-adjoint containing the identity, any channel $\Phi$ such that $G_{\Phi} = S$ and $K_i^*K_j = \delta_{i,j}M_i$ as in the second part of the previous lemma can \emph{not} be chosen unital, unless $S = \B(\C^n)$, in which case one can trivially take $\Phi = \text{\normalfont Id}_n(\cdot)$. Therefore, although we have shown in \cref{thm:NP_contraction_coef} that it is \NP-hard to approximate $\eta_{\tr}(\Phi)$ for $\Phi$ promised to be unital, we do not show that it is hard to decide whether $\eta_{\tr}(\Phi) = 1$ for such a channel. 
\end{remark}

\begin{theorem} \label{thm:contraction_coeff_equal_1}
    Let $\Phi : \B(\cH) \to \B(\cH')$ be a quantum channel, then the following are equivalent
    \begin{enumerate}
        \item $\eta_{\tr}(\Phi) = 1$;
        \item there exists a rank-one matrix $x$ such that $x \in G_{\Phi}^{\perp}$;
        \item $\hsep(J(\Phi^{*} \circ \Phi)) = 0$.
    \end{enumerate}
\end{theorem}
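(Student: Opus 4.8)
The plan is to route all three conditions through the single auxiliary statement

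\smallskip
\noindent$(\star)$\quad\emph{there exist unit vectors $u,v\in\cH$ with $\tr\!\big(\Phi(uu^{*})\Phi(vv^{*})\big)=0$.}
\smallskip

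The computational backbone is the identity $\tr\!\big(\Phi(uu^{*})\Phi(vv^{*})\big)=\sum_{i,j}|\langle K_{i}u,K_{j}v\rangle|^{2}$, valid for any Kraus form $\Phi(x)=\sum_i K_i x K_i^{*}$, obtained by expanding both channels and using cyclicity of the trace. Since $\Phi(uu^{*})$ and $\Phi(vv^{*})$ are positive, this quantity is always nonnegative and vanishes exactly when $\Phi(uu^{*})$ and $\Phi(vv^{*})$ have orthogonal supports, equivalently when $\langle K_i u, K_j v\rangle = 0$ for all $i,j$. Two consequences will be used repeatedly. First, the vanishing condition $\langle K_i u, K_j v\rangle = 0$ for all $i,j$ is exactly the statement $uv^{*}\in G_{\Phi}^{\perp}$, since $\langle K_i^{*}K_j, uv^{*}\rangle = \langle K_j v, K_i u\rangle$ and $G_\Phi$ is spanned by the $K_i^{*}K_j$. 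Second, by trace preservation $\sum_i K_i^{*}K_i = I$, condition $(\star)$ forces $\langle u,v\rangle = \sum_i \langle K_i u, K_i v\rangle = 0$, so the witnessing inputs are automatically orthogonal.

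To prove $(1)\Leftrightarrow(\star)$, I would invoke \cref{lemma:Ruskai}: $\eta_{\tr}(\Phi)=\max_{u\perp v}\tfrac12\lVert\Phi(uu^{*})-\Phi(vv^{*})\rVert_{1}$, with the maximum over orthogonal unit vectors. Since $\Phi(uu^{*})$ and $\Phi(vv^{*})$ are density operators, the triangle inequality gives $\lVert\Phi(uu^{*})-\Phi(vv^{*})\rVert_{1}\le 2$ (in particular $\eta_{\tr}(\Phi)\le 1$ always), with equality iff the two states have orthogonal supports, i.e. iff $\tr(\Phi(uu^{*})\Phi(vv^{*}))=0$. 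Hence $\eta_{\tr}(\Phi)=1$ iff such an orthogonal pair exists, and by the second consequence above the orthogonality constraint on $u,v$ is automatic, so this is precisely $(\star)$.

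For $(2)\Leftrightarrow(\star)$: a rank-one matrix is $x=uv^{*}$ with $u,v$ unit vectors after rescaling, and by the first consequence above $x\in G_{\Phi}^{\perp}$ iff $\langle K_i u, K_j v\rangle=0$ for all $i,j$ iff $(\star)$. For $(3)\Leftrightarrow(\star)$: the map $\Lambda:=\Phi^{*}\circ\Phi$ is completely positive (composition of completely positive maps), so $J(\Lambda)$ is positive; expanding $\ket{\psi^{+}}\bra{\psi^{+}}$ in the computational basis yields $\langle J(\Lambda),uu^{*}\otimes vv^{*}\rangle=\tfrac1{d_{\cH}}\tr\!\big(\Lambda(\bar u\bar u^{*})\,vv^{*}\big)=\tfrac1{d_{\cH}}\langle\Phi(\bar u\bar u^{*}),\Phi(vv^{*})\rangle$, where $\bar u$ is the entrywise conjugate of $u$. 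As $u$ ranges over all unit vectors so does $\bar u$, hence $\hsep(J(\Lambda))=\tfrac1{d_{\cH}}\min_{u,v}\tr(\Phi(uu^{*})\Phi(vv^{*}))\ge 0$, which equals $0$ iff $(\star)$.

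I expect the main obstacles to be the two non-routine points: keeping track of the transpose/conjugate in the Choi-state computation for $(3)$, so that minimizing $\langle J(\Phi^{*}\circ\Phi),uu^{*}\otimes vv^{*}\rangle$ genuinely corresponds to minimizing $\langle\Phi(uu^{*}),\Phi(vv^{*})\rangle$; and the observation that a pair of inputs with orthogonal outputs is automatically an orthogonal pair, which is exactly what lets the orthogonality-constrained optimization of \cref{lemma:Ruskai} line up with the unconstrained existence statements $(2)$ and $(3)$. Everything else — the trace-norm triangle inequality, and the Kraus-form manipulations — is bookkeeping.
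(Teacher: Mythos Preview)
Your proof is correct. The paper itself does not supply a proof of this theorem: it simply records that the result follows from \cite{duan2009super} and points to \cite[Theorem SI.11]{Singh.2024} for a synthesis. Your argument --- routing all three conditions through the auxiliary statement $(\star)$ via the identity $\tr\!\big(\Phi(uu^{*})\Phi(vv^{*})\big)=\sum_{i,j}|\langle K_{i}u,K_{j}v\rangle|^{2}$ --- is a clean self-contained proof, and the two points you flagged as non-routine (the conjugate in the Choi-state computation for $(3)$, and the automatic orthogonality $u\perp v$ when the outputs are orthogonal) are handled correctly.
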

 This result follows from \cite{duan2009super}; see \cite[Theorem SI.11]{Singh.2024} for a synthesis of related characterizations.

We now introduce the \NP-hard problem that we will embed into a channel $\Phi$ such that ``yes'' instances are mapped to channels satisfying the equivalent properties in \cref{thm:contraction_coeff_equal_1} and ``no'' instances are mapped to channels that do not. The problem we use is Graph $2$-CSP.

\begin{definition}[Graph 2-CSP, see Definition 3 in \cite{LeGall.2012}]
A \emph{constraint graph} $G = (V, E)$ is an undirected graph (possibly with self-loops) along with a set $\Sigma$ of ``colours" and a mapping $R_e : \Sigma \times \Sigma \rightarrow \{0,1\}$ for each edge $e = (v,u) \in E$ (called the constraint to $e$). A mapping $\tau : V \rightarrow \Sigma$ (called a \emph{colouring}) satisfies the constraint $R_e$ if $R_e(\tau(v), \tau(u)) = 1$ for an edge $e = (v,u)$ in $E$. The graph $G$ is said to be \emph{satisfiable} if there is a colouring $\tau$ that satisfies all the constraints, while $G$ is said to be $(1-\eta)$-unsatisfiable if for all colourings $\tau$, the fraction of constraints satisfied by $\tau$ is at most $1-\eta$.
\end{definition}

\begin{theorem} \label{thm:2CSP_Hardness}
    There is a universal constant $\gamma > 0$ such that, given a Graph 2-CSP instance promised to be either (1) satisfiable, or (2) $(1-\gamma)$-unsatisfiable, it is \NP-hard to distinguish between the two cases. 
\end{theorem}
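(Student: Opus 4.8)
The plan is to derive this from the PCP theorem in its gap (constraint-satisfaction) form. Recall that the PCP theorem implies there is a constant $\varepsilon_0 > 0$ such that it is \NP-hard to distinguish a satisfiable 3SAT instance from one in which every Boolean assignment satisfies at most a $(1-\varepsilon_0)$-fraction of the clauses; see Arora--Safra, Arora--Lund--Motwani--Sudan--Szegedy, or Dinur's gap-amplification proof. It therefore suffices to give a polynomial-time reduction from this gap-3SAT problem to Graph 2-CSP that sends satisfiable instances to satisfiable instances and $(1-\varepsilon_0)$-unsatisfiable instances to $(1-\gamma)$-unsatisfiable instances, for a constant $\gamma$ depending only on $\varepsilon_0$.

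First I would use the standard clause/variable incidence construction. Given a 3SAT instance $\varphi$ with variables $x_1,\dots,x_N$ and clauses $C_1,\dots,C_M$, build the constraint graph $G$ whose vertex set is $\{x_1,\dots,x_N\}\cup\{C_1,\dots,C_M\}$ and whose colour alphabet is the fixed set $\Sigma=\{0,1,2,3\}$, with intended meaning that a variable vertex carries a Boolean value in $\{0,1\}$ and a clause vertex carries a ``pointer'' in $\{1,2,3\}$ to one of its three literals. For each clause $C$ and each of its three variables $x$, add an edge between $C$ and $x$, whose constraint $R_e$ is satisfied iff the value at $x$ lies in $\{0,1\}$, the value at $C$ lies in $\{1,2,3\}$, and, whenever the pointer at $C$ selects the literal of $C$ involving $x$, the Boolean value at $x$ makes that literal true. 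This is an undirected graph (no self-loops), the alphabet size is an absolute constant, and the construction is clearly polynomial time.

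Then I would check completeness and soundness. For completeness, a satisfying assignment $a$ of $\varphi$ yields a colouring assigning $a(x_i)$ to each variable vertex and, to each clause vertex, a pointer to some literal that $a$ satisfies; every edge constraint holds, so $G$ is satisfiable. For soundness, suppose a colouring $\tau$ satisfies more than a $(1-\gamma)$-fraction of the $3M$ edges, and let $a$ be the induced Boolean assignment. If a clause $C$ has all three incident edges satisfied, then in particular the edge to the variable its pointer selects is satisfied, which forces $a$ to satisfy $C$; hence the number of clauses unsatisfied by $a$ is at most the number of clauses incident to a violated edge, which is at most the number of violated edges, i.e.\ less than $3\gamma M$. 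So $a$ satisfies more than a $(1-3\gamma)$-fraction of the clauses, and choosing $\gamma:=\varepsilon_0/4$ gives $1-3\gamma>1-\varepsilon_0$, contradicting $(1-\varepsilon_0)$-unsatisfiability of $\varphi$. Contrapositively, $(1-\varepsilon_0)$-unsatisfiable instances map to $(1-\gamma)$-unsatisfiable Graph 2-CSP instances, which completes the reduction with $\gamma=\varepsilon_0/4$.

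I do not expect a genuine obstacle: this is a textbook consequence of the PCP theorem, and in fact the statement is essentially Dinur's formulation of it for constraints on the edges of a constant-alphabet graph, so one could alternatively cite that directly. The only points needing care are bookkeeping ones — keeping the alphabet size and the gap $\gamma$ absolute constants independent of $\varphi$, including the domain-enforcement clauses so that a single fixed $\Sigma$ suffices, and matching the precise definition of \emph{constraint graph} used here (undirected, self-loops permitted, fixed $\Sigma$, constraints indexed by edges), all of which the incidence construction satisfies.
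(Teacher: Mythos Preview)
Your argument is correct: the clause/variable incidence reduction from gap-3SAT is the standard way to obtain this statement, and your completeness/soundness analysis goes through (the one place to be slightly more explicit is that when you form the ``induced Boolean assignment'' $a$ you should set variables coloured outside $\{0,1\}$ arbitrarily---this is harmless, since any clause with all three edges satisfied already forces its three variables into $\{0,1\}$, and all other clauses are absorbed into the $3\gamma M$ slack).

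The paper's own proof is simply the one-line citation ``Follows from Theorem 3 of \cite{LeGall.2012}.'' So you have supplied a self-contained derivation where the paper defers to the literature; the underlying fact is the same (and, as you note, it is essentially Dinur's constraint-graph formulation of the PCP theorem), so there is no substantive difference in approach beyond the level of detail.
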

\begin{proof}
    Follows from Theorem 3 of \cite{LeGall.2012}.
\end{proof}

In order to embed Graph $2$-CSP into the confusability graph of a quantum channel, we will follow the ideas of \cite{BlierTapp} and \cite{LeGall.2012} relating to languages with short \emph{quantum proofs}. 

\begin{definition}
    A promise problem $(L,\overline{L})$ (i.e., $L,\bar{L}$ disjoint subsets of $\{0,1\}^*$) is in $\QMA_{\log}(2,a,b)$ if there is a polynomial $p(m)$ and a polynomial-time classical verification algorithm $\mathcal{V}$ that on input $x$ of length $m$ prepares a quantum circuit $\cV(x)$ acting on $O(\log m)$ qubits such that for any $m$ and any instance of size $m$, we have
    \begin{itemize}
        \item If $x \in L$, there are $u, v \in \mathbb{C}^{p(m)}$ unit vectors such that 
        \[
        \tr(\Pi_{\cV(x)} uu^* \otimes vv^*) \geq a
        \]
        where $\Pi_{\cV(x)}$ is the acceptance projector of the verification circuit $\cV(x)$.
        \item If $x \in \overline{L}$, for any unit vectors $u, v \in \mathbb{C}^{p(m)}$, we have
        \[ \tr(\Pi_{\cV(x)}  uu^* \otimes vv^*) \leq b .  \]
    \end{itemize}
\end{definition}

    The main result of \cite{LeGall.2012} relates Graph $2$-CSP to the above class.

    \begin{theorem}
    \label{thm:2csp-qma}
        For the constant $\gamma > 0$ given in \cref{thm:2CSP_Hardness},
        let $L$ be the set of satisfiable Graph $2$-CSP instances, and $\overline{L}$ be the set of $(1-\gamma)$-unsatisfiable Graph $2$-CSP instances. Then $(L,\overline{L}) \in \QMA_{\log}(2,1,1-\Omega(1/n))$. In addition, the protocol is such that $p(m) = O(m)$.
    \end{theorem}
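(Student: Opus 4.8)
The plan is to follow the Blier--Tapp encoding~\cite{BlierTapp} together with the quantitative soundness analysis of~\cite{LeGall.2012}. Given a Graph $2$-CSP instance $G = (V,E)$ with colour alphabet $\Sigma$ and constraints $\{R_e\}_{e \in E}$, write $N = |V|$; since $N$ and $|\Sigma|$ are polynomially bounded in the instance size $n$, we have $\log N + \log|\Sigma| = O(\log n)$. I would have the verifier $\cV(x)$ expect each of its two proofs on the $O(\log n)$-qubit space $\cH := \C^{V}\otimes\C^{\Sigma}$, with the intended honest proof being two identical copies of the \emph{assignment state} $\ket{\psi_\tau} = \frac{1}{\sqrt{N}}\sum_{v\in V}\ket{v}\ket{\tau(v)}$ associated to a colouring $\tau : V\to\Sigma$. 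On input these two proofs, $\cV$ tosses coins and, each with some constant probability, runs one of a constant number of polynomial-time tests: (i) a \emph{swap test} between the two copies, accepting $uu^*\otimes vv^*$ with probability $\tfrac12(1+|\langle u,v\rangle|^2)$ and hence forcing the two proofs to be nearly equal; (ii) a \emph{consistency test} that measures the vertex registers of both copies in the computational basis and, conditioned on obtaining the same vertex $w$, measures the two colour registers and accepts iff they agree; (iii) a \emph{uniformity test} penalising deviation of the vertex marginal from the uniform distribution on $V$; and (iv) a \emph{constraint test} that measures both vertex registers and, conditioned on obtaining an edge $(w,w')\in E$, measures the colour registers to get $(a,b)$ and accepts iff $R_{(w,w')}(a,b)=1$.

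For completeness, if $G$ is satisfiable via $\tau$ then the honest prover sends $\ket{\psi_\tau}^{\otimes 2}$: the two copies are equal, the vertex marginal of $\ket{\psi_\tau}$ is exactly uniform, its colour register is a deterministic function of the vertex, and $\tau$ satisfies every constraint, so each test accepts with probability $1$ and we get $a = 1$. For soundness, suppose $G$ is $(1-\gamma)$-unsatisfiable and fix arbitrary proofs $uu^*\otimes vv^*$ with acceptance probability $1-\delta$. A small $\delta$ forces, via the swap test, that the two proofs are close; via the uniformity test, that the vertex marginal is close to uniform; via the consistency test, that the colour register is close to a deterministic function $\tau : V\to\Sigma$ of the vertex; and via the constraint test, that this rounded $\tau$ satisfies almost all constraints. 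Each such ``closeness'' statement degrades by at most a $\mathrm{poly}(n)$ factor relative to $\delta$, and the point of~\cite{LeGall.2012} is to arrange the tests so that altogether this costs only an $O(n)$ factor; then taking $\delta = c/n$ for a sufficiently small constant $c > 0$ forces the fraction of constraints violated by $\tau$ to be strictly less than $\gamma$, contradicting $(1-\gamma)$-unsatisfiability. Hence the acceptance probability is at most $1-\Omega(1/n)$, which gives $(L,\overline{L})\in\QMA_{\log}(2,1,1-\Omega(1/n))$.

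I expect the main obstacle to be this soundness analysis, for two reasons. First, one must design the four tests so that they \emph{simultaneously} accept every assignment state with probability exactly $1$; the uniformity enforcement is the awkward case, since a naive collision-based test rejects the honest state with probability $\Theta(1/N)$ and so destroys perfect completeness, so a more careful construction is needed. Second, one must control the chain of estimates from ``all tests accept with probability $1-\delta$'' down to ``$\tau$ violates few constraints'' tightly enough that only an $O(n)$ factor is lost overall -- a larger polynomial would yield merely a $1/n^{k}$ gap. Both of these are exactly what~\cite{LeGall.2012} achieves on top of~\cite{BlierTapp}, and I would invoke their construction and accounting to complete the argument.
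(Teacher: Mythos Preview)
The paper does not prove this theorem at all: it is simply stated as ``the main result of~\cite{LeGall.2012}'' and used as a black box in the reduction. Your proposal is therefore more detailed than what the paper provides, and since you correctly outline the Blier--Tapp protocol with the Le~Gall--Nakagawa--Nishimura refinements and explicitly defer the delicate soundness accounting to~\cite{LeGall.2012}, your sketch is in the same spirit and is acceptable.
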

    
    We now describe the steps of the reduction from Graph 2-CSP to the contraction coefficient problem. For improved readability, we state a lemma for each step of the reduction. The first step is given a Graph 2-CSP instance $G$ to construct a projector $\Pi$.
    \begin{lemma}[From $G$ to $\Pi$]
    \label{cor:hardness_sep}
        Given a Graph 2-CSP instance $G$, we can construct in polynomial time a projector $\Pi \in \B(\cH \otimes \cH)$ with $\dim \cH = n$ such that if $G$ is satisfiable then $\hsep(\Pi) = 0$ and if $G$ is $(1-\gamma)$-unsatisfiable $\hsep(\Pi) = \Omega(\frac{1}{n})$.
    \end{lemma}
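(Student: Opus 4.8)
The plan is to use the QMA$_{\log}(2,\cdot,\cdot)$ characterization of Graph 2-CSP from Theorem~\ref{thm:2csp-qma}. Given a Graph 2-CSP instance $G$ of size $n$, run the polynomial-time verifier $\mathcal{V}$ on input $G$ to obtain the verification circuit $\mathcal{V}(G)$ acting on $O(\log n)$ qubits, and let $\Pi$ be its acceptance projector $\Pi_{\mathcal{V}(G)}$. This $\Pi$ acts on a Hilbert space of dimension $n^{O(1)}$; naming one tensor factor $\mathcal{H}$ with $\dim\mathcal{H}=n$ (padding the dimension if necessary so that it is exactly $n$, which does not change the relevant quantities since we can embed the smaller space isometrically), we obtain $\Pi \in \B(\mathcal{H}\otimes\mathcal{H})$. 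The matrix $\Pi$ is computable in polynomial time because $\mathcal{V}(G)$ is a poly-size circuit and its acceptance projector can be written out explicitly in the computational basis in polynomial time.

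Next I would translate the two QMA$_{\log}(2)$ guarantees into statements about $\hsep$. By definition, $\hsep(\Pi) = \min_{u,v} \langle \Pi, uu^*\otimes vv^*\rangle = \min_{u,v} \tr(\Pi \, uu^*\otimes vv^*)$, the minimum over unit vectors $u,v\in\mathcal{H}$. If $G$ is satisfiable, Theorem~\ref{thm:2csp-qma} gives completeness parameter $a=1$, i.e. there exist unit vectors $u,v$ with $\tr(\Pi_{\mathcal{V}(G)}\, uu^*\otimes vv^*) \geq 1$; since $\Pi$ is a projector this forces $\tr(\Pi\, uu^*\otimes vv^*)=1$ — but I want $\hsep(\Pi)=0$, so I should instead work with $I-\Pi$, the rejection projector. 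Redefining $\Pi := I - \Pi_{\mathcal{V}(G)}$ (still a polynomial-time-computable projector on $\mathcal{H}\otimes\mathcal{H}$), the satisfiable case gives $\tr(\Pi\, uu^*\otimes vv^*) = 1 - \tr(\Pi_{\mathcal{V}(G)}\, uu^*\otimes vv^*) \leq 0$, hence $\hsep(\Pi)=0$ (it cannot be negative since $\Pi\succeq 0$). In the $(1-\gamma)$-unsatisfiable case, the soundness parameter is $b = 1-\Omega(1/n)$, so for every unit $u,v$ we have $\tr(\Pi_{\mathcal{V}(G)}\, uu^*\otimes vv^*)\leq 1-\Omega(1/n)$, whence $\tr(\Pi\, uu^*\otimes vv^*) = 1 - \tr(\Pi_{\mathcal{V}(G)}\, uu^*\otimes vv^*)\geq \Omega(1/n)$, and taking the minimum over $u,v$ gives $\hsep(\Pi)=\Omega(1/n)=\Omega(1/\mathrm{poly}(n))$.

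The only remaining care is bookkeeping about dimensions and the tensor-product structure: the verifier's two quantum proofs live in $(\mathbb{C}^2)^{\otimes c\log n}$ each, so the natural bipartition of the $O(\log n)$-qubit space into the two proof registers is exactly the bipartition $\mathcal{H}\otimes\mathcal{H}$ we want, with $\dim\mathcal{H} = 2^{c\log n} = n^{c} =: \mathrm{poly}(n)$; if one insists on $\dim\mathcal{H}=n$ exactly rather than a fixed polynomial in $n$, one simply relabels $n$ (the statement is about $\mathrm{poly}(n)$ anyway) or pads as above. I expect this dimension/padding bookkeeping and the sign convention (working with $I-\Pi_{\mathcal{V}(G)}$ rather than $\Pi_{\mathcal{V}(G)}$) to be the only subtle points; the main conceptual content is entirely carried by Theorem~\ref{thm:2csp-qma}, which is invoked as a black box.
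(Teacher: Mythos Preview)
Your proposal is correct and follows essentially the same route as the paper: take $\Pi = I - \Pi_{\mathcal{V}(G)}$ for the verifier of Theorem~\ref{thm:2csp-qma}, read off $\hsep(\Pi)=0$ from completeness $a=1$ and $\hsep(\Pi)=\Omega(1/n)$ from soundness $b=1-\Omega(1/n)$, and note polynomial-time computability since the circuit acts on $O(\log n)$ qubits. The paper handles the dimension bookkeeping by simply setting $n := \dim\mathcal{H} = m^{c}$ (where $m$ is the instance size) rather than padding, but this is cosmetic.
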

    \begin{proof}
    Let $\cV$ be the verifier coming from \cref{thm:2csp-qma}. Consider an instance $G$ of Graph 2-CSP of size $m$ and 
    we let $\Pi = I - \Pi_{\cV(G)}$, where $\Pi_{\cV(G)} \in \B(\cH \otimes \cH)$ is the acceptance projector on input $G$. 
    Let $\cH = \C^{p(m)}$ and $n := p(m) = O(m)$. 
    Then we know that if $G$ is satisfiable, there exists $u,v \in \cH$ such that $\tr(\Pi_{\cV(G)}\, uu^* \otimes vv^*) = 1$ so $\tr(\Pi\, uu^* \otimes vv^*) = 0$. As a result $\hsep(\Pi) = 0$. If $G$ is $(1-\gamma)$-unsatisfiable, then for any $u,v \in \cH$, $\tr(\Pi_{\cV(G)}\, uu^* \otimes vv^*) \leq 1 - \Omega(\frac{1}{m})$ which can be rewritten as $\hsep(\Pi) = \Omega(\frac{1}{m}) = \Omega(\frac{1}{n})$. In addition, note that as $\cV$ is a polynomial-time classical algorithm and the circuit $\cV(G)$ acts on $O(\log m)$ qubits, $\Pi$ can be computed in polynomial time.
    \end{proof}
    
    The next step is to construct a subspace $S$ from $\Pi$ such that $\hsep(\Pi)$ is related to the distance between rank-one matrices and $S$. 
    For that, it is useful to introduce a correspondence between $\C^{p \times q}$ and $\C^p \otimes \C^q$. Let $\{\ket{i} : i\in [p]\}$ and $\{\ket{j} : j \in [q]\}$ be fixed orthonormal bases of $\C^p$ and $\C^q$, then we define the linear map $\Wvec : \C^{p \times q} \to \mathbb{C}^p \otimes \mathbb{C}^q$ as
\begin{equation}\label{eq:def_vec}
\Wvec(\ketbra{i}{j}) := \ket{i}\otimes \ket{j}.
\end{equation}
This map is an isometry in the sense that $\langle x,y\rangle = \langle \Wvec(x), \Wvec(y) \rangle$.
Note that an operator $x \in \mathbb{C}^{p \times q}$ is rank one if and only if $x = uv^*$ with $u \in \mathbb{C}^p\backslash\{0\}$, $v \in \mathbb{C}^q\backslash\{0\}$ and thus if and only if $\Wvec(x) = u \otimes \bar{v}$ with $\bar{v}$ the complex conjugate of $v$ in the basis $\{\ket{j} : j \in [q]\}$. 

    \begin{lemma}[From $\Pi$ to $S$]
    \label{lem:from-pi-to-s}
        Let $\Pi \in \B(\cH \otimes \cH)$ be an orthonormal projection and $K$ be the kernel of $\Pi$. We construct the subspace $S$ of $\B(\cH)$ as $S := \Wvec^{-1}(K)$. We have 
        \[ \langle \Pi, u u^{*} \otimes v v^{*} \rangle = \operatorname{dist}(u \otimes v, K)^{2} = \operatorname{dist}(u v^{T}, S)^{2} ,    \]
with
\[
\operatorname{dist}(x,K) := \min_{y \in K} \|x-y\|_{\ell_2},\quad \operatorname{dist}(x,S) := \min_{y \in S} \|x - y\|_2.
\] 
    \end{lemma}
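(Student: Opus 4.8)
The plan is to prove the two claimed equalities in sequence, starting from the rightmost quantity and working back. Both rest on the single fact that $\Pi$ is the orthogonal projection onto $K^\perp$, so that for any vector $w$ in the ambient space, $\langle \Pi, ww^*\rangle = \langle w, \Pi w\rangle = \|\Pi w\|_{\ell_2}^2 = \operatorname{dist}(w, K)^2$, the last equality being the defining property of orthogonal projection onto a subspace (the distance from $w$ to $K$ is the norm of its component orthogonal to $K$). Applying this with $w = u \otimes v$ immediately gives $\langle \Pi, uu^* \otimes vv^*\rangle = \operatorname{dist}(u\otimes v, K)^2$, using that $(u\otimes v)(u\otimes v)^* = uu^* \otimes vv^*$.

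For the second equality, I would invoke the isometry property of $\Wvec$ stated just before the lemma: $\langle x, y\rangle = \langle \Wvec(x), \Wvec(y)\rangle$, hence $\|x\|_2 = \|\Wvec(x)\|_{\ell_2}$ for all $x \in \B(\cH)$. Since $\Wvec$ is a linear bijection and $S = \Wvec^{-1}(K)$, it maps $S$ onto $K$; therefore
\[
\operatorname{dist}(uv^T, S) = \min_{y \in S} \|uv^T - y\|_2 = \min_{y \in S} \|\Wvec(uv^T) - \Wvec(y)\|_{\ell_2} = \min_{z \in K} \|\Wvec(uv^T) - z\|_{\ell_2} = \operatorname{dist}(\Wvec(uv^T), K).
\]
It remains only to identify $\Wvec(uv^T)$ with $u \otimes v$. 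Here I would use the observation recorded right before the lemma: $uv^T = u(\bar v)^*$, so by the rank-one characterization there, $\Wvec(uv^T) = u \otimes \overline{(\bar v)} = u \otimes v$. (Equivalently, one checks directly on the basis: $\Wvec\big(\sum_{i,j} u_i v_j \ket{i}\bra{j}\big) = \sum_{i,j} u_i v_j \ket{i}\otimes\ket{j} = u\otimes v$.) Combining, $\operatorname{dist}(uv^T, S) = \operatorname{dist}(u\otimes v, K)$, and squaring chains all three expressions together.

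I do not expect any genuine obstacle here; this is essentially a bookkeeping lemma. The only point requiring a moment's care is the transpose-versus-conjugate convention: the lemma statement uses $uv^T$ (not $uv^*$), and one must make sure the convention for $\Wvec$ in \eqref{eq:def_vec} together with the identity $\Wvec(x) = u \otimes \bar v$ for $x = uv^*$ produces $u \otimes v$ and not $u \otimes \bar v$ when fed $uv^T = u(\bar v)^*$. Getting this consistent is what makes the rank-one separable states $uu^* \otimes vv^*$ on the $\Pi$ side correspond to the rank-one matrices $uv^T$ on the $S$ side, which is precisely the correspondence the next steps of the reduction will exploit.
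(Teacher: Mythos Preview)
Your proposal is correct and follows essentially the same approach as the paper: both use that $\Pi$ projects onto $K^\perp$ so that $\langle \Pi, ww^*\rangle = \operatorname{dist}(w,K)^2$ (the paper does this via an explicit orthonormal basis of the range of $\Pi$, you via the projection identity directly), and then both transfer the distance through the isometry $\Wvec$. If anything, you are slightly more explicit than the paper about why $\Wvec(uv^T) = u\otimes v$ rather than $u\otimes \bar v$, which the paper leaves implicit.
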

\begin{proof}
    Let $\{a_i \in \cH \otimes \cH : i\}$ be an orthonormal basis of the support of $\Pi$ so that $\Pi = \sum_{i} a_i a_i^*$. Then
    \begin{align*}
        \langle \Pi, u u^{*} \otimes v v^{*} \rangle
        &= \sum_{i} \langle a_ia_i^* , u u^{*} \otimes v v^{*} \rangle \\
        &= \sum_{i}  |\langle a_i, u \otimes v \rangle|^2 \\
        &= \dist(u \otimes v, K)^2 \\
        &= \dist(\Wvec^{-1}(u \otimes v), \Wvec^{-1}(K))^2,
    \end{align*}
    which gives the desired result. 
\end{proof}

The constructed subspace $S$ has an orthogonal complement that is not necessarily the confusability graph of a quantum channel. But we can construct a subspace $\widehat{S}$ that has this property and that behaves in the same way as $S$.

\begin{lemma}[From $S$ to $\widehat{S} = G_{\Phi}^{\perp}$]
\label{lemma:space_extension}
    Let $S \subseteq \mathbb{C}^{n \times m}$ be a subspace, then 
        \begin{equation}
        \widehat{S}: = E_{01}\otimes S + E_{10}\otimes S^* = \left\{ \begin{pmatrix} 0 & A \\ B^* & 0\end{pmatrix} : A,B \in S\right\} \subseteq \B(\mathbb{C}^{n+m})
        \end{equation}
    satisfies $(a)~\widehat{S}^{\perp} = \widehat{S}^{\perp *}$, $(b)~ I_{n+m} \in \widehat{S}^{\perp}$ and $(c)~$ there is a constant $C > 0$ such that
    \begin{equation}
    \label{eq:minrank-s-hats}
    \inf_{\substack{x \in \B(\C^{n+m}) \\ \rank(x) = 1, \|x\|_2 = 1}} \dist(x, \widehat{S}) \leq \inf_{\substack{x \in \C^{n \times m} \\ \rank(x) = 1, \|x\|_2 = 1}} \dist(x, S) \leq 
    C \inf_{\substack{x \in \B(\C^{n+m}) \\ \rank(x) = 1, \|x\|_2 = 1}} \dist(x, \widehat{S}).        
    \end{equation}    
    Furthermore, let $(s^\perp_1, \dots,s^\perp_d)$ be an orthonormal basis of $S^\perp$, with $d = \dim(S^{\perp})$, 
    let $(h^{(1)}_1,\dots,h_{n^2}^{(1)})$ and $(h^{(2)}_1,\dots,h^{(2)}_{m^2})$ be, respectively, a Hermitian orthonormal basis of $\B(\mathbb{C}^n)$ and $\B(\mathbb{C}^m)$, let
    \[
    \begin{aligned}
    B_1 &= \Bigl\{ \begin{pmatrix} h^{(1)}_p & 0 \\ 0 & 0 \end{pmatrix} : 1 \leq p \leq n^2\Bigr\} \\
    B_2 &= \Bigl\{\begin{pmatrix} 0 & 0 \\ 0 & h_p^{(2)} \end{pmatrix} : 1 \leq p \leq m^2\Bigr\}\\
    B_3 &= \Bigl\{ \frac{1}{\sqrt{2}}\begin{pmatrix} 0 & s^\perp_p \\ s^{\perp *}_{p} & 0 \end{pmatrix} : 1 \leq p \leq d\Bigr\} \\
    B_4 &= \Bigl\{\frac{1}{\sqrt{2}} \begin{pmatrix} 0 & i s_p^\perp \\ -is^{\perp *}_p & 0  \end{pmatrix}: 1 \leq p \leq d\Bigr\}.
    \end{aligned}
    \]
    Then, 
    \[
    B_{\widehat{S}^\perp} = B_1 \cup B_2 \cup B_3 \cup B_4
    \]
    is a Hermitian orthonormal basis of $\widehat{S}^\perp$.
\end{lemma}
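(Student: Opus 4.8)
\emph{Plan.} The plan is to first write down $\widehat{S}^{\perp}$ explicitly in block form; this makes (a), (b) and the claimed basis routine bookkeeping, and reduces the quantitative statement (c) to a short case analysis. Write a generic element of $\B(\C^{n+m})$ as $\left(\begin{smallmatrix} C & A \\ D & E\end{smallmatrix}\right)$ with $C \in \B(\C^{n})$, $E \in \B(\C^{m})$, $A \in \C^{n\times m}$, $D \in \C^{m \times n}$; the Hilbert--Schmidt inner product is the sum of the four block-wise inner products. Pairing against the generic element $\left(\begin{smallmatrix} 0 & A' \\ B'^{*} & 0\end{smallmatrix}\right)$ of $\widehat{S}$ (with $A', B' \in S$) and letting first $A'$ and then $B'$ range over $S$ gives
\[
\widehat{S}^{\perp} = \left\{ \begin{pmatrix} C & A \\ D & E \end{pmatrix} : C \in \B(\C^{n}),\ E \in \B(\C^{m}),\ A \in S^{\perp},\ D \in (S^{*})^{\perp} \right\},
\]
using $(S^{*})^{\perp} = (S^{\perp})^{*}$. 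Property (a) follows at once, since taking the adjoint swaps the two off-diagonal blocks and interchanges $S^\perp$ with $(S^\perp)^*$; property (b) follows since $I_{n+m}$ has zero off-diagonal blocks and $0$ lies in every subspace.

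\emph{The explicit basis.} From the displayed description, $\widehat{S}^\perp$ is the $\C$-span of (i) all matrices supported on the top-left block, (ii) all matrices supported on the bottom-right block, and (iii) all $\left(\begin{smallmatrix} 0 & A \\ D & 0\end{smallmatrix}\right)$ with $A \in S^\perp$, $D \in (S^\perp)^*$. A Hermitian orthonormal basis of $\text{Herm}(\C^{k})$ is also a $\C$-basis of $\B(\C^{k})$, so $B_1$ spans (i) and $B_2$ spans (ii); and for fixed $p$ the two matrices of $B_3\cup B_4$ indexed by $p$ span $\left(\begin{smallmatrix} 0 & \C s_p^\perp \\ \C s_p^{\perp*} & 0\end{smallmatrix}\right)$ over $\C$ (solve a $2\times 2$ system), so $B_3\cup B_4$ spans (iii). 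Hermiticity of each listed matrix is checked directly (for $B_4$ one uses $\bar{i} = -i$ and $(s_p^{\perp*})^* = s_p^\perp$). Orthonormality reduces to: $B_1$ and $B_2$ sit in disjoint diagonal blocks, hence are orthonormal and orthogonal to each other and to $B_3\cup B_4$ (which sit in the off-diagonal blocks); within $B_3$ and within $B_4$ orthonormality follows from $\langle s_p^\perp, s_q^\perp\rangle = \delta_{pq}$ together with $\langle X^*, Y^*\rangle = \overline{\langle X,Y\rangle}$; and every element of $B_3$ is orthogonal to every element of $B_4$ because the two off-diagonal contributions appear with opposite signs of $i$ and cancel. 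Since $|B_{\widehat{S}^\perp}| = n^2 + m^2 + 2\dim(S^\perp) = (n+m)^2 - 2\dim S = \dim\widehat{S}^\perp$, the orthonormal set $B_{\widehat{S}^\perp}$ is a basis.

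\emph{The comparison (c).} Write $r(S) := \inf\{\dist(x,S) : x \in \C^{n\times m},\ \rank x = 1,\ \|x\|_2 = 1\}$ and likewise $r(\widehat{S})$. For the left inequality, given such an $x = uv^*$, the matrix $\hat{x} := (u\oplus 0)(0\oplus v)^* = \left(\begin{smallmatrix} 0 & x \\ 0 & 0\end{smallmatrix}\right)$ is rank one with $\|\hat{x}\|_2 = 1$, and since $\left(\begin{smallmatrix} 0 & A \\ 0 & 0\end{smallmatrix}\right)\in\widehat{S}$ for all $A \in S$ we get $\dist(\hat{x},\widehat{S}) \le \dist(x,S)$; taking infima gives $r(\widehat{S}) \le r(S)$. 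For the right inequality, take any rank-one $\hat{x} = \hat{u}\hat{v}^*$ with $\|\hat{u}\| = \|\hat{v}\| = 1$ and split $\hat{u} = (u_0,u_1)$, $\hat{v} = (v_0,v_1)$ along the blocks. The orthogonal projection onto $\widehat{S}$ acts block-wise (the diagonal blocks $u_0v_0^*,\,u_1v_1^*$ of $\hat{x}$ lie in $\widehat{S}^\perp$, the off-diagonal blocks project onto $S$ and $S^*$), so, using $\dist(u_1v_0^*,S^*) = \dist(v_0u_1^*,S)$,
\[
\dist(\hat{x},\widehat{S})^2 = \|u_0\|^2\|v_0\|^2 + \|u_1\|^2\|v_1\|^2 + \dist(u_0 v_1^*,S)^2 + \dist(v_0 u_1^*,S)^2 .
\]
Set $a = \|u_0\|,\ b = \|u_1\|,\ c = \|v_0\|,\ d = \|v_1\|$ (so $a^2+b^2 = c^2+d^2 = 1$) and use $\dist(uw^*,S) \ge \|u\|\,\|w\|\,r(S)$ to obtain $\dist(\hat{x},\widehat{S})^2 \ge a^2c^2 + b^2d^2 + \big((ad)^2 + (bc)^2\big)r(S)^2$. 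If $ad \ge \tfrac12$ or $bc \ge \tfrac12$, the last term alone gives $\dist(\hat{x},\widehat{S})^2 \ge \tfrac14 r(S)^2$. Otherwise $ad,bc < \tfrac12$, and since $a^2+b^2 = 1$ one of $a,b$ is $\ge 1/\sqrt2$; if $a \ge 1/\sqrt2$ then $d = ad/a < 1/\sqrt2$, hence $c^2 = 1-d^2 > 1/2$ and $a^2c^2 > 1/4$ (symmetrically with $b,d,c$ if $b \ge 1/\sqrt2$), and as $r(S) \le 1$ this again gives $\dist(\hat{x},\widehat{S})^2 \ge \tfrac14 r(S)^2$. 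In all cases $\dist(\hat{x},\widehat{S}) \ge \tfrac12 r(S)$, so $r(S) \le 2\,r(\widehat{S})$ and $C = 2$ works.

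The only step beyond block-matrix bookkeeping is this last inequality, whose content is that a rank-one matrix of $\B(\C^{n+m})$ cannot be much closer to $\widehat{S}$ than every rank-one matrix of $\C^{n\times m}$ is to $S$: if the off-diagonal blocks of $\hat{x}$ do not carry almost all of its norm then the diagonal blocks contribute a constant amount of ``energy'' that $\widehat{S}$ cannot absorb, while if they do, the off-diagonal distances to $S$ and $S^*$ dominate. I expect this case distinction to be the main obstacle; everything else follows mechanically from the explicit description of $\widehat{S}^\perp$.
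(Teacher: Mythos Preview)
Your proof is correct. Parts (a), (b), the first inequality in (c), and the explicit basis are handled essentially as in the paper. For the right inequality in (c) your route differs from the paper's: the paper works with a general block matrix $\left(\begin{smallmatrix} a & b \\ c & d\end{smallmatrix}\right)$, observes that $\dist(\cdot,\widehat{S})^2 = \|a\|_2^2 + \|d\|_2^2 + \dist(b,S)^2 + \dist(c,S^*)^2$, and argues that if this quantity $\delta^2$ is below $1/4$ then one off-diagonal block (say $b$) carries norm at least $\sqrt{(1-\delta^2)/2}$, so $b/\|b\|_2$ is a unit-norm element with $\dist(b/\|b\|_2,S) \le 2\delta$ and $\rank(b) \le \rank\left(\begin{smallmatrix} a & b \\ c & d\end{smallmatrix}\right)$. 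This is a \emph{constructive} argument that extracts an explicit near-element of $S$ from $\hat{x}$, and it never uses the rank-one structure beyond the final rank bound on $b$ (so it would equally well compare minima over rank-$\leq k$ matrices). Your argument instead exploits the factorisation $\hat{x} = \hat{u}\hat{v}^*$ to write all four blocks as rank-one outer products, invokes the definition of $r(S)$ directly via $\dist(uw^*,S) \ge \|u\|\,\|w\|\,r(S)$, and finishes with a short case split on the block norms. Both yield the same constant $C = 2$; your version is a bit more streamlined for the rank-one case actually needed here, while the paper's is slightly more general and produces a witness.
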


\begin{proof}
Points $(a)$ and $(b)$ are easy. For $(c)$, we have
    \begin{equation}
    \dist\bigl(\begin{pmatrix} 0 & x \\ 0 & 0\end{pmatrix}, \widehat{S} \bigr) = \dist(x,S),
    \end{equation}
    and this proves that the first inequality in~\eqref{eq:minrank-s-hats} holds. 
    For the second one, let 
    \begin{equation}
        \begin{pmatrix}
            a & b \\
            c & d
        \end{pmatrix} \in \B(\mathbb{C}^{n +m}) \text{ with } \Bigl\| \begin{pmatrix} a & b \\ c & d \end{pmatrix} \Bigr\|_2 = 1,
    \end{equation}
    and let 
    \[
    \delta^2 :=
    \dist \Bigl( \begin{pmatrix} a & b \\ c & d\end{pmatrix}, \widehat{S}\Bigr)^2 
    = \|a\|_2^2 + \|d\|_2^2 + \dist(b, S)^2 + \dist(c,S^*)^2.
    \]
    We will show that there exists $x \in \C^{n \times m}$ with $\| x \|_2 = 1$ such that $\rank(x) \leq \rank\Big(\begin{pmatrix} a & b \\ c & d\end{pmatrix}\Big)$ and $\dist(x, S) \leq 2 \delta$. 
    If $\delta \geq 1/2$, then the property clearly holds by taking an arbitrary $x$ of rank $1$ and using the fact that $0 \in S$. Now assume $\delta < 1/2$.
    The normalization condition
    can be written as
    \[
    \|a\|_2^2 + \|b\|_2^2 + \|c\|_2^2 + \|d\|^2_2 = 1.
    \]
    Therefore,
    \[1 - (\|b\|_2^2 + \|c\|_2^2) + \dist(b,S)^2 + \dist(c,S^*)^2 = \delta^2.\]
    As the distances between $b$, $c$ and $S$, $S^*$ are non-negative, we end up with:
    \[
    \max\{\|b\|_2^2, \|c\|_2^2\} \geq \frac{\|b\|_2^2+\|c\|_2^2}{2} \geq \frac{1-\delta^2}{2}.
    \]
    We may assume without loss of generality that $\|b\|_2 \geq \|c\|_2$ (otherwise we exchange the role of $b$ and $c$). Using additionally $\dist\left(b,S\right) \leq \delta < 1/2$ we get
    \[\dist\left(\frac{b}{\|b\|_2}, S\right) \leq \frac{\delta}{\sqrt{\frac{1-\delta^2}{2}}} \leq \sqrt{\frac{8}{3}}\delta \leq 2\delta\]
    and we have that $\rank(b) \leq \rank\Big(\begin{pmatrix} a & b \\ c & d \end{pmatrix}\Big)$. This proves the claimed statement and by taking the infimum over $a,b,c,d$ leads to the second inequality in~\eqref{eq:minrank-s-hats}.

    We now move on to the explicit construction of an orthonormal basis of $\widehat{S}^{\perp}$. Observe that 
    \begin{equation}\label{eq:perp_S_hat}
     \widehat{S}^\perp = E_{00} \otimes \B(\mathbb{C}^n) + E_{01} \otimes S^\perp + E_{10}\otimes S^{\perp *} + E_{11} \otimes \B(\mathbb{C}^m).
     \end{equation}  
    In fact, let $x \in \B(\mathbb{C}^{n+m})$ be written as $x = E_{00} \otimes x_{00} + E_{01} \otimes x_{01} + E_{10} \otimes x_{10} + E_{11} \otimes x_{11}$,
\[
\begin{aligned}
x \in \widehat{S}^\perp &\iff \forall y \in \widehat{S}, \langle x, y \rangle = 0 \\
&\iff \forall y,z \in S, \langle E_{00} \otimes x_{00} + E_{01} \otimes x_{01} + E_{10} \otimes x_{10} + E_{11} \otimes x_{11}, E_{01} \otimes y + E_{10} \otimes z^* \rangle = 0\\
&\iff \forall y,z \in S, \langle x_{00},0 \rangle + \langle x_{11}, 0 \rangle + \langle x_{01}, y \rangle + \langle x_{10}, z^* \rangle = 0 \\
&\iff x \in E_{00} \otimes \B(\mathbb{C}^n) + E_{01} \otimes S^\perp + E_{10}\otimes S^{\perp *} + E_{11} \otimes \B(\mathbb{C}^m),
\end{aligned}
\]
where, in the last equivalence, we used the fact that we could choose $y = 0$ (resp. $z = 0$) to conclude that, necessarily, for all $z \in S$, $\langle x_{10}, z^*\rangle = 0$ (resp. for all $y \in S$, $\langle x_{01},y \rangle = 0$).
Now to construct a basis of this subspace, note first that for all $n \in \mathbb{N}$, we can construct an orthonormal Hermitian basis for $\B(\mathbb{C}^n)$. It suffices to take 
\[
\left\{\frac{1}{\sqrt{2}}(E_{p,q} + E_{q,p}), \frac{1}{\sqrt{2}}i(E_{p,q} - E_{q,p}) : 1 \leq p < q \leq n\right\} \cup \{E_{p,p}: 1 \leq p \leq n\}. 
\]

Then, we can easily check that $B_{\widehat{S}^\perp}$ is a Hermitian orthonormal basis of $\widehat{S}^\perp$, i.e. that $\operatorname{Span}(B_{\widehat{S}^\perp}) = \widehat{S}^\perp$, all $x \in B_{\widehat{S}^\perp}$ are Hermitian and for all $x,y \in B_{\widehat{S}^\perp}$, $\langle x,y \rangle = \delta_{x = y}$.
\end{proof}

Now $\widehat{S}^{\perp}$ is an operator system and so by \cref{lem:subspace-to-channel}, there exists a quantum channel $\Phi$ such that $G_{\Phi} = \widehat{S}^{\perp}$. By \cref{thm:contraction_coeff_equal_1}, the property $\eta_{\tr}(\Phi) = 1$ exactly captures the property that $\widehat{S}$ contains a rank-one element. To establish \cref{thm:NP_complete_case}, it only remains to quantitatively control some of the steps in the reduction.

    \begin{proof} [Proof of \cref{thm:NP_complete_case}]

    We start with an instance $G$ from Graph 2-CSP. As in \cref{cor:hardness_sep}, we construct $\Pi \in \B(\C^n \otimes \C^n)$ and then a subspace $S$ of $\B(\C^n)$ as in \cref{lem:from-pi-to-s}. Then, we construct $\widehat{S}$ a subspace of $\B(\C^{2n})$ as in \cref{lemma:space_extension} and then let $\Phi : \B(\C^{2n}) \to \B(\C^{\mathrm{poly}(n)})$ be such that $G_{\Phi} = \widehat{S}^{\perp}$ as per \cref{lem:subspace-to-channel}. Note that each step of the construction can be done in polynomial time. 

    For $G$ satisfiable, we have $\hsep(\Pi) = 0$, so $S$ contains a rank-one matrix, so $\widehat{S} = G_{\Phi}^{\perp}$ contains a rank-one matrix and so $\eta_{\tr}(\Phi) = 1$.

    For $G$ $(1-\gamma)$-unsatisfiable, we have $\hsep(\Pi) = \Omega(\frac{1}{n})$ (\cref{cor:hardness_sep}). Then by \cref{lem:from-pi-to-s}, we have $\inf_{x, \rank(x) = 1, \|x\|_2 = 1} \dist(x,S)^2 = \Omega(\frac{1}{n})$. Using \cref{lemma:space_extension}, we get $\inf_{x, \rank(x) = 1, \|x\|_2 = 1} \dist(x,G_{\Phi}^{\perp})^2 = \Omega(\frac{1}{n})$. By \cref{thm:contraction_coeff_equal_1}, this implies $\eta_{\tr}(\Phi) < 1$, but we want a quantitative bound.

    By \cref{lemma:space_extension}, $\widehat{S}^\perp$ admits a Hermitian orthonormal basis. Let $\{s_1, \dots,s_l\}$ be an orthonormal basis of $\widehat{S}$ and $\{s^\perp_1,\dots,s^\perp_{l'}\}$ be a Hermitian orthonormal basis of $\widehat{S}^\perp$. Then, every rank one operator $uv^*$ can be decomposed as: 
\[
uv^* = \sum_{i=1}^l \langle  s_i,uv^* \rangle s_i + \sum_{j=1}^{l'}\langle s^\perp_j,uv^* \rangle s_j^\perp.
\]
As $\dist(uv^*,\widehat{S}) = \Omega\left(\frac{1}{\sqrt{n}}\right)$, we have
\[
\begin{aligned}
\Omega\left(\frac{1}{\sqrt{n}}\right) &\leq \Big\|uv^* - \sum_{i=1}^l \langle s_i,uv^* \rangle s_i\Big\|_2 \\
&= \Big\|\sum_{j=1}^{l'}\langle s^\perp_j,uv^* \rangle s_j^\perp\Big\|_2 \\
&= \sqrt{\sum_{j=1}^{l'} |\langle s^\perp_j, uv^* \rangle|^2} \\
&\leq \sqrt{l'} \max_{j}|\langle  s_j^\perp, uv^* \rangle|.
\end{aligned}
\]

Thus there exists $\tilde{j}$ such that $|\langle s^\perp_{\tilde{j}}, uv^* \rangle| \geq \frac{1}{\sqrt{l'}}\Omega\left(\frac{1}{\sqrt{n}}\right)$. As $l' \leq \dim(\widehat{S}^\perp) \leq \dim(\B(\mathbb{C}^{2n})) = 4n^2$ we have $|\langle s^\perp_{\tilde{j}}, uv^* \rangle| \geq \Omega\left(\frac{1}{n\sqrt{n}}\right)$.

As we explained in the previous paragraph, $\widehat{S}^\perp$ is the quantum confusability graph of some channel $\Phi$ with Kraus operators $\{K_i\}_i$. By the second paragraph of \cref{lemma:quantum_graph}, we can choose the Kraus operators $\{K_i\}_i$ so that, for all $i,j$, $K_i^*K_j = \delta_{i,j} s^\perp_i$. In particular, we have 
\begin{equation}\label{eq:control_Hermitian}
|\langle  K^*_{\tilde{j}}K_{\tilde{j}}, uv^* \rangle| =|\langle  s^\perp_{\tilde{j}}, uv^* \rangle | \geq \Omega\left(\frac{1}{n\sqrt{n}}\right).
\end{equation}
Consider $u,v \in \mathbb{C}^{2n}$ such that 
\[
\eta_{\tr}(\Phi) = \frac{1}{2}\|\Phi(uu^*) - \Phi(vv^*)\|_1,
\]
which exist as per \cref{lemma:Ruskai}. We now use the Fuchs-van de Graaf inequality \cite{Fuchs.1999} to upper-bound the trace distance between $\Phi(uu^*)$ and $\Phi(vv^*)$. For two states $\rho, \sigma$, we write $F(\rho,\sigma) = \|\sqrt{\rho}\sqrt{\sigma}\|_1^2$ the fidelity between $\rho$ and $\sigma$, the Fuchs-van de Graaf inequality then states that:
\[
\frac{1}{2}\|\rho - \sigma\|_1 \leq \sqrt{1 - F(\rho,\sigma)}.
\]
Therefore, we have:

\[
\eta_{\tr}(\Phi) \leq \sqrt{1- F(\Phi(uu^*),\Phi(vv^*))}.
\]
Then, 
\[
\begin{aligned}
F(\Phi(uu^*), \Phi(vv^*)) &= \|\sqrt{\Phi(uu^*)}\sqrt{\Phi(vv^*)}\|_1^2 \\
&\geq \|\sqrt{\Phi(uu^*)}\sqrt{\Phi(vv^*)}\|_2^2 \\
&= \Big\langle \sqrt{\Phi(uu^*)}\sqrt{\Phi(vv^*)}, \sqrt{\Phi(uu^*)}\sqrt{\Phi(vv^*)} \Big\rangle \\
&= \left\langle \Phi(uu^*), \Phi(vv^*)\right\rangle \\
&= \Big\langle \sum_i K_i uu^* K_i^*, \sum_j K_j vv^* K_j^* \Big\rangle \\
&= \sum_{i,j} |\langle  K_i^*K_j, uv^* \rangle|^2 \\
&\geq |\langle K_{\tilde{j}}^*K_{\tilde{j}}, uv^* \rangle|^2 \\
&\geq \Omega\left(\frac{1}{n^3}\right),
\end{aligned}
\]
where the last inequality follows from \cref{eq:control_Hermitian}. Finally,
\[\eta_{\tr}(\Phi) \leq \sqrt{1 - \Omega\left(\frac{1}{n^3}\right)} \leq  1 - \Omega\left(\frac{1}{n^3}\right).\]
\end{proof}

\subsection{Quantum Doeblin Coefficients}
\label{sec:quantum_doeblin}

The trace norm (or total-variation) contraction coefficient of a classical channel is upper bounded by the Doeblin coefficient~\cite{doeblin1937proprietes}. In \cite[Theorem 8.17]{Wolf}, a quantum version of Doeblin's theorem was proposed. Quantum Doeblin coefficients were defined and studied in a systematic way more recently in \cite{Hirche2024.2, Hirche2024.1}. 

Recall that for any channel $\Phi$, we denote $A$ its input system, $B$ its output system, $d_A$ and $d_B$ their respective dimension and $J(\Phi)$ the Choi state of $\Phi$.
\begin{definition}[Corollary 3.7 in \cite{Hirche2024.1}]\label{def:QDC}
    Given a channel $\Phi$, the quantum Doeblin coefficient of $\Phi$ is defined as
    \begin{equation}
    \alpha(\Phi) := \sup_{X_B \in \textup{Herm}(B)} \bigl\{\tr[X_B] : I_A \otimes X_B \preceq d_A J(\Phi) \bigr\},
    \end{equation}
    where, for matrices $X$, $Y$, $X \preceq Y$ if and only if $Y-X$ is positive semidefinite. 

\end{definition}
Furthermore, a relaxation of the quantum Doeblin coefficient, called the \emph{induced Doeblin coefficient}, was also proposed in \cite{Hirche2024.2}.

\begin{definition}[Induced Doeblin coefficient, Proposition 7 in \cite{Hirche2024.2}]\label{def:induced_doeblin}
    Let $\alpha_I(\Phi)$, the \emph{induced Doeblin coefficient}, be defined as 
    \begin{equation}
    \alpha_I(\Phi) = \max_{\begin{aligned} d_AJ(\Phi) - & I_A \otimes X_B \in \textup{Sep}^*(A:B),\\ &X_B \in \textup{Herm}(B) \end{aligned}} \tr(X_B),
    \end{equation}
    where $\textup{Sep}^*(A:B)$ is the dual of the cone of separable operators on the compound system $(A, B)$, which corresponds to the cone of block-positive operators. 
\end{definition}

The induced Doeblin coefficient is \emph{not} easy to compute but gives an upper bound on $\alpha(\Phi)$ by Proposition 5 of \cite{Hirche2024.2}.  
Both the Doeblin coefficient and the induced Doeblin coefficient give upper bounds on the trace norm contraction coefficient.

\begin{proposition}[Lemma 3.2 in \cite{Hirche2024.1} and Equation (6.34) in \cite{Hirche2024.2}]
    Let $\Phi$ be a channel, then
        \begin{equation}
        \eta_{\tr}(\Phi) \leq 1 - \alpha_{I}(\Phi) \leq 1 - \alpha(\Phi).
        \end{equation}
\end{proposition}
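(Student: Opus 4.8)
The plan is to use the Doeblin witness to peel off a state-replacing (constant) part of $\Phi$, observe that this constant part is invisible to \emph{differences} $\Phi(\rho)-\Phi(\sigma)$, and note that what remains rescales to an honest quantum channel, which therefore contracts the trace norm. Concretely, fix any feasible $X_B\in\textup{Herm}(B)$ in the program of \cref{def:QDC}, so that $I_A\otimes X_B\preceq J(\Phi)$ and $\tr[X_B]=\alpha$, and let $\mathcal M$ be the linear map with Choi operator $I_A\otimes X_B$, i.e. the replacer-type map $\mathcal M(\rho)=\tr[\rho]\,X_B$ (so $\mathcal M(\rho)=X_B$ for every density operator $\rho$). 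Set $\mathcal N:=\Phi-\mathcal M$; its Choi operator is $J(\Phi)-I_A\otimes X_B\succeq 0$, so by Choi's theorem $\mathcal N$ is completely positive.

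Next I would compute the trace of $\mathcal N$ on states: for any $\rho\in\D$, $\tr[\mathcal N(\rho)]=\tr[\Phi(\rho)]-\tr[X_B]=1-\alpha$. If $\alpha=1$ this forces $\mathcal N=0$ (a CP map with identically zero output trace is zero), so $\Phi=\mathcal M$ is constant and $\eta_{\tr}(\Phi)=0=1-\alpha$; otherwise $\mathcal C:=\tfrac{1}{1-\alpha}\mathcal N$ is completely positive and trace-preserving, hence a quantum channel. Now for any two states $\rho,\sigma$ the constant part cancels, $\Phi(\rho)-\Phi(\sigma)=\mathcal N(\rho)-\mathcal N(\sigma)=(1-\alpha)\,\mathcal C(\rho-\sigma)$, and quantum channels (indeed all positive trace-preserving maps) do not increase the trace norm: writing $\rho-\sigma$ as the difference of its positive and negative parts, applying $\mathcal C$, and combining the triangle inequality with $\tr[\mathcal C(\cdot)]=\tr[\cdot]$ on positive inputs gives $\|\mathcal C(\rho-\sigma)\|_1\le\|\rho-\sigma\|_1$. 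Hence $\|\Phi(\rho)-\Phi(\sigma)\|_1\le(1-\alpha)\|\rho-\sigma\|_1$; dividing by $\|\rho-\sigma\|_1$ and taking the maximum over $\rho\neq\sigma$ (or, equivalently, invoking \cref{lemma:Ruskai} with $\rho\perp\sigma$, so $\|\rho-\sigma\|_1=2$) yields $\eta_{\tr}(\Phi)\le 1-\alpha$. Taking the supremum over feasible $X_B$, i.e. letting $\alpha\to\alpha(\Phi)$, gives $\eta_{\tr}(\Phi)\le 1-\alpha(\Phi)$.

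I expect the only genuinely delicate points to be bookkeeping: matching the normalization convention for the Choi operator so that the map with Choi operator $I_A\otimes X_B$ is exactly $\rho\mapsto\tr[\rho]\,X_B$ with $\tr[X_B]=\alpha$ (up to an inessential $d_A$ factor between the normalized Choi state of \cref{eq:choi-state} and the convention used in \cite{Hirche2024.1}), together with spelling out that $\tfrac{1}{1-\alpha}\mathcal N$ is a channel and that channels are trace-norm contractions. The conceptual core — that the replacer part of $\Phi$ is annihilated by taking differences of states — is what makes the estimate essentially immediate once the decomposition $\Phi=\mathcal M+\mathcal N$ is in place.
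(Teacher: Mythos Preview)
The paper does not supply its own proof of this proposition; it simply cites Lemma~3.2 of \cite{Hirche2024.1}. Your argument is correct and is exactly the standard Doeblin decomposition underlying that result: peel off the replacer $\mathcal M$ with Choi operator $I_A\otimes X_B$, note that the remainder $\mathcal N=\Phi-\mathcal M$ is completely positive with constant output trace $1-\alpha$ on states, rescale to a channel, and use that channels contract the trace norm while the constant part cancels on differences.

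The normalization caveat you flag is genuine: with the paper's \emph{normalized} Choi state $J(\Phi)$ of \cref{eq:choi-state}, the map with Choi operator $I_A\otimes X_B$ is $\rho\mapsto d_A\tr[\rho]\,X_B$, not $\rho\mapsto\tr[\rho]\,X_B$. Correspondingly, for the identity $\alpha(\D_p)=p$ that the paper quotes later to hold, \cref{def:QDC} must implicitly be read with the unnormalized Choi operator convention of \cite{Hirche2024.1}. As you say, this is pure bookkeeping and, once absorbed, your proof goes through verbatim.
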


Therefore, for any channel $\Phi$, whenever $\alpha_{I}(\Phi) \geq \alpha(\Phi) > 0$, we have $\eta_{\tr}(\Phi) < 1$. It is natural to ask whether the converse is true, i.e., does $\eta_{\tr}(\Phi) < 1$ imply $\alpha_{I}(\Phi) > 0$ or $\alpha(\Phi) > 0$? In \cite[Remark 5]{Hirche2024.2}, this question is answered negatively for $\alpha(\Phi)$. We give another entanglement-breaking example of this fact that also answers the question for $\alpha_{I}(\Phi)$.

\begin{proposition}\label{prop:counter_example_Hirche}
    Let $\cH$ be a Hilbert space of dimension $d > 2$ and $\{\ket{i} : 1 \leq i \leq d\}$ an orthonormal basis of $\cH$. The channel $\Phi : \B(\cH) \rightarrow \B(\cH)$ with Kraus operators $\left\{\frac{1}{\sqrt{d-1}} \ketbra{i}{j}  : i \not= j, 1 \leq i,j \leq d\right\}$ is such that:
    \begin{equation}
    \eta_{\tr}(\Phi) < 1 \text{ but } \alpha(\Phi) = \alpha_{I}(\Phi) = 0.
    \end{equation}
\end{proposition}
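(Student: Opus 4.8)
The plan is to verify the two claims separately, using the characterizations established earlier in the paper. For the claim $\eta_{\tr}(\Phi) < 1$, I would invoke \cref{thm:contraction_coeff_equal_1}: it suffices to show that $G_\Phi^\perp$ contains no nonzero rank-one matrix, equivalently that $G_\Phi$ (the confusability graph) contains no rank-one matrix in its orthogonal complement. First I compute $G_\Phi = \operatorname{Span}\{K_i^* K_j\}$ for the given Kraus operators $K_{(i,j)} = \frac{1}{\sqrt{d-1}}\ket{i}\bra{j}$ with $i \neq j$. We have $K_{(i,j)}^* K_{(k,l)} = \frac{1}{d-1}\ket{j}\bra{i}\,\ket{k}\bra{l} = \frac{\delta_{ik}}{d-1}\ket{j}\bra{l}$, which is nonzero only when $i = k$; ranging over all admissible indices one finds $G_\Phi = \operatorname{Span}\{\ket{j}\bra{l} : \exists\, i \neq j,\ i \neq l\}$. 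Since $d > 2$, for \emph{every} pair $(j,l)$ (including $j = l$) there is an index $i$ distinct from both, so $G_\Phi = \B(\cH)$, hence $G_\Phi^\perp = \{0\}$, which trivially contains no rank-one matrix. By \cref{thm:contraction_coeff_equal_1} this gives $\eta_{\tr}(\Phi) < 1$. (Alternatively one can directly check $\hsep(J(\Phi^* \circ \Phi)) > 0$, but the confusability-graph route is cleanest.)

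For the claim $\alpha(\Phi) = 0$, I would use \cref{def:QDC} and exploit the symmetry of the channel. By inspection, $\Phi$ maps $\ket{j}\bra{j} \mapsto \frac{1}{d-1}\sum_{i \neq j}\ket{i}\bra{i}$, so it is a generalized Pauli-type / permutation-covariant channel; in particular it is covariant under conjugation by permutation unitaries, and its Choi state $J(\Phi)$ inherits this symmetry. Computing $J(\Phi) = (\id \otimes \Phi)(\proj{\psi^+}) = \frac{1}{d(d-1)}\sum_{i\neq j}\ket{j}\ket{i}\bra{j}\bra{i}$, which is diagonal in the computational basis and supported exactly on the off-diagonal terms $\{\ket{j}\ket{i} : i \neq j\}$ with all eigenvalues equal to $\frac{1}{d(d-1)}$. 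Now suppose $I_A \otimes X_B \preceq J(\Phi)$ with $X_B$ Hermitian. Since $J(\Phi)$ is supported on $\operatorname{Span}\{\ket{j}\ket{i} : i \neq j\}$, the operator $I_A \otimes X_B$ must vanish on the complementary subspace $\operatorname{Span}\{\ket{i}\ket{i} : i\}$; testing against $\ket{i}\ket{i}$ forces $\bra{i} X_B \ket{i} \leq 0$ for all $i$ — more carefully, positivity of $J(\Phi) - I_A \otimes X_B$ on the vector $\ket{i}_A\ket{i}_B$ gives $0 - \bra{i}X_B\ket{i} \geq 0$, i.e. $\bra{i}X_B\ket{i} \leq 0$ for every $i$, hence $\tr[X_B] \leq 0$. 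Combined with the trivial feasible point $X_B = 0$ (giving $\tr = 0$), this yields $\alpha(\Phi) = 0$.

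The main obstacle, such as it is, is the second part: one must be careful that the constraint $I_A \otimes X_B \preceq J(\Phi)$ genuinely forces $\tr[X_B] \leq 0$ rather than merely $X_B$ being "small." The argument above works because $J(\Phi)$ has a nontrivial kernel on the product system — specifically the diagonal vectors $\ket{i}_A\ket{i}_B$ lie in $\ker J(\Phi)$ — and testing the semidefinite inequality against exactly these vectors extracts the diagonal entries of $X_B$ with the right sign. I would present this as a short lemma-free computation: compute $J(\Phi)$ explicitly, identify its support, and conclude. One should double-check the normalization of $J(\Phi)$ (the $\frac{1}{d}$ from $\ket{\psi^+}$ times the channel normalization) but the sign argument is insensitive to it. If a cleaner phrasing is desired, note that $\alpha(\Phi) \leq \tr[\Phi^*(\proj{i})^{-\ }\cdots]$-type bounds also apply, but the direct kernel argument is self-contained and transparent.
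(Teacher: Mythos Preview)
Your proposal is correct and follows essentially the same route as the paper: for $\eta_{\tr}(\Phi)<1$ you compute $G_\Phi=\B(\cH)$ (using $d>2$ to find an index distinct from any given pair) and invoke \cref{thm:contraction_coeff_equal_1}; for $\alpha(\Phi)=0$ you compute $J(\Phi)$ explicitly and test the semidefinite constraint against the kernel vectors $\ket{i}_A\ket{i}_B$ to force $\bra{i}X_B\ket{i}\leq 0$, which is exactly what the paper does (phrased there via the block-diagonal decomposition $J(\Phi)-I_A\otimes X_B=\sum_i\ket{i}\bra{i}\otimes(\tfrac{1}{d(d-1)}\sum_{j\neq i}\ket{j}\bra{j}-X_B)$). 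The covariance remarks are unnecessary but harmless.
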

\begin{proof}
We first show that $\eta_{\tr}(\Phi) < 1$. We have 
\[
G_{\Phi} = \operatorname{Span}\{ \ketbra{j}{i} \ketbra{k}{l}  : (i,j,k,l) \in [d]^4, i \not= j, k \not= l\}.
\]
As $d \geq 3$, for all $(j,l) \in [d]^2$, there is an integer $k \in [d]$ such that $i \not= k$, $k \not= l$. Thus, for every $(j,l) \in [d]^2$, $\ketbra{j}{l} \in G_{\Phi}$.
Therefore, $G_\Phi = \B(\cH)$ and thus $G_\Phi^\perp = \{0\}$. Hence, there is no rank one element in $G_\Phi^\perp$ and by \cref{thm:contraction_coeff_equal_1}, $\eta_{\tr}(\Phi) < 1$.

On the other hand, $J(\Phi) = \frac{1}{d(d-1)} \sum_{i \not= j} \proj{i} \otimes \proj{j}$ and thus
\begin{align}
\label{eq:jphi-idx}
d J(\Phi) - I_{A} \otimes X_B &= \sum_{i} \proj{i} \otimes \left(\frac{1}{d-1}\sum_{j \neq i} \proj{j} - X_B\right).
\end{align}
Positivity of this operator means that for any $i$, $0 \preceq \left(\frac{1}{d-1}\sum_{j \neq i} \proj{j} - X_B\right)$, which implies that $\bra{i} X_B \ket{i} \leq 0$ and so $\tr(X_B) \leq 0$. This means $\alpha(\Phi)=0$. 
In fact, the operator in~\eqref{eq:jphi-idx} is block-diagonal and hence block-positivity implies positivity and we also have $\alpha_{I}(\Phi) = 0$.
\end{proof}

\section{Converging hierarchy of efficiently computable upper bounds}\label{sec:sdp_hierarchy}

In this section, we use \cite{Berta.2021} to propose a hierarchy of semidefinite programming upper bounds on $\psucc(\Phi, k)$, which in the special case $k=2$ gives bounds on $\eta_{\tr}(\Phi)$.
Recall that for every channel $\Phi$ and input operator $X_A$, we can express $\Phi(X_A)$ as:
\begin{equation}\label{eq:Choi_to_channel}
\Phi(X_A) = d_A\tr_A(J(\Phi)(X_A^T \otimes I_B)).
\end{equation}

We will introduce copies of the system $B$ that we will denote $B_1, \dots, B_m$. As a shorthand, we write $B_1^m = B_1, \dots, B_m$.
An operator on $AB_1^m$ will be denoted $W_{AB_1^m}$ and we write $W_{B_1^m} = \tr_{A}(W_{AB_1^m})$ and $W_{AB_1^l} = \tr_{B_{l+1}^m}(W_{AB_1^m})$ with $B_{l+1}^m = B_{l+1},\dots,B_m$.  
Furthermore, let $\mathfrak{S}_m$ be the symmetric group on $m$ elements. For $\pi \in \mathfrak{S}_m$, let $\mathcal{U}^\pi_{B_1^m}$ be the unitary which permutes the systems $B_1,\dots,B_m$ according to the permutation $\pi$. Its action on product operators over $B_1^m$ is as follows:
\[\mathcal{U}^\pi_{B_1^m}(W_{B_1} \otimes \dots \otimes W_{B_m}) := W_{B_{\pi(1)}} \otimes \dots \otimes W_{B_{\pi(m)}}.\]
A multipartite operator $W_{AB_1^m}$ on $AB_1^m$ is called \emph{symmetric with respect to $A$} if 
\[
(\text{Id}_A \otimes \mathcal{U}_{B^m_1}^\pi)(W_{AB_1^m}) = W_{AB_1^m},~\forall \pi \in \mathfrak{S}_m.
\]

We use the shorthand $j_1^m$ for $j_1, \dots, j_m$. 

The following \cref{thm:sdp_POVM} gives a hierarchy of semidefinite programs which forms a non-increasing sequence of upper bounds on $\psucc(\Phi,k)$. We will show in \cref{thm:convergence_sdp} that this hierarchy corresponds to the hierarchy proposed in~\cite{Berta.2021} for general constrained bilinear optimisation and thus we can use their convergence result.

\begin{theorem}\label{thm:sdp_POVM}
    For $m,k \in \mathbb{N}$, the semidefinite programs 
\begin{align}
\mathrm{SDP}_m(\Phi,k)=\max &\quad  \begin{cases}  \frac{d_A}{k}\sum_{i=1}^k \tr(J(\Phi)W_{AB_1}^{(i,i)}) &\textup{ if } m=1, \\
\frac{d_A}{kd_B^{m-1}} \sum_{i=1}^k\sum_{j_2^m \in [k]^{m-1}}\tr\big(J(\Phi) W_{AB_1}^{(i,ij_2^m)}\big) &\textup{ else,}\\
\end{cases}\\
\textup{s.t.} 
 &\quad \forall (i,j_1^m) \in [k]^{m+1},~ W^{(i,j_1^m)}_{AB_1^m}\succeq0,\label{eq:cons_positivity} \\
 & \tr(\sum_{(ij_1^m) \in [k]^{m+1}}W_{AB_1^m}^{(i,j_1^m)}) = kd_B^m,\label{eq:cons_normalization}\\
 &\quad \forall\pi\in\mathfrak{S}_m, W^{(i,j_{\pi(1)},\dots, j_{\pi(m)})}_{AB_1^m}=\left(\mathcal{I}_A\otimes\mathcal{U}_{B_1^m}^\pi\right)\left(W^{(i,j_1^m)}_{AB_1^m}\right),\label{eq:cons_symetry_invariance}\\
 &\quad \forall (i,j_1^{m-1}) \in [k]^{m},~\sum_{j=1}^k W_{AB_1^m}^{(i,j_1^{m-1}j)} = \sum_{j=1}^kW_{AB_1^{m-1}}^{(i,j_1^{m-1}j)} \otimes \frac{I_B}{d_B},\label{eq:cons_last_system} \\
 &\quad \forall (i,j_1^m) \in [k]^{m+1},~W_{B_1^m}^{(i,j_1^m)} = \frac{1}{k}\sum_{l=1}^k W_{B_1^m}^{(l,j_1^m)},\label{eq:cons_first_system}
\end{align}
form a non-increasing sequence of upper bounds on $\psucc(\Phi,k)$. We write $\mathrm{SDP}_m(\Phi) = 2 \mathrm{SDP}_m(\Phi,2) - 1$ for the corresponding upper bound on the contraction coefficient. 
\end{theorem}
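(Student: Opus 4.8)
The plan is to establish two facts: first, that each level is a relaxation, $\mathrm{SDP}_m(\Phi,k) \ge \psucc(\Phi,k)$; second, that $\mathrm{SDP}_{m+1}(\Phi,k) \le \mathrm{SDP}_m(\Phi,k)$. The last sentence of the statement is then immediate: by \cref{prop:optimal_succ_contraction}, $\eta_{\tr}(\Phi) = 2\psucc(\Phi,2) - 1 \le 2\,\mathrm{SDP}_m(\Phi,2) - 1 = \mathrm{SDP}_m(\Phi)$, and this sequence inherits the monotonicity in $m$.

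For the relaxation bound, I would take a POVM $\{M_j : j\in[k]\}$ and states $\{\rho_i : i\in[k]\}$ attaining $\psucc(\Phi,k)$ and test the candidate
\[
W^{(i,j_1^m)}_{AB_1^m} := \rho_i^T \otimes M_{j_1} \otimes \cdots \otimes M_{j_m}.
\]
Positivity \eqref{eq:cons_positivity} is clear; normalization \eqref{eq:cons_normalization} uses $\sum_{j} \tr(M_j) = d_B$; the permutation symmetry \eqref{eq:cons_symetry_invariance} holds because reindexing $j_1,\dots,j_m$ permutes the tensor legs $M_{j_1},\dots,M_{j_m}$ in exactly the way $\mathcal{U}^\pi_{B_1^m}$ acts; constraint \eqref{eq:cons_last_system} follows from $\sum_{j}M_j = I_B$ on the left while $\sum_{j}\tr(M_j) = d_B$ cancels the $1/d_B$ on the right; and \eqref{eq:cons_first_system} holds since $W^{(i,j_1^m)}_{B_1^m} = M_{j_1}\otimes\cdots\otimes M_{j_m}$ is independent of $i$. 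Evaluating the objective, I would rewrite $\tr(M_j\Phi(\rho_i)) = d_A\tr\!\big(J(\Phi)(\rho_i^T\otimes M_j)\big)$ via \eqref{eq:Choi_to_channel}; in the $m\ge 2$ formula the summations over the spectator indices $j_2,\dots,j_m$ each contribute a factor $\sum_{j}\tr(M_j) = d_B$, exactly cancelling $d_B^{m-1}$, so the objective reduces to $\frac1k\sum_i\tr(M_i\Phi(\rho_i)) = \psucc(\Phi,k)$ (the $m=1$ case being the same computation with no spectators).

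For monotonicity, given a feasible point $\{W^{(i,j_1^{m+1})}_{AB_1^{m+1}}\}$ of $\mathrm{SDP}_{m+1}(\Phi,k)$, I would pass to level $m$ by discarding the last system and averaging out the last index,
\[
\widetilde W^{(i,j_1^m)}_{AB_1^m} := \frac{1}{d_B}\sum_{j=1}^k \tr_{B_{m+1}}\!\big(W^{(i,j_1^m j)}_{AB_1^{m+1}}\big),
\]
and check the constraints of $\mathrm{SDP}_m(\Phi,k)$. Positivity and \eqref{eq:cons_first_system} (the $i$-independence of the $B$-marginal) are stable under partial traces and sums; \eqref{eq:cons_normalization} holds because tracing preserves trace while the $1/d_B$ turns $kd_B^{m+1}$ into $kd_B^m$; \eqref{eq:cons_symetry_invariance} for $\pi\in\mathfrak{S}_m$ follows by embedding $\pi$ in $\mathfrak{S}_{m+1}$ fixing the index $m+1$, so that $\mathcal{U}^\pi$ commutes with $\tr_{B_{m+1}}$ and restricts to $\mathcal{U}^\pi_{B_1^m}$; and the objective is unchanged since $\widetilde W_{AB_1}$ is exactly $\tfrac1{d_B}\sum_j W^{(i,\cdot,j)}_{AB_1}$, which reassembles the level-$(m+1)$ objective sum. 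The one constraint needing genuine work — and the step I expect to be the main obstacle — is \eqref{eq:cons_last_system} at level $m$: summing the level-$(m+1)$ instance of \eqref{eq:cons_last_system} over $j_m$ shows that $\sum_{j_m,j}W^{(i,j_1^{m-1}j_mj)}_{AB_1^{m+1}}$ is an operator on $AB_1^{m-1}B_m$ tensored with $I_B/d_B$ on system $B_{m+1}$; by the $\mathfrak{S}_{m+1}$-symmetry \eqref{eq:cons_symetry_invariance} this operator is moreover invariant under exchanging $B_m$ and $B_{m+1}$; hence it is likewise of product form with $I_B/d_B$ in the $B_m$ slot, and tracing out $B_{m+1}$ yields exactly \eqref{eq:cons_last_system} for $\widetilde W$. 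This transport of the ``flatness'' constraint from the system $B_{m+1}$ we discard to the new last system $B_m$ is where the permutation symmetry is genuinely used and where one must keep careful track of which index and which system each sum or partial trace acts on; the remaining checks are routine.
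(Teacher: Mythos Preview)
Your relaxation argument is exactly the paper's proof: the paper also plugs in $W^{(i,j_1^m)}_{AB_1^m} = \rho_i^T \otimes M_{j_1} \otimes \cdots \otimes M_{j_m}$ and checks the five constraints and the objective in the same way (including the $d_B^{m-1}$ cancellation from the spectator sums).

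Where you go further is monotonicity. The paper's proof of this theorem only establishes the ``upper bound'' part; it does not argue that $\mathrm{SDP}_{m+1} \le \mathrm{SDP}_m$ at all, presumably leaving it as implicit in the general framework of \cite{Berta.2021} that is invoked afterwards for convergence. Your explicit restriction map $\widetilde W^{(i,j_1^m)} := d_B^{-1}\sum_j \tr_{B_{m+1}} W^{(i,j_1^m j)}$ is the natural one, and your verification is correct --- in particular, the step you flag as the main obstacle (transporting the flatness constraint \eqref{eq:cons_last_system} from $B_{m+1}$ to $B_m$ via the $(m,m+1)$-swap symmetry) is exactly right: the double-summed operator $\sum_{j_m,j} W^{(i,j_1^{m-1}j_m j)}_{AB_1^{m+1}}$ is both of the form $X_{AB_1^m}\otimes I_{B_{m+1}}/d_B$ and invariant under swapping $B_m\leftrightarrow B_{m+1}$, and tracing out $B_{m+1}$ from the swap identity forces $X_{AB_1^m} = \tr_{B_m}(X)\otimes I_{B_m}/d_B$. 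So your proposal is correct and in fact more self-contained than the paper's, at the cost of a short extra computation the paper chose to suppress.
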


Before proving this theorem, we express the success probability $\psucc(\Phi,k)$ defined in \cref{eq:proba_success} using the Choi state of $\Phi$.

\begin{lemma}\label{lemma:proba_succ_Choi}
We have:
\begin{equation}\label{eq:proba_succ_Choi}
\psucc(\Phi,k) = \max_{\begin{aligned} &\{M_i : 1 \leq i \leq k\} \text{ \normalfont POVM},\\
&\{\rho_i \in\D(\cH) : 1 \leq i \leq k\}\end{aligned}}\frac{d_A}{k}\sum_{i=1}^k \tr(J(\Phi) \rho_i^T \otimes M_i).
\end{equation}
\end{lemma}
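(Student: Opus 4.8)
The plan is to start from the definition of $\psucc(\Phi,k)$ in~\eqref{eq:proba_success} and simply rewrite the objective $\tr(M_i \Phi(\rho_i))$ using the Choi-state representation of the channel action given in~\eqref{eq:Choi_to_channel}. Concretely, substituting $\Phi(\rho_i) = d_A \tr_A(J(\Phi)(\rho_i^T \otimes I_B))$ into $\tr(M_i \Phi(\rho_i))$ and pushing the operator $M_i$ (which acts on the output system $B$) inside the partial trace over $A$, one obtains
\[
\tr(M_i \Phi(\rho_i)) = d_A \tr\bigl( (I_A \otimes M_i) \, J(\Phi) \, (\rho_i^T \otimes I_B) \bigr) = d_A \tr\bigl( J(\Phi) \, \rho_i^T \otimes M_i \bigr),
\]
where the last equality is just cyclicity of the trace together with the fact that $\rho_i^T$ acts on $A$ and $M_i$ on $B$ so the two factors commute as operators on $AB$. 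Summing over $i$ and dividing by $k$ turns the original maximization into the claimed one, and since the set over which we optimize (POVMs $\{M_i\}$ and states $\{\rho_i\}$) is unchanged, the two maxima are equal.

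The only things to check carefully are that~\eqref{eq:Choi_to_channel} is indeed valid as stated (this is a standard identity, following from $J(\Phi) = (\mathrm{Id}_A \otimes \Phi)(\proj{\psi^+})$ with $\ket{\psi^+} = \frac{1}{\sqrt{d_A}}\sum_i \ket{ii}$, and the transpose appearing because of the ``ricochet''/transpose trick $\langle i | X | j\rangle = \langle j | X^T | i \rangle$), and that the manipulation of the partial trace is legitimate: for any operator $M_B$ on $B$ and any operator $Z_{AB}$, $\tr(M_B \tr_A(Z_{AB})) = \tr((I_A \otimes M_B) Z_{AB})$, which is immediate from the definition of partial trace. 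I would spell out one line for each of these two facts and then conclude.

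I do not expect any real obstacle here; this lemma is a routine reformulation whose purpose is purely to set up the semidefinite programming hierarchy of \cref{thm:sdp_POVM}, where the variables $W_{AB_1^m}^{(i,j_1^m)}$ will play the role of (symmetric extensions of) the operators $\rho_i^T \otimes M_i$. If anything deserves a word of care, it is keeping the bookkeeping of which system each operator lives on consistent (input $A$ carries $\rho_i^T$, output $B$ carries $M_i$), so that the factor $d_A$ and the transpose land in the right places to match the statement of~\eqref{eq:proba_succ_Choi} and, subsequently, the objective of $\mathrm{SDP}_m(\Phi,k)$.
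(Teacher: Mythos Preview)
Your proposal is correct and follows essentially the same approach as the paper: both start from the definition~\eqref{eq:proba_success}, substitute the Choi-state identity~\eqref{eq:Choi_to_channel}, and use the partial-trace identity $\tr(M_B \tr_A(Z_{AB})) = \tr((I_A \otimes M_B) Z_{AB})$ to arrive at~\eqref{eq:proba_succ_Choi}. The paper's proof is in fact slightly terser than yours, omitting the explicit justification of the partial-trace step that you spell out.
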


\begin{proof}
    We have the following equalities:
\[
\begin{aligned}
\psucc(\Phi,k) &= \max_{\begin{aligned} &\{M_i : 1 \leq i \leq k\} \text{ POVM},\\
&\{\rho_i \in\D(\cH) : 1 \leq i \leq k\}\end{aligned}} \frac{1}{k}\sum_{i=1}^k \tr(M_i \Phi(\rho_i)) \\
&\overset{(a)}{=} \max_{\begin{aligned} &\{M_i : 1 \leq i \leq k\} \text{ POVM},\\
&\{\rho_i \in\D(\cH) : 1 \leq i \leq k\}\end{aligned}} \frac{d_A}{k}\sum_{i=1}^k \tr(M_i \tr_A(J(\Phi)(\rho_i^T \otimes I_B))) \\
&= \max_{\begin{aligned} &\{M_i : 1 \leq i \leq k\} \text{ POVM},\\
&\{\rho_i \in\D(\cH) : 1 \leq i \leq k\}\end{aligned}}\frac{d_A}{k}\sum_{i=1}^k \tr(J(\Phi) \rho_i^T \otimes M_i),
\end{aligned}
\]
where $(a)$ follows from \cref{eq:Choi_to_channel}.
\end{proof}

\begin{proof}[Proof of \cref{thm:sdp_POVM}]
We just have to show that the semidefinite programs defined in the theorem are relaxations of the optimisation problem in the right-hand side of \cref{eq:proba_succ_Choi}. 

It suffices to take, at the level $m \in \mathbb{N}$ of the hierarchy, the variables $W_{AB_1^m}^{(i,j_1^m)}$ to be equal to the product operator $\rho_i^T \otimes \bigotimes_{p=1}^m M_{j_p}$, with $\{\rho_i\}_i$, $\{M_j\}_j$ being the states and POVMs on which the optimisation in \cref{eq:proba_succ_Choi} is done. Then, the positivity and symmetry constrains \cref{eq:cons_positivity} and \cref{eq:cons_symetry_invariance} are trivially satisfied. Furthermore, we have:
\[
\tr\bigg(\sum_{(i,j_1^m) \in [k]^{m+1}} \rho_i^T\bigotimes_{p=1}^m M_{j_p}\bigg) = \bigg(\sum_{i=1}^k \Big(\sum_{j=1}^k \tr(M_j)\Big)^m\bigg) = kd_B^m,
\]
so \cref{eq:cons_normalization} is satisfied. Then, let $(i,j_1^{m-1}) \in [k]^{m}$ be fixed, we have on the one hand:
\[
\begin{aligned}
\sum_{j=1}^k \rho_i^T \bigotimes_{p=1}^{m-1}M_{j_p}\otimes M_j &= \rho_i^T \bigotimes_{p=1}^{m-1}M_{j_p} \otimes \Big(\sum_{j=1}^k M_j \Big) = \rho_i^T \bigotimes_{p=1}^{m-1}M_{j_p} \otimes I_B;
\end{aligned}
\]
and, on the other hand: 
\[
\begin{aligned}
\sum_{j=1}^k \tr_{B_m}\Big(\rho_i^T \bigotimes_{p=1}^{m-1}M_{j_p}\otimes M_j\Big) &= \sum_{j=1}^k \tr(M_j) \rho_i^T \bigotimes_{p=1}^{m-1}M_{j_p} 
= d_B\rho_i^T \bigotimes_{p=1}^{m-1}M_{j_p},
\end{aligned}
\]
so these variables satisfy \cref{eq:cons_last_system}. Finally, for $(i,j_1^m) \in [k]^{m+1}$ fixed, we have:
\[
\begin{aligned}
\tr_A(\rho_i^T \bigotimes_{p=1}^m M_{j_p}) &= \bigotimes_{p=1}^m M_{i_p} \\
&= \frac{1}{k} \sum_{l=1}^k \bigotimes_{p=1}^m M_{i_p} \\
&= \frac{1}{k} \sum_{l=1}^k \tr_A\Big(\rho_l^T \otimes \bigotimes_{p=1}^m M_{i_p}\Big),
\end{aligned}
\]
so that this choice of variables also satisfies \cref{eq:cons_first_system}. 
Then, we have to show that, for this choice of variables, the objective function gives the same value as the objective function in the right-hand side of \cref{eq:proba_succ_Choi}. We present the case $m \geq 2$ as the case $m=1$ is trivial. 
\begin{align*}
    \frac{d_A}{kd_B^{m-1}} \sum_{i=1}^k\sum_{j_2^m \in [k]^{m-1}}\tr\big(J(\Phi) W_{AB_1}^{(i,ij_2^m)}\big) &= \frac{d_A}{kd_B^{m-1}}\sum_{i=1}^{k}\tr(J(\Phi)\sum_{j_2^m \in [k]^{m-1}}\tr_{B_2^m}\Big(\rho_i^T \otimes M_i \bigotimes_{p=2}^m M_{j_p})\Big) \\
    &= \frac{d_A}{kd_B^{m-1}}\sum_{i=1}^k\tr(J(\Phi)(\rho_i^T \otimes M_i))\tr\Big(\sum_{j_2^m \in [k]^{m-1}}\bigotimes_{p=2}^m M_{j_p}\Big) \\
    &=\frac{d_A}{kd_B^{m-1}}\tr(I_B)^{m-1}\sum_{i=1}^k\tr(J(\Phi)(\rho_i^T \otimes M_i)) \\
    &= \frac{d_A}{k}\sum_{i=1}^k\tr(J(\Phi)(\rho_i^T \otimes M_i)).
\end{align*}
\end{proof}
Note that it is clear from the proof that we can add positive partial transpose (PPT) constraints to the SDP. Indeed, for all $(i,j_1, \dots, j_m) \in [k]^{m+1}$, we can additionnaly ask the variables $W_{AB_1^m}^{(i,j_1^m)}$ to satisfy 
\begin{equation}\label{eq:ppt_constraints}
T_A(W_{AB_1^m}^{(i,j_1^m)}) \succeq 0,~T_{B_1}(W_{AB_1^m}^{(i,j_1^m)}) \succeq 0,T_{B_1^2}(W_{AB_1^m}^{(i,j_1^m)}) \succeq 0,\dots,~T_{B_1^{m-1}}(W_{AB_1^m}^{(i,j_1^m)}) \succeq 0,
\end{equation}
where $T_S$ denote the partial transpose on system $S$. Note that products of positive operators satisfy the PPT constraints of \cref{eq:ppt_constraints}, therefore the SDP we obtain by adding \cref{eq:ppt_constraints} is still a relaxation of $\psucc(\Phi,k)$ (\cref{eq:proba_success}). We denote $\mathrm{SDP}_{m}^{\mathrm{PPT}}$ the value of the corresponding SDP when adding the PPT constraints.

In the remainder of this section, we show that the hierarchy of semidefinite programs defined in \cref{thm:sdp_POVM} actually converges to $\psucc(\Phi,k)$. Note that, although semidefinite programs are efficiently computable, the number of variables of $\mathrm{SDP}_m(\Phi,k)$ grows exponentially in $m$, while we prove only a convergence speed in $\text{poly}(d)/\sqrt{m}$, therefore, the convergence of this hierarchy to $\psucc(\Phi,k)$ does neither contradict \cref{thm:NP_contraction_coef} nor \cref{thm:NP_complete_case}.
\begin{theorem}[Convergence of the SDP of \cref{thm:sdp_POVM}]\label{thm:convergence_sdp}
For all $\Phi$, $k$, $m \in \mathbb{N}$, we have 
\begin{equation}
    0 \leq \mathrm{SDP}_m(\Phi,k) - \psucc(\Phi,k) \leq \frac{\mathrm{poly}(d)}{\sqrt{m}},
\end{equation}
with $d = \max\{d_A, d_B\}$ with $A$ the input system of $\Phi$ and $B$ its output system, so that 
\begin{equation}
\psucc(\Phi,k) = \lim_{m\to \infty} \mathrm{SDP}_m(\Phi,k).
\end{equation}
\end{theorem}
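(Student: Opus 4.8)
The plan is to prove the upper bound by identifying the hierarchy of \cref{thm:sdp_POVM} with the constrained bilinear SDP hierarchy of \cite{Berta.2021} and then invoking their convergence analysis; the lower bound $\mathrm{SDP}_m(\Phi,k)\geq \psucc(\Phi,k)$ is already supplied by \cref{thm:sdp_POVM}. First I would record, via \cref{lemma:proba_succ_Choi}, that $\psucc(\Phi,k)$ is a genuine constrained bilinear optimisation: the objective $\tfrac{d_A}{k}\sum_i \tr(J(\Phi)\,\rho_i^T\otimes M_i)$ is linear in the tuple of density operators $(\rho_1,\dots,\rho_k)$ and linear in the tuple of measurement operators $(M_1,\dots,M_k)$, the former ranging over $\D(\cH)$ and the latter over the POVM constraints (positivity together with $\sum_i M_i=I$). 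This is exactly the template to which the hierarchy of \cite{Berta.2021} applies, with one register ($A$) carrying the state and a second register ($B$) carrying the measurement.

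The core of the argument is to check that $\mathrm{SDP}_m(\Phi,k)$ as written is precisely the level-$m$ relaxation of \cite{Berta.2021} for this problem. Their construction introduces $m$ exchangeable copies $B_1,\dots,B_m$ of the measurement register and variables $W^{(i,j_1^m)}_{AB_1^m}$ indexed by a message $i$ and one outcome per copy, subject to: positivity \eqref{eq:cons_positivity}; a global normalisation \eqref{eq:cons_normalization}; permutation invariance of the copies \eqref{eq:cons_symetry_invariance} (the symmetrisation that makes the de Finetti rounding applicable); the consistency relation \eqref{eq:cons_last_system}, which is the relaxed form of POVM completeness $\sum_j M_j=I$ with the trace normalisation absorbed into the $I_B/d_B$ factor; and \eqref{eq:cons_first_system}, which forces the $B_1^m$-marginal to be independent of $i$, i.e.\ encodes that the shared measurement does not depend on the transmitted message. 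I would verify these match \cite{Berta.2021} constraint by constraint, and that the prefactors $d_A$ and $d_B^{m-1}$ in the objective are exactly the ones making the relaxed objective agree with $\psucc(\Phi,k)$ on product feasible points — which is essentially the computation already performed at the end of the proof of \cref{thm:sdp_POVM}. Monotonicity in $m$ then follows by tracing out $B_{m+1}$ from a level-$(m+1)$ solution and using \eqref{eq:cons_last_system} to obtain a feasible level-$m$ solution with the same value.

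With this identification, the upper bound is immediate from the convergence theorem of \cite{Berta.2021}: its proof rounds any feasible point of the level-$m$ relaxation to a genuine bilinear feasible point (a true state tuple and a true POVM) losing at most $\mathrm{poly}(\dim)/\sqrt{m}$ in the objective, via a finite quantum de Finetti / quantum-marginal argument, where the relevant dimensions are those of the registers $A$ and $B$ (the $k$-dependence being controlled as in \cite{Berta.2021}, e.g.\ by bounding the number of outcomes in terms of $d_B$). Taking $d=\max\{d_A,d_B\}$ and combining with the lower bound from \cref{thm:sdp_POVM} yields $0\leq \mathrm{SDP}_m(\Phi,k)-\psucc(\Phi,k)\leq \mathrm{poly}(d)/\sqrt{m}$, and letting $m\to\infty$ gives the stated limit.

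The step I expect to be most delicate is the precise dictionary in the second paragraph: translating the index and marginal conventions of \cref{thm:sdp_POVM} into those of \cite{Berta.2021}, in particular confirming that \eqref{eq:cons_last_system} and \eqref{eq:cons_first_system} are faithful encodings of POVM completeness and of message-independence, and that the $d_B^{m-1}$ normalisation introduces no spurious gap at finite $m$. Once the two SDPs are shown to be literally the same optimisation problem, the quantitative convergence requires no new de Finetti-type estimate and is inherited wholesale from \cite{Berta.2021}.
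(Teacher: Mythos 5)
Your high-level plan coincides with the paper's: show that $\mathrm{SDP}_m$ is an instance of the constrained-bilinear hierarchy of \cite{Berta.2021} and then invoke their $\mathrm{poly}(\dim)/\sqrt{m}$ convergence theorem. The gap is in the step you yourself flag as ``most delicate,'' and it is more than a bookkeeping translation. The framework of \cite{Berta.2021} optimises $\tr\bigl(C\,(W_A\otimes W_B)\bigr)$ over a \emph{single} pair of positive operators $W_A$, $W_B$ subject to linear marginal constraints; it does not directly accept an objective that is bilinear in two \emph{tuples} $(\rho_1,\dots,\rho_k)$ and $(M_1,\dots,M_k)$. Saying ``this is exactly the template'' with ``one register $A$ carrying the state and a second register $B$ carrying the measurement'' glosses over the fact that a $k$-tuple is not one operator. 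The paper's \cref{lemma:P_succ_sdp} is precisely the device that bridges this: it introduces auxiliary classical index registers $\bar{A},\bar{B}$ of dimension $k$, packages the tuples into block-diagonal operators $W_{A\bar{A}}=\tfrac{1}{k}\sum_i \rho_i^T\otimes\proj{i}_{\bar A}$ and $W_{B\bar{B}}=\tfrac{1}{d_B}\sum_i M_i\otimes\proj{i}_{\bar B}$, and uses the unnormalized maximally correlated operator $\Psi^{(k)}_{\bar A\bar B}$ in the cost matrix so that only the diagonal blocks contribute. Only after this encoding does one obtain a bona fide bilinear program with marginal constraints $\tr_A W_{A\bar A}=I_{\bar A}/k$ and $\tr_{\bar B}W_{B\bar B}=I_B/d_B$ to which \cite{Berta.2021} literally applies.

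Once that reformulation is in place, one still has to show that the resulting \cite{Berta.2021}-hierarchy $\widetilde{\mathrm{SDP}}_m$ (with variables $W_{A\bar A(B\bar B)_1^m}$) equals the indexed form $\mathrm{SDP}_m$ from \cref{thm:sdp_POVM}. The paper does this by the explicit change of variables $W_{A\bar A(B\bar B)_1^m}=\tfrac{1}{kd_B^m}\sum_{i,j_1^m}\proj{i}_{\bar A}\otimes\proj{j_1^m}_{\bar B_1^m}\otimes W^{(i,j_1^m)}_{AB_1^m}$ and its inverse, checking constraint for constraint. Your sketch correctly names the ingredients (symmetry, marginal and normalisation constraints, monotonicity via tracing out $B_{m+1}$) but without the $\bar A,\bar B$ encoding and the $\Psi^{(k)}$ cost operator the identification cannot be made literal, and the invocation of \cite{Berta.2021}'s convergence theorem does not go through. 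So the approach is right, but the missing piece is exactly where the technical content of the proof lives.
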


In order to prove the convergence, we express $\psucc(\Phi,k)$ as a constrained bilinear program~\cite{Berta.2021} and show that the SDP hierarchy we propose in~\cref{thm:sdp_POVM} is the same as the general converging one derived in~\cite{Berta.2021}. We write $\Psi^{(k)}_{\bar{A}\bar{B}} = \sum_{i=1}^k \proj{i} \otimes \proj{i}$, with $k = \dim(\bar{A}) = \dim(\bar{B})$ the (unnormalized) maximally correlated operator between systems $\bar{A}$ and $\bar{B}$.
\begin{lemma}\label{lemma:P_succ_sdp}
We can express $\psucc(\Phi,k)$ as the following constrained bilinear program:
\begin{align}
\psucc(\Phi,k) = d_Ad_B&\max \tr\Big((J(\Phi)_{AB} \otimes \Psi^{(k)}_{\bar{A}\bar{B}})(W_{A\bar{A}} \otimes W_{B\bar{B}})\Big), \\
&\textup{s.t.}
 ~ W_{A\bar{A}} \succeq 0,~ W_{B\bar{B}} \succeq 0, \\
&\qquad \tr(W_{A\bar{A}}) = 1,~ \tr(W_{B\bar{B}}) = 1, \\
&\qquad \tr_A(W_{A\bar{A}}) = \frac{I_{\bar{A}}}{k}, \\
&\qquad \tr_{\bar{B}}(W_{B\bar{B}}) = \frac{I_B}{d_B}.
\end{align}
\end{lemma}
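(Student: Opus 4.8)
The plan is to start from the reformulation in Lemma~\ref{lemma:proba_succ_Choi}, namely $\psucc(\Phi,k) = \max \frac{d_A}{k}\sum_i \tr(J(\Phi)\,\rho_i^T\otimes M_i)$ over density operators $\{\rho_i\}$ and POVMs $\{M_i\}$, and to bundle the $k$ states and the $k$ POVM elements into two single block-diagonal operators living on enlarged systems $A\bar A$ and $B\bar B$, where $\dim\bar A=\dim\bar B=k$ indexes the message. Concretely I would set $W_{A\bar A} := \tfrac1k\sum_i \rho_i^T\otimes\ket{i}\bra{i}$ and $W_{B\bar B} := \tfrac1{d_B}\sum_i M_i\otimes\ket{i}\bra{i}$. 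The first step is to check that this map lands inside the feasible set of the claimed SDP: positivity is immediate from $\rho_i,M_i\succeq0$; $\tr(W_{A\bar A}) = \tfrac1k\sum_i\tr(\rho_i)=1$ and $\tr(W_{B\bar B})=\tfrac1{d_B}\sum_i\tr(M_i)$, which equals $\tfrac1{d_B}\tr(\sum_iM_i)=\tfrac1{d_B}\tr(I_B)=1$ since $\{M_i\}$ is a POVM; the partial trace conditions $\tr_A(W_{A\bar A})=\tfrac1k\sum_i\ket{i}\bra{i}=\tfrac{I_{\bar A}}{k}$ and $\tr_{\bar B}(W_{B\bar B})=\tfrac1{d_B}\sum_iM_i=\tfrac{I_B}{d_B}$ both use only $\tr\rho_i=1$ and $\sum_iM_i=I_B$ respectively.

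The second step is to verify that the objective matches. Because $\Psi^{(k)}_{\bar A\bar B}=\sum_i\ket{i}\bra{i}\otimes\ket{i}\bra{i}$ is diagonal, the tensor $W_{A\bar A}\otimes W_{B\bar B}$ contracted against $J(\Phi)_{AB}\otimes\Psi^{(k)}_{\bar A\bar B}$ picks out exactly the ``diagonal'' terms $i=j$: one computes
\[
d_Ad_B\tr\!\Big((J(\Phi)_{AB}\otimes\Psi^{(k)}_{\bar A\bar B})(W_{A\bar A}\otimes W_{B\bar B})\Big)
= d_Ad_B\cdot\frac{1}{k}\cdot\frac{1}{d_B}\sum_{i}\tr(J(\Phi)\,\rho_i^T\otimes M_i)
= \frac{d_A}{k}\sum_i\tr(J(\Phi)\,\rho_i^T\otimes M_i),
\]
which is exactly the quantity maximized in Lemma~\ref{lemma:proba_succ_Choi}. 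This shows $\psucc(\Phi,k)\le$ the SDP value.

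For the reverse inequality, given a feasible pair $(W_{A\bar A},W_{B\bar B})$ I would decompose them along the $\bar A$ (resp. $\bar B$) register. Writing $W_{A\bar A}=\sum_{i,i'}\sigma_{ii'}\otimes\ket{i}\bra{i'}$, the constraint $\tr_A(W_{A\bar A})=I_{\bar A}/k$ forces $\tr(\sigma_{ii})=1/k$, so $\rho_i^T:=k\,\sigma_{ii}$ are density operators; similarly $W_{B\bar B}=\sum_{j,j'}N_{jj'}\otimes\ket{j}\bra{j'}$ with $\tr_{\bar B}(W_{B\bar B})=I_B/d_B$ giving $\sum_j N_{jj}=I_B/d_B$, so $M_j:=d_B\,N_{jj}$ form a POVM; positivity of the full operators gives $\sigma_{ii},N_{jj}\succeq0$. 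Since $\Psi^{(k)}$ is diagonal, only the diagonal blocks $\sigma_{ii},N_{jj}$ contribute to the objective, and the same computation as above recovers $\tfrac{d_A}{k}\sum_i\tr(J(\Phi)\rho_i^T\otimes M_i)$, which is at most $\psucc(\Phi,k)$. Taking the maximum over feasible points gives the other direction, hence equality.

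I expect the only mildly delicate point to be the bookkeeping in the reverse direction: arguing that off-diagonal blocks $\sigma_{ii'},N_{jj'}$ with $i\ne i'$ genuinely drop out of the objective (which is immediate from $\bra{i}\bra{i}\,\Psi^{(k)}\,\ket{i'}\ket{i'}=\delta_{ii'}$), and that extracting diagonal blocks from a PSD operator preserves positivity. Everything else is a direct substitution, so no substantial obstacle is anticipated.
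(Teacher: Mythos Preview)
Your proposal is correct and follows essentially the same approach as the paper: both directions hinge on the fact that $\Psi^{(k)}_{\bar A\bar B}$ is diagonal, so only the diagonal blocks of $W_{A\bar A}$ and $W_{B\bar B}$ contribute to the objective, and these blocks recover exactly a family of states (via $\tr_A W_{A\bar A}=I_{\bar A}/k$) and a POVM (via $\tr_{\bar B}W_{B\bar B}=I_B/d_B$). The paper organizes this as ``first reduce to block-diagonal feasible points, then identify with Lemma~\ref{lemma:proba_succ_Choi},'' while you interleave the block-diagonal reduction into the reverse inequality, but the content is the same.
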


\begin{proof}
The proof proceeds in two steps, first we show that we can suppose without loss of generality that the variables $W_{A\bar{A}}$ and $W_{B\bar{B}}$ are block-diagonal with respect to the systems $\bar{A}$ and $\bar{B}$. Then, we show that for such block-diagonal variables, the program of the theorem is equivalent to the optimisation problem of \cref{lemma:proba_succ_Choi}.

For the first part of the proof, recall that $k = \dim(\bar{A}) = \dim(\bar{B})$. Then, we can write
\begin{equation}\label{eq:decompo_variables}
W_{A\bar{A}} = \frac{1}{k}\sum_{i,j=1}^k W_{A}^{(i,j)} \otimes \ketbra{i}{j}_{\bar{A}},\quad W_{B\bar{B}} = \frac{1}{d_B}\sum_{i,j=1}^kW_B^{(i,j)} \otimes \ketbra{i}{j}_{\bar{B}}.
\end{equation}
From these variables we construct the block-diagonal operators $W_{A\bar{A}}^D$, $W_{B\bar{B}}^D$ obtained from $W_{A\bar{A}}$ and $W_{B\bar{B}}$ by keeping only the diagonal blocks in the decomposition of \cref{eq:decompo_variables}, i.e. write
\begin{equation}
W_{A\bar{A}}^D = \frac{1}{k}\sum_{i=1}^k W_A^{(i,i)} \otimes \proj{i}_{\bar{A}}, \quad W_{B\bar{B}}^D = \frac{1}{d_B}\sum_{i=1}^k W_B^{(i,i)} \otimes \proj{i}_{\bar{B}}.
\end{equation}
The objective function of the program of \cref{lemma:P_succ_sdp} evaluated on the two variables $W_{A\bar{A}}$, $W_{B\bar{B}}$ can be written as:
\begin{align}
 &d_Ad_B\tr\Big((J(\Phi)_{AB} \otimes \Psi^{(k)}_{\bar{A}\bar{B}})(W_{A\bar{A}}\otimes W_{B\bar{B}})\Big) \\
 &=
\frac{d_A}{k}\tr\bigg(\Big(J(\Phi)_{AB} \otimes \Big[\sum_{i=1}^k \proj{i}_{\bar{A}} \otimes \proj{i}_{\bar{B}}\Big]\Big)\Big(\Big[\sum_{i,j=1}^kW_A^{(i,j)}\otimes \ketbra{i}{j}_{\bar{A}}\Big]\otimes\Big[ \sum_{i,j=1}^kW_B^{(i,j)} \otimes \ketbra{i}{j}_{\bar{B}}\Big]\Big)\bigg) \\
&= \frac{d_A}{k}\tr\Big(\sum_{i,j,j' = 1}^k J(\Phi)_{AB}(W_A^{(i,j)}\otimes W_B^{(i,j')})\otimes \ketbra{i}{j}_{\bar{A}} \otimes \ketbra{i}{j'}_{\bar{B}}\Big) \\
&= \frac{d_A}{k}\tr\Big(\sum_{i=1}^kJ(\Phi)_{AB}(W_A^{(i,i)}\otimes W_B^{(i,i)})\Big)\label{eq:reduc_lemma_p_succ} \\
&= d_Ad_B \tr\Big(\bigl(J(\Phi)_{AB}\otimes \Psi^{(k)}_{\bar{A}\bar{B}}\bigr) \bigl(W_{A\bar{A}}^D \otimes W_{B\bar{B}}^D\bigr)\Big).
\end{align}
Furthermore, if $W_{A\bar{A}}$ and $W_{B\bar{B}}$ satisfy the constraints of the bilinear optimisation program of the lemma, then so do $W_{A\bar{A}}^D$, $W_{B\bar{B}}^D$. If $W_{A\bar{A}}$ and $W_{B\bar{B}}$ are positive semidefinite, then so are all their diagonal blocks $W_{A}^{(i,i)} \otimes \proj{i}_{\bar{A}}$ and $W_{B}^{(i,i)} \otimes \proj{i}_{\bar{B}}$, thus $W_{A\bar{A}}^D$ and $W_{B\bar{B}}^D$ are constrained to be positive semidefinite. The three other conditions involve traces or partial traces and are also easily satisfied by $W_{A\bar{A}}^D$ and $W_{B\bar{B}}^D$ as we have the equalities $\tr(W_{A\bar{A}}) = \tr(W_{A\bar{A}}^D)$, $\tr(W_{B\bar{B}}) = \tr(W_{B\bar{B}}^D)$, $\tr_{A}(W_{A\bar{A}}) = \tr_{A}(W_{A\bar{A}}^D)$ and $\tr_{\bar{B}}(W_{B\bar{B}}) = \tr_{B}(W_{B\bar{B}}^D)$. 

Therefore we can suppose without loss of generality that we optimise the SDP on the block-diagonal variables $W_{A\bar{A}}^D$, $W_{B\bar{B}}^D$ of the form given in \cref{eq:decompo_variables}.

Now, by \cref{lemma:proba_succ_Choi}, we have 
\[\psucc(\Phi,k) = \max_{\begin{aligned} &\{M_i : 1 \leq i \leq k\} \text{ POVM},\\
&\{\rho_i \in\D(\cH) : 1 \leq i \leq k\}\end{aligned}}\frac{d_A}{k}\sum_{i=1}^k \tr(J(\Phi) \rho_i^T \otimes M_i).
\]
For $\{\rho_i \in \D(\cH) : 1 \leq i \leq k\}$ any set of states and $\{M_i \succeq 0 : 1 \leq i \leq k\}$ any POVM, we associate the block-diagonal variables:
\[
W_{A\bar{A}}^D = \frac{1}{k} \sum_{i=1}^k \rho_{i,A}^T \otimes \proj{i}_{\bar{A}}, \quad W_{B\bar{B}}^D = \frac{1}{d_B}\sum_{i=1}^k M_{i,B} \otimes \proj{i}_{\bar{B}}.
\]
It is easy to check that these variables satisfy the constraints of the program of \cref{lemma:P_succ_sdp} and, furthermore,

\begin{align*}
d_Ad_B \tr\Big[(J(\Phi)_{AB} \otimes \Psi^{(k)}_{\bar{A}\bar{B}})(W_{A\bar{A}}^D\otimes W_{B\bar{B}}^D)\Big] &\overset{(a)}{=} \frac{d_A}{k}\tr\Big[\sum_{i=1}^kJ(\Phi)_{AB}(W_A^{(i,i)}\otimes W_B^{(i,i)})\Big]\\
&= \frac{d_A}{k} \tr\Big[\sum_{i=1}^k J(\Phi)_{AB} (\rho_{i,A}^T \otimes M_{i,B}) \Big],
\end{align*}
where $(a)$ follows from \cref{eq:reduc_lemma_p_succ}. Therefore, the bilinear optimisation program proposed in this lemma is an upper bound on $\psucc(\Phi,k)$. 

To show the inequality in the other direction, consider two block diagonal variables $W_{A\bar{A}}^D$, $W_{B\bar{B}}^D$ as in \cref{eq:decompo_variables} which satisfy the constraints of the optimisation program. By the previous system of equations, we have:
\[
d_Ad_B \tr\bigl[(J(\Phi)_{AB} \otimes \Psi^{(k)}_{\bar{A}\bar{B}})(W_{A\bar{A}}^D\otimes W_{B\bar{B}}^D)\bigr] = \frac{d_A}{k}\tr\Big[\sum_{i=1}^k J(\Phi)_{AB}(W_A^{(i,i)}\otimes W_B^{(i,i)})\Big].
\]
As $W_{A\bar{A}}^D \succeq 0$ and $W_{B\bar{B}}^D \succeq 0$, we have that each of their diagonal blocks $W_{A}^{(i,i)}$ and $W_{B}^{(i,i)}$ are positive semidefinite. Then, we have
\begin{align*}
\frac{I_{\bar{A}}}{k} = \tr_{A}(W_{A\bar{A}}^D)
= \frac{1}{k}\sum_{i=1}^k \tr(W_A^{(i,i)})\otimes \proj{i}_{\bar{A}}.
\end{align*}
Therefore, for each $i \in [k]$, we have $\tr(W_A^{(i,i)}) = \bra{i}I_{\bar{A}}\ket{i} = 1$, so that all operators $W_A^{(i,i)}$ are states (or, equivalently, transposes of states). Analogously 
\begin{align*}
\frac{I_B}{d_B} = \tr_{\bar{B}}(W_{B\bar{B}}^D) = \frac{1}{d_B}\sum_{i=1}^k W_{B}^{(i,i)} \tr(\proj{i}_{\bar{B}})= \frac{1}{d_B}\sum_{i=1}^k W_{B}^{(i,i)}.
\end{align*}
So that 
$\{W_{B}^{(i,i)} : 1 \leq i \leq k\}$ forms a POVM.
\end{proof}

Now, from the bilinear optimisation program of \cref{lemma:P_succ_sdp}, we can construct a hierarchy of semidefinite programs of the form of Eq. 5 in \cite{Berta.2021} that will automatically converge to $\psucc(\Phi,k)$ by the results of \cite[Theorem 3.1]{Berta.2021}. This hierarchy can be written as
\begin{align}
\widetilde{\mathrm{SDP}}_m(\Phi,k) = d_Ad_B&\max \tr\bigl[ (J(\Phi)_{AB} \otimes \Psi^{(k)}_{\bar{A}\bar{B}}) (W_{A\bar{A}B\bar{B}}) \bigr], \\
\textup{s.t.} &\quad W_{A\bar{A}(B\bar{B})_1^m} \succeq 0,~\tr(W_{A\bar{A}(B\bar{B})_1^m}) = 1, \\
&\quad \forall \pi \in \mathfrak{S}_n,~ (\text{Id}_{A\bar{A}}\otimes \mathcal{U}_{(B\bar{B})_1^m}^{\pi})(W_{A\bar{A}(B\bar{B})_1^m}) = W_{A\bar{A}(B\bar{B})_1^m}, \\
&\quad \tr_{A}(W_{A\bar{A}(B\bar{B})_1^m}) = \frac{I_{\bar{A}}}{k} \otimes W_{(B\bar{B})_1^m}, \\
&\quad \tr_{\bar{B}_m}(W_{(B\bar{B})_1^m}) = W_{(B\bar{B})_1^{m-1}} \otimes \frac{I_{B_m}}{d_B}.
\end{align}
It is easy to see that
\begin{align*}
    \mathrm{SDP}_m(\Phi,k) = \widetilde{\mathrm{SDP}}_m(\Phi,k).
\end{align*}
In fact, given a feasible solution $W_{AB_1^n}^{(i,j_1^m)}$ for $\mathrm{SDP}_m(\Phi,k)$, we define 
\[
W_{A\bar{A} (B \bar{B})_{1}^m} = \frac{1}{k d_B^m} \sum_{i=1}^k \sum_{j_1^m \in [k]^m} \proj{i}_{\bar{A}} \otimes \proj{j_1^m}_{\bar{B}_1^m} \otimes W_{AB_1^n}^{(i,j_1^m)}.
\]
We then check that it achieves the same value for the objective function and it satisfies the constraints. In the other direction, given a feasible solution $W_{A\bar{A} (B \bar{B})_{1}^m}$ for $\widetilde{\mathrm{SDP}}_m(\Phi,k)$, we define 
\[
W_{AB_1^m}^{(i,j_1^m)} = \bra{i}_{\bar{A}} \bra{j_1^m}_{\bar{B}_1^m} W_{A\bar{A} (B \bar{B})_{1}^m} \ket{i}_{\bar{A}} \ket{j_1^m}_{\bar{B}_1^m},
\]
and also check that it is feasible and achieves the same value for the objective function.  \cref{thm:convergence_sdp} then follows.

\subsection{Numerical illustration}

We now briefly illustrate this hierarchy on some simple channels. 
We tested the first level of the hierarchy of \cref{thm:sdp_POVM} 
on the channel $\Phi$ of \cref{prop:counter_example_Hirche}, i.e., the one with vanishing quantum Doeblin coefficients and obtained a bound of

\begin{equation}\label{eq:upper_bound_contraction_coef_counter_example}
    \eta_{\tr}(\Phi) \leq \frac{1}{2} < 1.
\end{equation}
Thus, this hierarchy already detects at the first level that this channel has a contraction coefficient $<1$.

We now consider (multiple copies of) amplitude damping channels $\mathcal{A}_{p,\eta}$ as in \cite{Hirche2024.1}. Amplitude damping channels are defined via their Kraus operators:
\begin{align*}
    A_1 &= \sqrt{p} \begin{pmatrix} 1 & 0 \\ 0 & \sqrt{\eta} \end{pmatrix}& A_2&=\sqrt{p}\begin{pmatrix}
        0 & \sqrt{1-\eta} \\ 0 & 0
    \end{pmatrix} \nonumber\\
    A_3&=\sqrt{1-p}\begin{pmatrix} \sqrt{\eta} & 0 \\ 0 & 1\end{pmatrix}& A_4&=\sqrt{1-p}\begin{pmatrix} 0 & 0 \\ \sqrt{1-\eta} & 0\end{pmatrix},
\end{align*}
where $p, \eta \in [0,1]$. We obtain the upper bounds given in Fig~\ref{fig:amplitude_damping}. Note that this improves on the bounds obtained in~\cite[Fig. 2]{Hirche2024.1}. In fact, it is simple to see that
\begin{equation}
\eta_{\tr}(\mathcal{A}_{p,\eta}) \geq \frac{1}{2}\|\mathcal{A}_{p,\eta}(\proj{+}) - \mathcal{A}_{p,\eta}(\proj{-})\|_1 = \sqrt{\eta},
\end{equation}
with $\ket{+} = \frac{1}{\sqrt{2}}(\ket{0} + \ket{1})$, $\ket{-} = \frac{1}{\sqrt{2}}(\ket{0}-\ket{1})$ so our upper bounds for $\eta_{\tr}(\mathcal{A}_{p,\eta})$ are numerically tight. Furthermore, we tested the two first levels of our hierarchy on two copies of the amplitude damping channel and obtained a tighter upper bound with the second level.

We then also tested the first level of the hierarchy in \cref{thm:sdp_POVM} on (multiple copies of) qubit depolarizing channels $\D_p$, for $p \in [0,1]$ defined by:
\begin{equation}\label{eq:depolarizing_channel}
\D_p(\cdot) = (1-p) \text{Id}_2(\cdot) + \tr(\cdot)p\frac{I_2}{2}.
\end{equation}
We obtain numerically tight upper bounds for up to 3 copies of $\D_p$ as shown in Fig~\ref{fig:depolarizing}. Note that by Equation 3.31 in \cite{Hirche2024.1}, $\eta_{\tr}(\D_p) = 1-p = 1-\alpha(\D_p)$.

The simulations were implemented in Julia and using the solver SCS~\cite{scs}.

\begin{figure}[h]
    \centering
    \definecolor{mycolor1}{rgb}{1,0,0}%
	\definecolor{mycolor2}{rgb}{0.00000,0.44700,0.74100}%
    \definecolor{mycolor3}{rgb}{0.92900,0.69400,0.12500}%
	\definecolor{mycolor4}{rgb}{0.85000,0.32500,0.09800}%
	\definecolor{mycolor5}{rgb}{0.16666,0.66666,0.16666}%
    \definecolor{mycolor6}{rgb}{0.55555,0.22222,0.22222}%
    \begin{minipage}{0.45\textwidth}
    	\begin{tikzpicture}
    		\begin{axis}[%
    			width=0.85\linewidth,
    			height=0.7\linewidth,
    			scale only axis,
    			xmin=0.01,
    			xmax=1.0,
    			ymin=0,
    			ymax=1.1,
    			grid=major,
    			xlabel={Parameter $p$},
    			ylabel={Upper bound on $\eta_{\tr}(\mathcal{A}_{p,0.5})$},
    			xtick={0.0, 0.1, 0.2, 0.3, 0.4, 0.5, 0.6, 0.7, 0.8, 0.9, 1.0},
    			ytick={0.0, 0.1, 0.2, 0.3, 0.4, 0.5, 0.6, 0.7, 0.8, 0.9, 1.0, 1.1},
    			axis background/.style={fill=white},
    			legend style={at={(0.65,0.6)},legend cell align=left, align=left, draw=white!15!black}
    			]
                \addplot[color=mycolor6, line width=1.2pt] table[col sep=comma] {data/Doeblin_AD_0_5.dat};
                \addlegendentry{$1-\alpha(\A_{p,0.5})$}
                \addplot[color=blue, line width=1.2pt] table[col sep=comma] {data/AD_0_5.dat};
                \addlegendentry{$\mathrm{SDP}_{1}(\A_{p,0.5})$}
    		\end{axis}
    	\end{tikzpicture}%
    \end{minipage}
    \hspace{2em}
    \begin{minipage}{0.45\textwidth}
        \begin{tikzpicture}
            \begin{axis}[%
                width=0.85\linewidth,
                height=0.7\linewidth,
    			scale only axis,
    			xmin=0.01,
    			xmax=1.0,
    			ymin=0,
    			ymax=1.1,
    			grid=major,
    			xlabel={Parameter $p$},
    			ylabel={Upper bound on $\eta_{\tr}(\mathcal{A}^{\otimes 2}_{p,0.5})$},
    			xtick={0.0, 0.1, 0.2, 0.3, 0.4, 0.5, 0.6, 0.7, 0.8, 0.9, 1.0},
    			ytick={0.0, 0.1, 0.2, 0.3, 0.4, 0.5, 0.6, 0.7, 0.8, 0.9, 1.0, 1.1},
    			axis background/.style={fill=white},
    			legend style={at={(0.65,0.6)},legend cell align=left, align=left, draw=white!15!black}
    			]
            \addplot[color=mycolor6, line width=1.2pt] table[col sep=comma] {data/Doeblin_AD_square_0_5.dat};
            \addlegendentry{$1-\alpha(\A^{\otimes 2}_{p,0.5})$}
            \addplot[color=blue, line width=1.2pt] table[col sep=comma] {data/AD_square_0_5_ppt.dat};
            \addlegendentry{$\mathrm{SDP}^{\mathrm{PPT}}_{1}(\A_{p,0.5}^{\otimes 2})$}   
            \addplot[color=blue, line width=1.2pt, densely dashdotted] table[col sep=comma] {data/AD_square_0_5_PPT_level2.dat};
            \addlegendentry{$\mathrm{SDP}^{\mathrm{PPT}}_2(\A_{p,0.5}^{\otimes 2})$}
            \end{axis}
        \end{tikzpicture}
    \end{minipage}
    \caption{Upper bounds on the contraction coefficient of (multiple copies of) amplitude damping channels obtained via the first and second levels of the hierarchy proposed in~\cref{thm:sdp_POVM}.}
	\label{fig:amplitude_damping}
\end{figure}
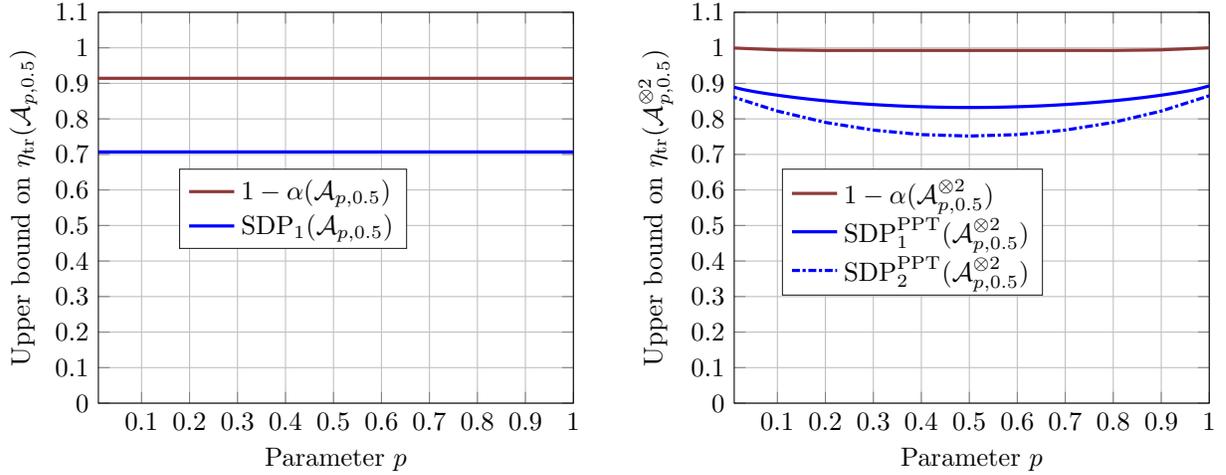

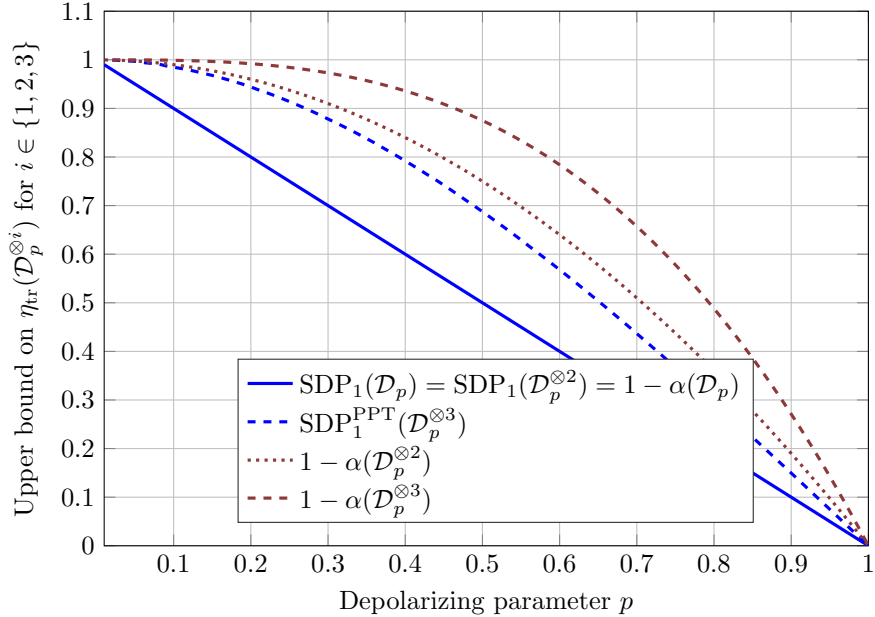
\begin{figure}[h!]
    \centering
  \definecolor{mycolor1}{rgb}{1,0,0}%
	\definecolor{mycolor2}{rgb}{0.00000,0.44700,0.74100}%
  \definecolor{mycolor3}{rgb}{0.92900,0.69400,0.12500}%
	\definecolor{mycolor4}{rgb}{0.85000,0.32500,0.09800}%
	\definecolor{mycolor5}{rgb}{0.16666,0.66666,0.16666}%
  \definecolor{mycolor6}{rgb}{0.55555,0.22222,0.22222}%

	\begin{tikzpicture}
		
		\begin{axis}[%
			width=4in,
			height=2.8in,
			scale only axis,
			xmin=0.01,
			xmax=1.0,
			ymin=0,
			ymax=1.1,
			grid=major,
			xlabel={Depolarizing parameter $p$},
			ylabel={Upper bound on $\eta_{\tr}(\D_p^{\otimes i})$ for $i \in \{1,2,3\}$},
			xtick={0.0, 0.1, 0.2, 0.3, 0.4, 0.5, 0.6, 0.7, 0.8, 0.9, 1.0},
			ytick={0.0, 0.1, 0.2, 0.3, 0.4, 0.5, 0.6, 0.7, 0.8, 0.9, 1.0, 1.1},
			axis background/.style={fill=white},
			legend style={at={(0.85,0.35)},legend cell align=left, align=left, draw=white!15!black}
			]
            \addplot[color=blue, line width=1.2pt] table[col sep=comma] {data/Depol.dat};
            \addlegendentry{$\textrm{SDP}_1(\D_p) = \textrm{SDP}_1(\D_p^{\otimes 2}) = 1-\alpha(\D_p)$}
            \addplot[color = blue, line width=1.2pt, dashed] table[col sep=comma] {data/Depol_cube_PPT_100.dat};
            \addlegendentry{$\textrm{SDP}_1^{\mathrm{PPT}}(\D_p^{\otimes 3})$}
            \addplot[color=mycolor6, line width=1.2pt, dotted] table[col sep=comma] {data/Doeblin_Depol_square.dat};
            \addlegendentry{$1-\alpha(\D_p^{\otimes 2})$}
            \addplot[color=mycolor6,line width=1.2pt, dashed] table[col sep=comma] {data/Doeblin_depol_cube.dat};
            \addlegendentry{$1-\alpha(\D_p^{\otimes 3})$}
        \end{axis}
    \end{tikzpicture}
    \caption{Upper bounds on the contraction coefficient of (multiple copies of) depolarizing channels obtained via the first level of the hierarchy proposed in~\cref{thm:sdp_POVM}. By choosing states of the form $\ket{0}^{\otimes i}$ and $\ket{1}^{\otimes i}$, it is simple to see that the bounds given by our SDP are numerically tight.
    }
    \label{fig:depolarizing}
\end{figure}

\section*{Acknowledgements}
We thank Christoph Hirche for discussions about~\cite{Hirche2024.1} and for helpful comments on the manuscript. We also thank Arthur Mehta for discussions about the independence number of quantum confusability graphs and for feedback on the manuscript. OF would also like to thank Mario Berta for numerous discussions over the years about optimal channel coding.  This work is funded by the European Research Council (ERC Grant AlgoQIP, Agreement No. 851716).

\bibliographystyle{hplain}
    \bibliography{bibliography.bib}

\end{document}